\begin{document}
\mainmatter

\title{A Social Choice Analysis of Optimism's Retroactive Project Funding}
\titlerunning{Optimism's Retroactive Project Funding} 

\author{
    Eyal Briman \inst{1} \and
    Nimrod Talmon \inst{1} \and
    Angela Kreitenweis \inst{2} \and
    Muhammad Idrees \inst{3}
}
\authorrunning{Briman et al.}

\institute{
    Ben-Gurion University, GovXS, Israel\\
    \email{\{briman@post.bgu.ac.il, nimrodtalmon77@gmail.com\}} \and
    GovXS, Germany\\
    \email{angela@tokenengineering.net} \and
GovXS, Pakistan\\
    \email{idrees535@gmail.com}
}

\maketitle

\begin{abstract}
The Optimism Retroactive Project Funding (RetroPGF) is a key initiative within the blockchain ecosystem that retroactively rewards projects deemed valuable to the Ethereum and Optimism communities. Managed by the Optimism Collective, a decentralized autonomous organization (DAO), RetroPGF represents a large-scale experiment in decentralized governance. Funding rewards are distributed in OP tokens, the native digital currency of the ecosystem. As of this writing, four funding rounds have been completed, collectively allocating over \$100M, with an additional \$1.3B reserved for future rounds. However, we identify significant shortcomings in the current allocation system, underscoring the need for improved governance mechanisms given the scale of funds involved. 

Leveraging computational social choice techniques and insights from multiagent systems, we propose improvements to the voting process by recommending the adoption of a utilitarian moving phantoms mechanism~\cite{freeman2019truthful}. This mechanism was originally introduced by Freeman et al. (2019), is designed to enhance social welfare (using the $\ell_1$ norm) while satisfying strategyproofness—two key properties aligned with the application's governance requirements. Our analysis provides a formal framework for designing improved funding mechanisms for DAOs, contributing to the broader discourse on decentralized governance and public goods allocation.

\keywords{DAOs, Computational Social Choice, Voting Mechanisms, Simulations}
\end{abstract}

\section{Introduction}

The Optimism Retroactive Public Goods Funding (RetroPGF) initiative, managed by the Optimism Collective—a decentralized autonomous organization (DAO)—is a prominent funding mechanism within the blockchain ecosystem.\footnote{\url{https://retrofunding.optimism.io/}} 
It operates within a broader economic framework designed to incentivize innovation and retroactively reward projects that have demonstrably contributed to the Ethereum and Optimism ecosystems. The initiative has undergone four funding rounds~\cite{retro1,retro2,retro3,retro4}, with this study conducted in the context of the ongoing fifth round~\cite{retro5}.
Unlike forward-looking grant systems, RetroPGF distributes funds \textit{ex post}, rewarding projects based on their verified impact rather than projected outcomes. This approach, increasingly adopted in decentralized ecosystems, addresses challenges in evaluating contributions in real time. 

To date, RetroPGF has distributed over \$100M, with \$1.3B earmarked for future rounds.\footnote{\url{https://optimism.mirror.xyz/nz5II2tucf3k8tJ76O6HWwvidLB6TLQXszmMnlnhxWU}} Given the scale of resources involved, the governance structure of the funding process plays a crucial role in ensuring fair and effective allocation. However, we identify several limitations in the election system currently used, motivating the need for improved voting mechanisms.

\paragraph{Research Approach.}
In this work, we employ computational social choice to analyze and enhance the RetroPGF voting system. Specifically, based on formalized governance requirements derived from public discussions and direct engagement with the Optimism Collective, we propose a theoretically grounded and empirically tested improvement. Our analysis conceptualizes RetroPGF as a majority-based, ground-truth revealing, token-based allocation process, incorporating domain-specific constraints from blockchain governance. 

To address observed inefficiencies, we propose adopting the utilitarian moving phantoms mechanism~\cite{freeman2019truthful}, which balances strategyproofness with social welfare maximization under the $\ell_1$ norm. Our study consists of two primary components: 
\begin{itemize}
    \item A theoretical analysis of both the voting rules used in past RetroPGF rounds and the proposed alternative voting rules, along with their formal properties.
    \item A simulation-based evaluation of previous rules and an alternative rule to assess their practical performance under realistic voter behavior models.
\end{itemize}

\subsection{Our Contributions}
This paper advances the study of decentralized funding mechanisms, with a focus on the RetroPGF process in Optimism. Our key contributions are as follows:

\begin{itemize}
    \item \textbf{Formalization of Governance Requirements:} We provide the first formal mapping of Optimism's ideological and practical desiderata into social choice criteria, encompassing both qualitative and quantitative dimensions.
    \item \textbf{Comparative Analysis of Voting Mechanisms:} We conduct a rigorous theoretical and empirical evaluation of existing and proposed voting rules within the RetroPGF framework.
    \item \textbf{Proposal for an Improved Voting Rule:} We introduce and analyze the moving phantoms mechanism, demonstrating its advantages in strategyproofness, fairness, and social welfare maximization.
\end{itemize}

\subsection{Paper Structure}

The paper is structured as follows:
We begin with a discussion on the informal requirements by Optimism, given by their forums and from personal communication (Section~\ref{section
specific}).
We then discuss related work (Section~\ref{section:related work}).
We continue to describe a formal model of RetroPGF (Section~\ref{section:formal_model}).
Then, we discuss voting rules used for RetroPGF (Section~\ref{section:voting_rules}).
Concrete evaluation metrics are discussed in Section~\ref{section:metrics}.
We then report on our theoretical analysis (Section~\ref{theoretical}) and on our simulation-based analysis (Section~\ref{section:experimental} and Section~\ref{section:results}).
We conclude with a discussion (Section~\ref{section:outlook}).

\section{Application Requirements}\label{section
specific}

Based on discussions from Optimism’s public forums and direct communication with key stakeholders, we identified the specific application requirements for the RetroPGF process. The primary goal is to compare and evaluate the behavior of different voting rules in the context of these requirements. Our analysis is framed around key dimensions outlined in Optimism's governance documentation\footnote{\url{https://gov.optimism.io/t/the-future-of-optimism-governance/6471}} and tailored to the unique characteristics of this decentralized funding mechanism.

We note that these requirements reflect a blend of common-value elements---where all voters aim to reward impactful projects---and private-value elements, stemming from heterogeneity in voters’ domain expertise, interpretation of “impact,” and strategic considerations. As a benchmark, we highlight that if all badgeholders had fully aligned beliefs and no individual bias, a symmetric preference profile would emerge, resulting in unanimous allocations. We return to this benchmark in later discussion.

\begin{itemize}

\item \textbf{Badgeholder (voter) Responsibility}: A key feature of the RetroPGF process is the delegation of fund allocation to a small group of certified badgeholders (voters). These individuals possess a reputation within the ecosystem and are selected by the collective to distribute tokens to projects based on assessed needs. Badgeholders are expected to represent the broader interests of the community while maintaining impartiality through a strict code of conduct, including conflict of interest protocols. Each funding round is targeted at a specific context. For example, in Round 6, badgeholders evaluated governance-oriented projects such as Delegation Analytics and Agora, while in Round 4 the focus included infrastructure and public goods more broadly.

\item \textbf{Majoritarian Decision-Making}: The RetroPGF process emphasizes a majority-based voting approach. The goal is to capture the ``ground truth'' of which projects have contributed the most value to the ecosystem. Badgeholders play a pivotal role in aligning their token distribution decisions with collective goals. For instance, projects that receive broad support across badgeholders are presumed to reflect shared community values.

\item \textbf{Iterative Decentralization}: A core principle of the system is its ability to adapt and improve through iteration. As the RetroPGF process evolves, decentralization becomes more prominent. This iterative learning mechanism resembles adaptive systems in social choice theory, where feedback loops (such as post-round reports, retrospective debates, and proposal design changes) help refine collective decisions over time.

\item \textbf{Equity and Balance}: Optimism seeks to balance governance power, ensuring that financial influence does not overshadow community input. In Retroactive Project Funding, one person---or more precisely, a pseudonymous wallet holding a badgeholder token---equals one vote, with equal voting power for all voters. This design aligns with social choice concepts of fairness, aiming to avoid plutocratic control and to promote outcomes that serve the ecosystem as a whole. For example, the cap on the number of badgeholders per funding round reinforces equal voice in decision-making.

\item \textbf{Impact = Profit}: The RetroPGF mechanism rewards projects based on their demonstrated contributions to the community. The guiding principle is that “impact should be rewarded,” and therefore projects creating ecosystem value should be financially supported. This principle parallels utilitarian and proportional notions in social choice theory. For instance, in Round 4, large allocations were awarded to projects like L2Beat and Gitcoin, reflecting their visible impact on infrastructure and community support.

\end{itemize}

These application requirements form the basis for our analysis of voting rules. In particular, we focus on how well various mechanisms perform with respect to efficiency, strategic resistance, and fairness under realistic voter models. While a symmetric preference scenario may lead to agreement on allocations, we argue that the presence of individual biases, reputational considerations, and information asymmetries necessitates a more nuanced modeling of badgeholder behavior.

\section{Related Work}\label{section:related work}

Research at the intersection of public good funding, computational social choice, and blockchain governance has developed significantly over the past decade. Early models such as participatory budgeting~\cite{aziz2021participatory} and quadratic funding~\cite{pasquini2020quadratic} have been adapted to decentralized contexts, raising new challenges in designing fair, efficient, and strategy-resistant allocation mechanisms. More recently, retroactive funding systems have emerged as a novel paradigm, where reward is based on verified impact rather than anticipated outcomes. This shift demands mechanisms that can robustly aggregate diverse preferences and uncover ground truth contributions---a setting where social choice theory, particularly in its algorithmic and strategic dimensions, becomes highly relevan~\cite{cohen1986epistemic}.

\paragraph{Public Good Funding}
Public good funding distributes resources to benefit communities, with decentralized models such as quadratic funding~\cite{buterin2019flexible,georgescu2024fixed,dimitri2022quadratic} amplifying smaller contributions to ensure fairness. Participatory budgeting~\cite{aziz2021participatory,fairstein2023participatory} allows voters to decide on resource allocation prior to implementation. RetroPGF, as used in Optimism, instead rewards projects based on their \emph{proven} impact, representing a shift from ex-ante to ex-post evaluation.

\paragraph{RetroPGF and Portioning}
Optimism’s RetroPGF evaluates completed projects to reward those with the most community value~\cite{zichichi2019likestarter}. This approach parallels the setting of \emph{portioning}~\cite{airiau2023portioning}, where resources are divided after evaluating outcomes. In both settings, the role of the voter is evaluative rather than predictive, and the aggregation mechanism must navigate heterogeneous signals of impact.

\paragraph{Optimism's RetroPGF}
Operating within a DAO framework, Optimism’s RetroPGF integrates blockchain transparency and social choice theory~\cite{talmon2023social}, leveraging majoritarian decision-making to align token distribution with perceived community value. The design space intersects with work on blockchain governance~\cite{jones2019blockchain} and computational social choice in digital settings~\cite{grossi2022social}, raising questions about fairness, robustness, and decentralization in large-scale decision-making.

\paragraph{Uncovering the Ground Truth}
A central challenge in RetroPGF is to accurately aggregate badgeholder votes to reveal the true impact of projects. Prior work in elicitation and aggregation under uncertainty, such as Bayesian Truth Serum~\cite{prelec2004bayesian,prelec2017solution} and noisy preference models~\cite{caragiannis2016noisy}, offer tools for designing mechanisms that align collective outcomes with objective contribution measures. These methods are particularly relevant for blockchain-based voting systems, where preference aggregation must also resist manipulation while remaining transparent.

\section{Formal Model}\label{section:formal_model}

We formalize the token-based Retroactive Project Funding process as a voting-based budget allocation mechanism. Let \( N = \{1, \dots, n\} \) be the set of voters and \( P = \{1, \dots, m\} \) the set of projects. The total available budget is \( B \), which we normalize to \( 1 \), ensuring that all allocations are expressed as fractions of the total budget. Each voter is allocated \( c \) tokens, with the constraint that the total number of tokens does not exceed the budget, i.e., \( c \cdot n \leq B \).

A feasible allocation is a vector \( \mathbf{a} = (a_1, \dots, a_m) \) where:
\[
a_p \geq 0, \quad \forall p \in P, \quad \text{and} \quad \sum_{p \in P} a_p \leq B.
\]

Each voter \( i \) submits a cumulative ballot \( X_i = (x_{i,1}, \dots, x_{i,m}) \), where \( x_{i,p} \geq 0 \) represents the fraction of their tokens allocated to project \( p \), subject to:
\[
\sum_{p \in P} x_{i,p} \leq c.
\]

Since voters benefit from fully utilizing their allocated tokens, we assume:
\[
\sum_{p \in P} x_{i,p} = c, \quad \forall i \in N.
\]

The final allocation \( \mathbf{a} \) is determined by a voting rule, formally defined as a function:
\[
f: \left( X_1, \dots, X_n \right) \to \mathbf{a} = (a_1, \dots, a_m),
\]
which satisfies the budget constraint:
\[
\sum_{p \in P} a_p \leq B.
\]

\paragraph{Preference Structure.} 
Although the RetroPGF process aims to reward projects according to their objective impact, we assume that badgeholders may exhibit heterogeneous preferences, modeled via different cumulative ballots. This heterogeneity reflects several practical considerations:
\begin{itemize}
    \item \emph{Epistemic diversity:} Voters possess varying knowledge about the ecosystem and specific projects, leading to different interpretations of what constitutes impact.
    \item \emph{Domain affinity:} Some voters may prioritize projects aligned with their expertise or values (e.g., governance vs. infrastructure).
    \item \emph{Reputational strategy:} Badgeholders might consider not only what is best for the ecosystem, but also how their votes will be perceived within the community.
\end{itemize}

As a theoretical benchmark, we consider the \emph{symmetric preference case}, in which all badgeholders submit the same ballot, i.e., \( X_1 = \dots = X_n \). This case serves as a useful baseline for evaluating how different rules behave under full alignment. However, since empirical data from past rounds (e.g., Round 4) shows clear variation in allocations, we argue that modeling badgeholder preferences as heterogeneous provides a more realistic and informative foundation for mechanism evaluation.

\section{Voting Rules}\label{section:voting_rules}

This section reviews the voting rules used in Optimism's RetroPGF rounds and introduces improved mechanisms.

\paragraph{Quadratic Voting (Round 1).} 
Quadratic Voting (QV)~\cite{lalley2018quadratic} allows voters to allocate \( x_{i,p} \) tokens to project \( p \), with the effective vote weight given by:
\[
a_p = \frac{\sum_{i \in N} \sqrt{x_{i,p}}}{\sum_{i \in N} \sum_{p' \in P} \sqrt{x_{i,p'}}}.
\]

\paragraph{Mean Rule (Round 2).} 
Allocations are proportional to the total tokens received:
\[
a_p = \frac{\sum_{i \in N} x_{i,p}}{\sum_{i \in N} \sum_{p' \in P} x_{i,p'}}.
\]

\paragraph{Quorum Median Rule (Round 3).} 
Allocations are based on the median vote with quorum constraints \( q_1 \) (minimum tokens) and \( q_2 \) (minimum voters):
\[
b_p = \frac{\text{median} \{ x_{i,p} \mid x_{i,p} > 0 \}}{\sum_{p' \in P} \text{median} \{ x_{i,p'} \mid x_{i,p'} > 0 \}}.
\]
If \( b_p \geq q_1 \) and at least \( q_2 \) voters contribute, then \( d_p = b_p \); otherwise, \( d_p = 0 \). The final allocation is:
\[
a_p = \frac{d_p}{\sum_{p'\in P} d_{p'}}.
\]

\paragraph{Capped Median Rule (Round 4).} 
A variant of the Quorum Median Rule, with an upper bound \( K_1 \) and redistribution of excess funds:
\[
c_p = \frac{\text{median} \{ x_{i,p} \}}{\sum_{p' \in P} \text{median} \{ x_{i,p'} \}} \cdot B, \quad
d_p = \min(c_p, K_1) + \frac{\sum_{j} \max(0, c_j - K_1) \cdot c_p}{\sum_{j} c_j}.
\]
Projects below \( K_2 \) are eliminated, and their allocations redistributed:
\[
b_p = 
\begin{cases}
0, & d_p < K_2, \\
d_p + \frac{\sum_{p'} d_{p'} < K_2}{\sum_{p'} d_{p'} \geq K_2} \cdot d_p, & \text{otherwise}.
\end{cases}
\]
Final normalization ensures:
\[
a_p = \frac{b_p}{\sum_{p'} b_{p'}}.
\]

\paragraph{Midpoint Rule.} 
The allocation minimizes the \( \ell_1 \) distance from all voter allocations~\cite{nehring2008allocating}:
\[
\mathbf{x} = X_{i^*}, \quad i^* = \arg\min_{i \in N} \sum_{j=1}^{n} \| X_i - X_j \|_1.
\]
\paragraph{Moving Phantoms.} 
The Moving Phantoms mechanism~\cite{freeman2019truthful} addresses budget inconsistencies in median-based voting by introducing phantom voters whose influence dynamically adjusts allocations while preserving strategyproofness. The final allocation is determined using a median-based adjustment:
\[
\mathcal{A}^F(a_p) = \text{med}(f_0(t^*), \dots, f_n(t^*), x_{1,p}, \dots, x_{n,p}),
\]
where \( t^* \) is chosen such that:
\[
f_0(t^*)+\sum_{i=1}^n  f_i(t^*)+\sum_{i=1}^n\sum_{p=1}^m x_{i,p}=1.
\]
This constraint ensures that the total allocation remains within budget. The optimal value of \( t^* \) can be computed efficiently using binary search.

Two variants of the Moving Phantoms mechanism are considered:

\paragraph{Independent Markets Algorithm.} 
In this variant, phantom influence follows a linear distribution:
\[
f_k(t) = \min\{ t(n - k), 1 \}.
\]

\paragraph{Majoritarian Phantoms Algorithm.} 
This variant prioritizes majority preferences by defining:
\[
f_k(t) = 
\begin{cases} 
0, & 0 \leq t \leq \frac{k}{n+1}, \\
t(n+1) - k, & \frac{k}{n+1} < t \leq \frac{k+1}{n+1}, \\
1, & \frac{k+1}{n+1} \leq t \leq 1.
\end{cases}
\].

\section{Properties and Metrics}\label{section:metrics}

To evaluate voting rules in the Retroactive Project Funding framework, we consider key theoretical properties and performance metrics. These criteria incorporate classical social choice principles~\cite{brandt2016handbook} and RetroPGF-specific requirements. To define these metrics and properties, we first establish the agent utility function based on the $\ell_1$ distance, as prior work commonly assumes that agents assess budget distributions by their 
$\ell_1$ distance from their ideal allocation~\cite{freeman2019truthful,de2024truthful}.

\paragraph{Resistance to Manipulation.}  
A voting rule should be robust against strategic behavior, including bribery and control. The \textit{cost of bribery} is the minimum expenditure required to increase project \( p \)'s allocation by \( X \) tokens:
\[
b = \sum_{i=1}^{n} \sum_{q \in P} |x_{i,q} - x'_{i,q}|,
\]
where \(V = \{X_1, \dots, X_n\}\) and \(V' = \{X_1', \dots, X_n'\}\) are the original and modified vote profiles, and \( \mathbf{x} \) is the outcome. The \textit{cost of control} measures the minimal number of voters that must be added or removed to increase \( p \)'s allocation by \( r \). A rule is \textit{robust} if the expected deviation in outcomes due to small vote perturbations remains bounded:
\[
\mathbb{E}[d(\mathbf{x}, \mathbf{x}')],
\]
where \( d \) is a distance metric such as \( \ell_1 \) or \( \ell_2 \). The \textit{Voter Extractable Value (VEV)} quantifies the maximum allocation shift a single voter can induce:
\[
\text{VEV} = \max_{i \in [n], k \in [m]} d(\mathbf{x}, \mathbf{x}^{(i,k)}),
\]
where \( \mathbf{x}^{(i,k)} \) is the outcome after voter \( i \) reallocates \( r\% \) of their vote to project \( k \).

\paragraph{Incentive Compatibility.}  
A voting rule is \textit{strategyproof} if truthful reporting is a weakly dominant strategy for all voters. Formally, for every \( i \in N \), let \( u_i(X_i, X_{-i}) \) be their utility when reporting truthfully and \( u_i(X_i', X_{-i}) \) their utility when submitting a strategic misreport \( X_i' \). The rule is strategyproof if:
\[
u_i(X_i, X_{-i}) \geq u_i(X_i', X_{-i}) \quad \forall i \in N, \quad \forall X_i' \neq X_i, \quad \forall X_{-i}.
\]

\paragraph{Outcome Quality.}  
A voting rule satisfies \textit{Pareto efficiency} if there exists no alternative allocation \( \mathbf{x}' \) that strictly improves the utility of at least one voter without making any other voter worse off. Formally, an allocation \( \mathbf{x} = (a_1, \dots, a_m) \) is Pareto efficient if:
\[
\forall \mathbf{x}' \in \mathbb{R}^m_{\geq 0}, \quad \exists i \in N \text{ such that } U_i(\mathbf{x}') > U_i(\mathbf{x}) \Rightarrow \exists j \in N \text{ such that } U_j(\mathbf{x}') < U_j(\mathbf{x}).
\]
This ensures that no allocation \( \mathbf{x}' \) is strictly better for all voters simultaneously.

A voting rule satisfies \textit{monotonicity} if increasing support for a project cannot decrease its allocation. That is, for any two voter profiles \( V \) and \( V' \), where \( V' \) is obtained by increasing the support for project \( p \) for all voters:
\[
f(V', p) \geq f(V, p) \quad \text{for } V' \text{ where } x'_{i,p} \geq x_{i,p} \text{ for all } i.
\]

The rule satisfies \textit{reinforcement} if combining two disjoint voter groups that yield the same outcome separately does not change the result:
\[
f(V_1 \cup V_2) = f(V_1) = f(V_2).
\]

\paragraph{Fairness and Representation.}  
A rule should balance majority rule with minority protection. 

\textit{Utilitarian social welfare} measures how well the outcome reflects voter preferences:
\[
W = \frac{1}{n} \sum_{i=1}^{n} \|\mathbf{x} - X_i\|_1.
\]

\textit{Proportionality} ensures that any subset \( S \subseteq N \) with \( |S| \geq n/k \) that exclusively supports a single project \( p \) guarantees \( p \) at least \( B/k \) of the total budget:
\[
\forall S \subseteq N, \quad |S| \geq \frac{n}{k}, \quad \forall i \in S, \quad x_{i,p} = c, \quad x_{i,p'} = 0 \text{ for } p' \neq p \Rightarrow a_p \geq \frac{B}{k}.
\]

Allocation inequality is quantified using the \textit{Gini index}, which measures the disparity in allocated resources:
\[
G = \frac{\sum_{i=1}^m \sum_{j=1}^m |a_i - a_j|}{2m \sum_{i=1}^m a_i}.
\]

\paragraph{Participation and Ground-Truth Alignment.}  
A rule satisfies \textit{participation} if no voter is worse off by voting, meaning that submitting a ballot cannot result in a lower utility than abstaining. Formally, for every voter \( i \), let \( \mathbf{x} = (a_1, \dots, a_m) \) denote the allocation when \( i \) does not participate, and let \( \mathbf{x}' = (a_1', \dots, a_m') \) be the allocation when \( i \) submits a ballot \( X_i \). The rule satisfies participation if:
\[
u_i(\mathbf{x}') \geq u_i(\mathbf{x}) \quad \forall i \in N.
\]
This ensures that participation is always beneficial or neutral.

\textit{Alignment with ground truth} measures the deviation from an objective allocation \( \mathbf{x^*} = (a_1^*, \dots, a_m^*) \), which represents the theoretically correct funding distribution assuming perfect voter expertise and full knowledge of project impact. The alignment metric is given by:
\[
d_{\text{GT}} = \sum_{i=1}^{m} |a_i^* - a_i|.
\]
This ensures that the allocation reflects the true contribution of funded projects~\cite{caragiannis2016noisy,cohensius2017proxy}.

\section{Theoretical Results}\label{theoretical}

We provide Table~\ref{table:voting_rules_combined}  that summarizes the theoretical results. For space considerations proofs are in the appendix. Note that we have added the Normalized median rule that is a generalization of R3 and R4 voting rules (since it does not include using of capping or quorum).

\begin{table*}[t]
\centering
\renewcommand{\arraystretch}{1.3} % Increase row height for readability
\scalebox{0.72}{
\begin{tabular}{|l|c|c|c|c|c|c|c|}
\hline
\textbf{Voting Rule} & \textbf{Reinforcement} & \textbf{Pareto- Efficiency} & \textbf{Monotonicity} & \textbf{Participation} & \textbf{Proportionality} & \textbf{Max Social Welfare} & \textbf{Strategyproofness} \\ \hline
R1 Quadratic  
    & \cellcolor{green!30} $\checkmark$           
    & \cellcolor{red!30} $\times$               
    & \cellcolor{green!30} $\checkmark$              
    & \cellcolor{red!30} $\times$           
    & \cellcolor{red!30} $\times$  
    & \cellcolor{red!30} $\times$  
    & \cellcolor{red!30} $\times$        
    \\ \hline
R2 Mean Rule      
    & \cellcolor{green!30} $\checkmark$          
    & \cellcolor{green!30} $\checkmark$              
    & \cellcolor{green!30} $\checkmark$           
    & \cellcolor{green!30} $\checkmark$           
    & \cellcolor{green!30} $\checkmark$    
    & \cellcolor{red!30} $\times$  
    & \cellcolor{red!30} $\times$        
    \\ \hline
R3 Quorum Median  
    & \cellcolor{red!30} $\times$           
    & \cellcolor{red!30} $\times$               
    & \cellcolor{red!30} $\times$            
    & \cellcolor{red!30} $\times$          
    & \cellcolor{red!30} $\times$        
    & \cellcolor{red!30} $\times$  
    & \cellcolor{red!30} $\times$        
    \\ \hline
R4 Capped Median Rule  
    & \cellcolor{red!30} $\times$           
    & \cellcolor{red!30} $\times$               
    & \cellcolor{green!30} $\checkmark$           
    & \cellcolor{red!30} $\times$          
    & \cellcolor{red!30} $\times$    
    & \cellcolor{red!30} $\times$  
    & \cellcolor{red!30} $\times$        
    \\ \hline
Normalized Median Rule          
    & \cellcolor{green!30} $\checkmark$           
    & \cellcolor{green!30} $\checkmark$               
    & \cellcolor{green!30} $\checkmark$
    & \cellcolor{green!30} $\checkmark$          
    & \cellcolor{red!30} $\times$   
    & \cellcolor{red!30} $\times$  
    & \cellcolor{red!30} $\times$        
    \\ \hline
Midpoint Rule        
    & \cellcolor{green!30} $\checkmark$                     
    & \cellcolor{green!30} $\checkmark$                          
    & \cellcolor{green!30} $\checkmark$                     
    & \cellcolor{green!30} $\checkmark$                     
    & \cellcolor{red!30} $\times$        
    & \cellcolor{red!30} $\times$  
    & \cellcolor{red!30} $\times$        
    \\ \hline
Independent Market~\cite{freeman2019truthful}   
    & \cellcolor{green!30} $\checkmark$          
    & \cellcolor{red!30} $\times$  for ($m \geq n^2$)
    & \cellcolor{green!30} $\checkmark$          
    & \cellcolor{green!30} $\checkmark$        
    & \cellcolor{green!30} $\checkmark$    
    & \cellcolor{red!30} $\times$  
    & \cellcolor{green!30} $\checkmark$ (for $m>2$)     
    \\ \hline
Majoritarian Phantom~\cite{freeman2019truthful}   
    & \cellcolor{green!30} $\checkmark$           
    & \cellcolor{green!30} $\checkmark$               
    & \cellcolor{green!30} $\checkmark$              
    & \cellcolor{green!30} $\checkmark$         
    & \cellcolor{red!30} $\times$    
    & \cellcolor{green!30} $\checkmark$   
    & \cellcolor{green!30} $\checkmark$ (for $m>2$)    
    \\ \hline
\end{tabular}
}
\caption{Properties of Voting Rules.}
\label{table:voting_rules_combined}
\end{table*}

\section{Experimental Design}\label{section:experimental}

Experiments were conducted using artificial voter data and a simplified version of Optimism's Round 4 dataset, the only fully available real-world Optimism voting dataset, in which 108 badge-holders collectively allocated 8 million tokens across 229 projects. The analysis compares the voting rules employed in Optimism Rounds 1-4 with the Majoritarian Phantoms mechanism, selected for its capacity to maximize $\ell_1$-based social welfare while ensuring strategyproofness—two fundamental properties aligned with the application's requirements.

\paragraph{Vote Generation.}
Cumulative ballot instances were generated using Mallows' model~\cite{boehmer2021putting,szufa2020drawing}, ensuring structured yet diverse voter preferences. The process is as follows:

\begin{enumerate}
    \item A base vote \( X_{\text{base}} \) is sampled from a Dirichlet distribution with parameters \( \alpha^m = 1^m \), ensuring it sums to 1:
    \[
    \sum_{p \in P} X_{\text{base},p} = 1.
    \]
    
    \item An independent vote \( X_{\text{independent}} \) is sampled from the same Dirichlet distribution (\( k = 1 \)).
    
    \item Each voter's ballot is computed as a weighted combination:
    \[
    X_i = 0.5 \cdot X_{\text{base}} + 0.5 \cdot X_{\text{independent}}.
    \]
\end{enumerate}

\paragraph{Experimental Setup.}  
To evaluate different voting rules, we conduct simulations across multiple scenarios, each designed to assess specific properties of the mechanisms. The experimental conditions are as follows:

\begin{itemize}
    \item Bribery, Control, Robustness, and Voter Extractable Value (VEV) Experiments: 
    These experiments use a setup with 40 voters distributing 8 million OP tokens across 145 projects.
    
    \item Social Welfare, Gini Index, and Alignment Experiments:  
    To analyze broader allocation trends, these experiments use an expanded setting with 145 voters, 600 projects, and 30 million OP tokens.  

    \item Trial Averaging: 
    Each experimental condition is repeated for 100 independent trials to ensure statistical reliability and minimize variance.
\end{itemize}

\paragraph{Resistance to Manipulation.}  
To examine the vulnerability of each voting rule to strategic behavior, we conduct two experiments:

\begin{enumerate}
    \item Control Experiment:  
    This experiment assesses how resistant a voting rule is to the addition or removal of voters. We measure the minimum number of strategically placed voters required to increase the funding of a specific target project by a predetermined percentage. The experiment systematically varies the target funding increase from 1\% to 30\% and runs 10 independent trials per setting.

    \item Bribery Experiment: 
    This experiment quantifies the cost of influencing the outcome by reallocating tokens. Assuming a unit cost per token reallocated, we measure the minimum expenditure required to shift funding in favor of a given project. Similar to the control experiment, we vary the targeted funding increase from 1\% to 30\% and conduct 10 trials per condition.
\end{enumerate}

\paragraph{Robustness and Voter Influence.}  
These experiments assess the stability of allocations under small perturbations and the extent of influence exerted by individual voters:

\begin{enumerate}
    \item Robustness Experiment: 
    We introduce controlled random variations to individual voter preferences and measure the impact on the final allocations. The robustness of a voting rule is quantified by computing the $\ell_1$ distance between the original allocation and the perturbed allocation. This process is repeated over 100 trials to evaluate stability across different scenarios.

    \item Voter Extractable Value (VEV) Experiment: 
    This experiment measures the maximum impact a single voter can exert on the allocation. A voter is allowed to concentrate between 90\% and 99\% of their total tokens on a single project, and we observe the resulting change in the project's final funding allocation. The goal is to quantify how susceptible each rule is to concentrated voting power.
\end{enumerate}

\paragraph{Alignment with Ground Truth.}  
To assess how well each voting rule reflects the true distribution of voter preferences, we compute the $\ell_1$ distance between the final allocation produced by a voting rule and a benchmark reference distribution. We set the benchmark to be the base vote used in the generation of all ballots.

\section{Experimental Results}\label{section:results}

We present Figure~\ref{fig:R4_Com}, which summarizes the key findings from our simulations on bribery costs, control, robustness, and VEV using Round 4 data (108 voters, 229 projects, 8M tokens).

\begin{figure*}[t]
    \centering
    \includegraphics[width=0.95\textwidth]{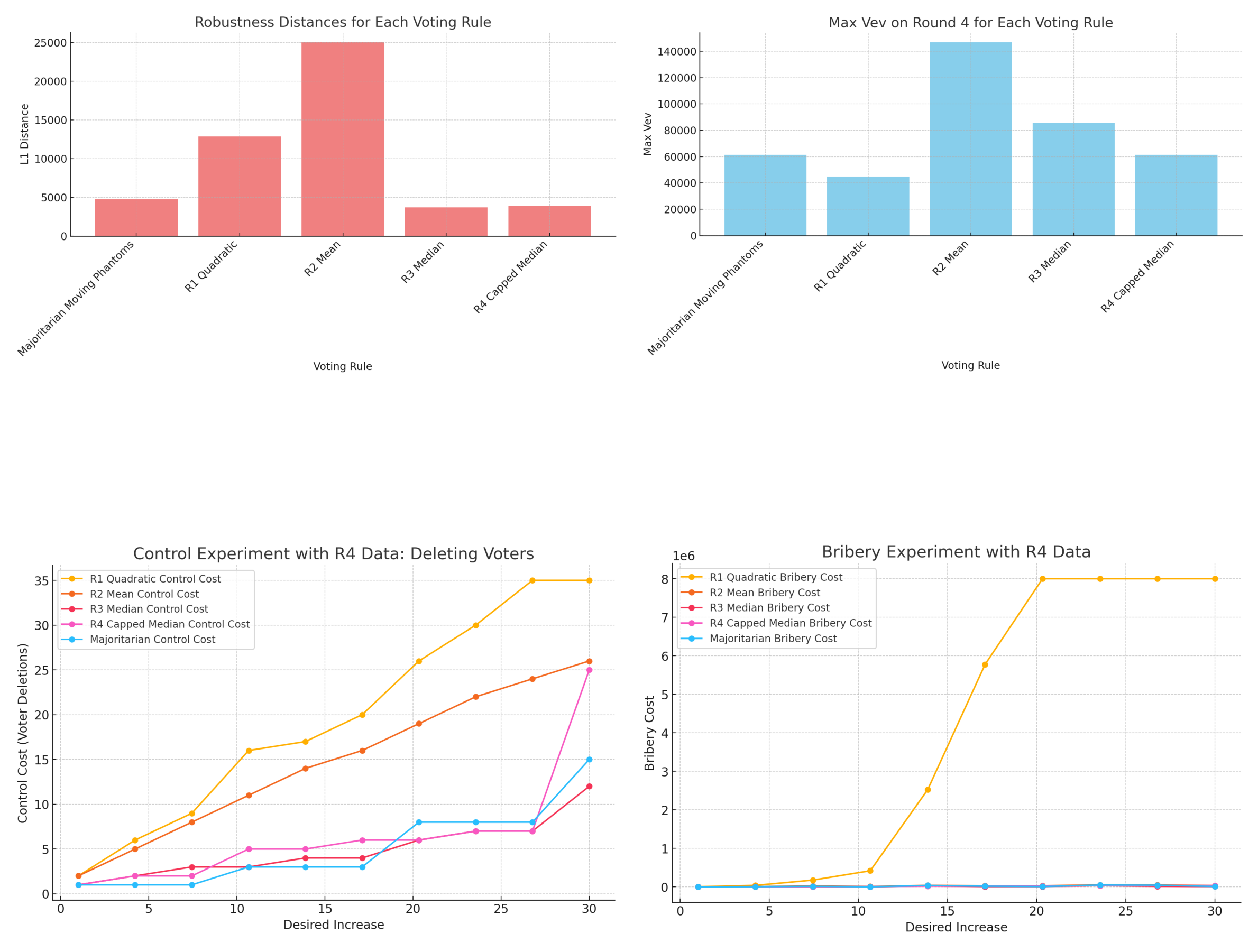}
    \caption{Bribery costs, cost of control by deleting voters, robustness, and Max VEV for each voting rule using Round 4 data.}
    \label{fig:R4_Com}
\end{figure*}

Majoritarian Phantom outperforms other rules across strategic robustness metrics, including higher bribery and control costs, stronger stability under noise, and reduced voter extractable value. However, as shown in the other figures in the appendix, the advantage is less pronounced for metrics like Gini index and alignment with ground truth. In those cases, differences between Majoritarian Phantom and the R3/R4 median-based rules are often small and not statistically significant, indicating that the practical gains, while consistent, may be incremental rather than large.

\section{Discussion and Outlook}\label{section:outlook}

Our theoretical and simulation results indicate that the Majoritarian Phantom rule is the most effective mechanism for RetroPGF among those tested, based on the application requirements set in Section~\ref{section specific}. It maximizes $\ell_1$ social welfare, satisfies key axioms such as Pareto and Participation, maintains a high Gini index (favoring impactful projects over equal distribution), and exhibits strong resistance to manipulation. It also aligns well with the ground truth—although it ranks third in this metric after Quadratic and Mean rules. However, as shown in our simulations, the empirical advantage of Majoritarian Phantom over median-based rules (R3 and R4) is often modest. In fairness and alignment metrics, the differences are typically small and not statistically significant. This suggests that while the rule offers strong theoretical benefits, its practical improvements may be incremental.

Our analysis focuses on the voting stage of the allocation process, assuming fixed distributions of voting tokens. We do not model pre-voting dynamics, such as trading or influence between badgeholders. In settings where more biased voters accumulate voting power through such exchanges, the choice of voting rule may interact with deeper strategic incentives. While Majoritarian Phantom remains strategy-resistant in the voting stage, its resilience under pre-voting manipulation is an open question warranting further study.

In addition, our current framework does not account for dynamic reputation costs. In practice, badgeholders face reputational consequences for misaligned or self-serving behavior, which could deter manipulation even without formal enforcement. Modeling such costs explicitly could improve the realism of strategic analyses and explain observed behavior in real-world rounds.

We formalized RetroPGF as a majority-based, token-weighted allocation process and analyzed its voting mechanisms through the lens of computational social choice. Our results contribute a structured framework for evaluating decentralized funding mechanisms, identifying tradeoffs between fairness, efficiency, and strategic robustness. We conclude with several directions for future work.

\paragraph{Comparison with Other Retroactive Project Funding Models.}  
Other ecosystems such as Filecoin’s RetroPGF\footnote{\url{https://filecoin.io/blog/posts/unveiling-fil-retropgf-1-retroactively-funding-filecoin-public-goods/}} use similar mechanisms. A comparative analysis could reveal institutional design insights across DAOs.

\paragraph{Metric-Based, Indirect Voting.}  
Optimism is experimenting with metric-based voting,\footnote{\url{https://gov.optimism.io/t/experimentation-impact-metric-based-voting/7727}} where badgeholders vote on evaluation criteria rather than projects. This may reduce cognitive load and limit manipulation. Evaluating this model is a promising research direction.

\paragraph{Mechanism Design Refinements.}  
Future refinements include reputation-weighted voting~\cite{leonardos2020weighted,de2020blockchain}, mitigation of pseudonymity risks~\cite{perez2018improving}, moving phantoms~\cite{de2024truthful}, and adaptations of VCG-like or Continuous Thiele rules~\cite{wagner2023strategy,wagner2024distribution}. Bayesian truth serum methods~\cite{prelec2004bayesian,prelec2017solution} may further help validate whether a rule meaningfully uncovers ground truth.

\bibliographystyle{plain}
\bibliography{bib}

\appendix
\section{Missing Proofs}
\subsection{Reinforcement}

\begin{theorem} The Mean, Normalized Median, and Midpoint Rules satisfy reinforcement. \end{theorem}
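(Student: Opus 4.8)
The plan is to read reinforcement as a statement about whether each rule's output survives the merging of two disjoint groups that already agree, and to exploit the fact that the statistic (or objective) underlying each rule decomposes additively or coordinatewise over voters. I would prove one lemma per rule.

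For the Mean Rule the argument is immediate. Since every ballot satisfies $\sum_{p} x_{i,p} = c$, the rule is the normalized average $a_p = \frac{1}{nc}\sum_{i} x_{i,p}$. If $V_1$ (with $n_1$ voters) and $V_2$ (with $n_2$ voters) both return $a$, then $\sum_{i \in V_k} x_{i,p} = n_k c\, a_p$ for $k \in \{1,2\}$, and summing over the union gives $\sum_{i \in V_1 \cup V_2} x_{i,p} = (n_1+n_2)c\,a_p$, so the combined average is again $a_p$. This is the easy case and should take only a few lines.

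For the Normalized Median Rule I would work coordinatewise. The elementary lemma is that if a value $\mu$ is a median of the multiset of $p$-th coordinates in $V_1$ and also a median in $V_2$ (at least half of each group weakly above $\mu$ and at least half weakly below), then the same counting bound holds for $V_1 \cup V_2$, so $\mu$ remains a median of the union; applying this in every coordinate would preserve the raw median vector and hence the normalized allocation. Here, however, I expect the main obstacle. The rule is scale-invariant (it divides by $\sum_{p'}\mathrm{med}_{p'}$), so the hypothesis $f(V_1)=f(V_2)$ only forces the two raw median vectors to be proportional, not equal, and the coordinatewise median of the union need not preserve that proportionality: the per-coordinate union medians can be pulled by different amounts when the two groups carry different total median mass $\sum_{p'}\mathrm{med}_{p'}$, and the even-cardinality averaging convention compounds this. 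The crux is therefore to argue that equal normalized outputs in fact force equal coordinatewise raw medians, or to pin down the median convention and structural constraints (such as $\sum_p x_{i,p}=c$) under which this holds. I regard closing this gap as the principal technical difficulty and the place where I would concentrate effort.

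For the Midpoint Rule the objective $\Phi_V(X) = \sum_{j \in V}\lVert X - X_j\rVert_1$ is additive across disjoint groups, so $\Phi_{V_1 \cup V_2} = \Phi_{V_1} + \Phi_{V_2}$, and the common output $x$ is a ballot present in both groups. Since $x$ minimizes $\Phi_{V_1}$ over $V_1$'s ballots and $\Phi_{V_2}$ over $V_2$'s ballots, it is the natural candidate to minimize the sum over the union. The subtlety to handle is that merging enlarges the discrete set of admissible medoids: a ballot $y$ contributed by $V_1$ is a candidate for the union but need not belong to $V_2$, so $\Phi_{V_2}(x) \le \Phi_{V_2}(y)$ is not given directly and must be derived. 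I would close this by combining the group-wise optimality inequalities with additivity and by fixing a consistent tie-breaking rule, ensuring the selected minimizer of $\Phi_{V_1 \cup V_2}$ is exactly $x$.
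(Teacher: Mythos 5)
Your Mean Rule argument is correct and is essentially the paper's own proof (the paper's version is just terser): with the normalization $\sum_p x_{i,p}=c$, the rule is a rescaled sum, and equal outputs on $V_1$ and $V_2$ force the same output on the union. The substance of your proposal lies in the two difficulties you flag for the Normalized Median and Midpoint Rules, and there the situation is worse than you anticipate: neither gap can be closed, because both claims are false under the paper's definitions. The paper's proof (``preserves order statistics,'' ``the voter minimizing $\ell_1$ remains unchanged under profile union'') glosses over exactly the two points you isolated.

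For the Normalized Median Rule, your observation that $f(V_1)=f(V_2)$ only forces the raw median vectors to be \emph{proportional}, not equal, is fatal. Take $m=3$, $c=1$. Let $V_1=\{(0.5,0.3,0.2),\,(0.45,0.35,0.2),\,(0.55,0.25,0.2)\}$, whose coordinatewise medians are $(0.5,0.3,0.2)$ (sum $1$), and $V_2=\{(0.25,0.15,0.6),\,(0.25,0.7,0.05),\,(0.8,0.1,0.1)\}$, whose coordinatewise medians are $(0.25,0.15,0.1)$ (sum $0.5$). Both profiles normalize to the same outcome $(0.5,0.3,0.2)$. In the union of six ballots, the coordinatewise medians (mid-pair convention) are $(0.475,0.275,0.2)$, which normalize to $(0.5,\,0.2894\ldots,\,0.2105\ldots)\neq(0.5,0.3,0.2)$. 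So reinforcement fails; the proportionality slack you worried about is precisely what breaks it.

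For the Midpoint Rule, the obstruction you identify---that a ballot $y\in V_1$ is a new candidate in the union against which the $V_2$-optimality of $x$ says nothing---also cannot be repaired by tie-breaking or additivity. Take $m=3$, $x=(0.45,0.45,0.1)$, $y=(0.5,0.5,0)$, $D=(1,0,0)$, $E=(0,1,0)$; then $\|x-y\|_1=0.2$, $\|x-D\|_1=\|x-E\|_1=1.1$, $\|y-D\|_1=\|y-E\|_1=1$, $\|D-E\|_1=2$. With $V_1=\{x,x,y\}$, ballot $x$ is the strict medoid ($0.2<0.4$), so $f(V_1)=x$. With $V_2=\{x,D,D,D,E,E,E\}$, ballot $x$ is again the strict medoid ($6.6<7.1$), so $f(V_2)=x$. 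But in the union, $\sum_j\|y-X_j\|_1=3(0.2)+6(1)=6.6$ while $\sum_j\|x-X_j\|_1=0.2+6.6=6.8$, so the rule outputs $y\neq x$. The mechanism of failure is exactly your ``enlarged candidate set'': $x$ wins in $V_2$ only because $y$ is not on offer there, and merging puts it on offer. The correct resolution of the difficulties you flagged is therefore a pair of counterexamples refuting two-thirds of the theorem, not a completion of its proof; only the Mean Rule part survives.
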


\begin{proof} 
For two profiles \( V_1 \) and \( V_2 \), let:
\[
a_p^1 = \frac{\sum_{i \in N_1} x_{i,p}}{\sum_{i \in N_1} \sum_{p' \in P} x_{i,p'}}, \quad 
a_p^2 = \frac{\sum_{i \in N_2} x_{i,p}}{\sum_{i \in N_2} \sum_{p' \in P} x_{i,p'}}.
\]
Since allocation summation is linear (Mean Rule) or preserves order statistics (Median Rule), the combined profile maintains relative rankings: 
\[
a_p^3 = \frac{\sum_{i \in N_1 \cup N_2} x_{i,p}}{\sum_{i \in N_1 \cup N_2} \sum_{p' \in P} x_{i,p'}}.
\]
For the Midpoint Rule, the voter minimizing \(\ell_1\) remains unchanged under profile union, ensuring reinforcement.  
\end{proof}

\begin{corollary} The Quadratic Rule satisfies reinforcement as it follows the Mean Rule after square-root transformation. \end{corollary}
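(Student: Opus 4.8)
The plan is to exploit the observation already embedded in the corollary's statement: the Quadratic Rule is nothing but the Mean Rule composed with a voter-wise square-root transformation of the ballots. Since the preceding theorem establishes that the Mean Rule satisfies reinforcement, the task reduces to verifying that this transformation is compatible with the union operation on voter profiles that appears in the definition of reinforcement, so that the property transfers along the reparametrization.

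First I would make the reduction explicit. For each voter $i$, define the transformed ballot $Y_i = (\sqrt{x_{i,1}}, \dots, \sqrt{x_{i,m}})$, and for a profile $V = \{X_i\}_{i \in N}$ write $V' = \{Y_i\}_{i \in N}$ for the profile of transformed ballots. A direct comparison of the two formulas then shows that applying the Quadratic Rule to $V$ produces exactly the allocation obtained by applying the Mean Rule to $V'$; that is, $f_{\mathrm{QV}}(V) = f_{\mathrm{Mean}}(V')$, since the substitution $\sqrt{x_{i,p}} = y_{i,p}$ turns the numerator and denominator of the Quadratic formula into those of the Mean formula. The denominators are nonzero under the same non-degeneracy condition used for the Mean Rule (each ballot sums to $c > 0$).

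Second I would record the structural fact that does the real work: the transformation acts on each ballot independently, hence it commutes with taking unions of disjoint voter groups, $(V_1 \cup V_2)' = V_1' \cup V_2'$. This is the one place where care is warranted — one should treat profiles as indexed tuples (or multisets) so that repeated ballots are preserved and the displayed equality of profiles is literal rather than merely set-theoretic. Because $\sqrt{\cdot}$ is applied per voter and does not mix ballots across the two groups, this commutation identity is immediate once the bookkeeping is set up correctly.

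Finally, the conclusion follows by chaining these facts. Assume $f_{\mathrm{QV}}(V_1) = f_{\mathrm{QV}}(V_2) = \mathbf{a}$. By the reduction this means $f_{\mathrm{Mean}}(V_1') = f_{\mathrm{Mean}}(V_2') = \mathbf{a}$, so reinforcement of the Mean Rule yields $f_{\mathrm{Mean}}(V_1' \cup V_2') = \mathbf{a}$; applying the commutation identity, the left-hand side equals $f_{\mathrm{Mean}}((V_1 \cup V_2)') = f_{\mathrm{QV}}(V_1 \cup V_2)$, giving $f_{\mathrm{QV}}(V_1 \cup V_2) = \mathbf{a}$, as required. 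The main obstacle is genuinely minor: it is the bookkeeping in the commutation step, while everything else is a direct transfer of an already-proved property along a voter-wise reparametrization.
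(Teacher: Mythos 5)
Your argument is correct and is exactly the paper's own (one-line) justification made rigorous: the paper establishes the corollary by the same reduction of the Quadratic Rule to the Mean Rule under a voter-wise square-root reparametrization, relying on the Mean Rule's reinforcement theorem. The only point you leave implicit is that the transformed ballots $Y_i$ no longer sum to the common constant $c$, so your appeal to the Mean Rule's reinforcement requires that property to hold for unnormalized ballots as well---which it does, since the mediant computation $\frac{S_1+S_2}{T_1+T_2}=a$ whenever $\frac{S_1}{T_1}=\frac{S_2}{T_2}=a$ never uses ballot normalization.
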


\begin{example} The R3 Quorum Median and R4 Capped Median Rules violate reinforcement due to quorum effects.  
Consider two projects \( p_1, p_2 \) with quorum \( 0.2 \).  
- \( V_1 \): \( v_1 \) votes \( [0.9, 0.1] \rightarrow\)  Winner: \( [1,0] \).
- \( V_2 \): \( v_2 \) votes \( [0.9, 0.1] \rightarrow\) Winner: \( [1,0] \).
- \( V_1 \cup V_2 \): Quorum applies, shifting the result to \( [0.8,0.2] \).  
\end{example}

\subsection{Pareto Efficiency}

\begin{theorem}\label{theorem:mean_pareto}
The Mean Rule satisfies Pareto efficiency.
\end{theorem}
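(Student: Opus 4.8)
The plan is to exploit the single structural fact that distinguishes the Mean Rule: its output $\bar{X} = \frac{1}{n}\sum_{i} X_i$ is the centroid of the ballots, so it lies in their convex hull and the signed deviations cancel coordinatewise, $\sum_{i}(X_i - \bar X) = \mathbf{0}$. First I would reduce Pareto efficiency to a statement about feasible \emph{improving directions}: a competing allocation can be written as $\bar X + t\,\mathbf d$ for a direction $\mathbf d$ with $\sum_p d_p = 0$ (budget preservation) that keeps the point nonnegative for small $t>0$, and its first-order effect on voter $i$ is the one-sided derivative of $\|\cdot - X_i\|_1$ at $\bar X$, namely $\sum_{p:\,\bar X_p \ne x_{i,p}} \mathrm{sign}(\bar X_p - x_{i,p})\,d_p + \sum_{p:\,\bar X_p = x_{i,p}} |d_p|$. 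Pareto efficiency then amounts to showing no budget-preserving $\mathbf d$ makes all these derivatives $\le 0$ with one $<0$.

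Next I would try to certify this via scalarization. The aim is to produce nonnegative multipliers $\mu_i$ and subgradients $g_i \in \partial\|\bar X - X_i\|_1$ with $\sum_i \mu_i g_i = c\,\mathbf 1$ for some scalar $c$; since $\mathbf 1$ is exactly the outward normal to the budget hyperplane $\sum_p a_p = 1$, this is the KKT condition certifying that $\bar X$ minimizes the convex objective $\sum_i \mu_i \|\cdot - X_i\|_1$ over the feasible simplex. The balancedness $\sum_i (x_{i,p}-\bar X_p)=0$ is what I would lean on to build such a certificate: in every coordinate the ballots straddle $\bar X_p$, so the electorate's ``pull'' toward larger or smaller $a_p$ is balanced, and the budget coupling forbids satisfying all voters' preferred one-coordinate moves at once.

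The step I expect to be the main obstacle is precisely upgrading ``centroid/convex-hull membership'' to genuine Pareto \emph{optimality}, because the $\ell_1$ utilities are only weakly, not strictly, convex. A coordinatewise argument alone fails, since for a fixed project $\sum_i|a_p - x_{i,p}|$ is minimized at the \emph{median}, not the mean, so the budget constraint must enter essentially. More seriously, the scalarization certificate is generally available only with some $\mu_i=0$, which yields merely \emph{weak} Pareto efficiency (no allocation strictly helps every voter); obtaining the strong form stated here would require all $\mu_i>0$, and a balanced mean need not admit this. Indeed, non-strict convexity permits a ``free'' lateral reallocation between two projects that leaves all but one voter exactly indifferent while strictly helping the remaining voter, so the strong statement can fail without further hypotheses. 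I would therefore either prove the clean weak version from balancedness and the budget constraint, or make explicit a general-position assumption (no coordinate of $\bar X$ ties any ballot) under which these indifferent lateral moves are excluded and the strong claim goes through, rather than relegating the tie case to routine checking.
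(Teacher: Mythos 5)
Your skepticism about this theorem is better founded than the paper's own proof, which consists of the single assertion that ``any reallocation that benefits one voter must reduce another project's allocation\dots, making at least one voter worse off.'' That assertion is exactly what your ``free lateral reallocation'' objection targets, and it is false: the statement, under the paper's own (strong) definition of Pareto efficiency, is false for $m\ge 3$. Take
\[
X_1=(0.5,\,0.47,\,0.03),\qquad X_2=(0,\,0.15,\,0.85),\qquad X_3=(0.4,\,0.13,\,0.47),
\]
so the Mean Rule outputs $\bar X=(0.30,\,0.25,\,0.45)$, and compare with $\mathbf a'=(0.35,\,0.20,\,0.45)$. Voter $1$ is below her ballot on both projects $1$ and $2$, voter $2$ is above his ballot on both, so the transfer between these two projects leaves their $\ell_1$ distances exactly unchanged ($0.84$ and $0.80$ respectively), while voter $3$ (who wants more of project $1$ and less of project $2$) improves from $0.24$ to $0.14$. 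Note, however, that this example has \emph{no ties}: $x_{i,p}\neq\bar X_p$ for every $i,p$. So your proposed general-position repair does not rescue the strong claim; the indifference that breaks strong Pareto efficiency comes from sign cancellation across the two transferred coordinates (a voter on the same side of $\bar X$ on both), not from kinks of $\|\cdot-X_i\|_1$ at a ballot, and such cancellations cannot be excluded by any genericity assumption.

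Your other fallback, proving the ``clean weak version,'' also fails in this model, for two separate reasons. First, your reduction to directions with $\sum_p d_p=0$ silently discards alternatives the paper explicitly allows: feasibility only requires $\sum_p a_p\le B$ (and the definition even quantifies over all of $\mathbb R^m_{\ge0}$), and budget-burning can strictly help \emph{everyone}. With ballots $(1,0,0),(0,1,0),(0,0,1)$ the mean $(1/3,1/3,1/3)$ gives every voter distance $4/3$, while the feasible allocation $(0.3,0.3,0.3)$ gives every voter distance $1.3$. Second, even restricted to budget-exhausting comparisons, weak Pareto efficiency fails. Let $\bar X=(1/6,\dots,1/6)$ over six projects and let the three voters' ballots be $X_i=\bar X+E_i$ with deviation rows
\[
E=\begin{pmatrix}
+0.10 & -0.05 & -0.05 & +0.03 & +0.03 & -0.06\\
-0.05 & +0.10 & -0.05 & +0.03 & -0.06 & +0.03\\
-0.05 & -0.05 & +0.10 & -0.06 & +0.03 & +0.03
\end{pmatrix};
\]
all row and column sums vanish, so the Mean Rule outputs $\bar X$, yet moving $0.01$ off each of projects $1,2,3$ and onto each of projects $4,5,6$ lowers \emph{every} voter's distance from $0.32$ to $0.30$. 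In the language of your scalarization plan, the certificate you would need is a nonnegative weighting $\mu$ under which $\bar X$ is a coordinatewise weighted median (for every pair $p\neq q$, the $\mu$-weight of voters above $\bar X$ on $p$ plus the weight of those below on $q$ is at most $\sum_i\mu_i$); here the pairs $(p_4,p_3),(p_5,p_2),(p_6,p_1)$ force $2(\mu_1+\mu_2)\le1$, $2(\mu_1+\mu_3)\le1$, $2(\mu_2+\mu_3)\le1$ after normalizing $\sum_i\mu_i=1$, and summing gives $4\le3$. So no certificate exists, and none of the three readings of the claim (strong, weak, or weak with budget exhaustion) is true. The correct resolution is not a repaired proof but a counterexample: the Mean Rule does not satisfy Pareto efficiency under $\ell_1$ utilities, and both the paper's table entry and its proof are in error. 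Your directional-derivative and scalarization framework is the right machinery --- it is precisely what exposes all of this --- but as written your proposal stops short of the conclusion it actually supports.
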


\begin{proof}
The Mean Rule distributes the budget proportionally among projects. Any reallocation that benefits one voter must reduce another project's allocation due to the budget constraint, making at least one voter worse off. Thus, no Pareto improvement is possible, proving Pareto efficiency.
\end{proof}

\begin{theorem}\label{theorem:median_pareto}
The Normalized Median Rule satisfies Pareto efficiency.
\end{theorem}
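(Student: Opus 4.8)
The plan is to adopt the normalized-simplex model implicit in Section~\ref{section:metrics}: each ballot $X_i$ and the output $\mathbf{a}$ are points of $\Delta_m = \{y \in \mathbb{R}^m_{\geq 0} : \sum_p y_p = 1\}$, and voter $i$ has utility $U_i(\mathbf{a}) = -\|\mathbf{a} - X_i\|_1$. A clean starting observation is that the \emph{unnormalized} coordinatewise median $\tilde a_p = \mathrm{med}_i(x_{i,p})$ maximizes utilitarian $\ell_1$-welfare over all of $\mathbb{R}^m_{\geq 0}$: since the utilities are separable across coordinates, each coordinate is optimized independently, and $\tilde a_p$ minimizes $\sum_i |y - x_{i,p}|$. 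A welfare-maximizing point is automatically Pareto efficient, so in the \emph{unconstrained} problem the median is trivially Pareto optimal. The entire content of the theorem is therefore that Pareto efficiency survives the normalization step, which couples the coordinates through the budget constraint and destroys this separability (indeed the rule is \emph{not} welfare-optimal on $\Delta_m$, consistent with Table~\ref{table:voting_rules_combined}).

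For the constrained statement I would reduce Pareto efficiency to the absence of an improving transfer between two projects. Each $U_i$ is concave and piecewise linear and $\Delta_m$ is a polytope, so it suffices to rule out improving feasible directions, and any such direction decomposes into elementary transfers $p \to q$ that move mass from one project to another; I would argue that if a Pareto improvement exists then one such transfer is itself a Pareto improvement. Writing $s_{i,p} = \mathrm{sign}(a_p - x_{i,p})$, the transfer $p \to q$ changes $U_i$ with sign $s_{i,p} - s_{i,q}$, so it is a Pareto improvement exactly when $s_{i,p} \geq s_{i,q}$ for all $i$ with at least one strict inequality; blocking it means exhibiting a voter with $s_{i,p} < s_{i,q}$. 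The natural instinct is to block every transfer using the fact that $\tilde a_p \in [\min_i x_{i,p}, \max_i x_{i,p}]$, but this coordinatewise range property is \emph{not} sufficient on its own (a coordinate can sit at a boundary of its range while still admitting a unanimously weakly-beneficial transfer that some voter strictly favors). I would therefore instead track the stronger \emph{majority} property of the median, namely that at least half the voters satisfy $x_{i,p} \geq \tilde a_p$ and at least half satisfy $x_{i,p} \leq \tilde a_p$, and use a counting argument over these overlapping majorities to force, for any candidate transfer $p \to q$, the existence of a single voter with $s_{i,p} < s_{i,q}$.

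The main obstacle is carrying this through the normalization. The rule outputs $a_p = \tilde a_p / S$ with $S = \sum_q \tilde a_q$, and $S \neq 1$ in general, so the sign pattern $s_{i,p}$ is computed against the \emph{rescaled} value $a_p$ rather than against $\tilde a_p$; when $S < 1$ the common factor $1/S$ inflates every coordinate and can even push some $a_p$ beyond $\max_i x_{i,p}$, so the median's majority property is not inherited verbatim by $\mathbf{a}$. I expect the crux to be showing that a single positive rescaling cannot simultaneously make one coordinate unanimously over-funded and another unanimously under-funded in the way an improving transfer requires — either by a global counting argument relating a large $a_p$ to a correspondingly large $S$, or by arguing directly on the joint sign patterns. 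I would dispatch the ties and the coordinates pinned at $0$ (where downward transfers are infeasible and where $s_{i,p}=0$ terms enter the weak/strict bookkeeping) as separate cases, and finally lift the per-pair conclusion back to arbitrary improving directions via the decomposition from the second paragraph.
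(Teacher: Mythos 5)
Your proposal is not a completed proof: both load-bearing steps are deferred (the reduction of arbitrary improving directions to a single pairwise transfer is only asserted, and the ``crux'' about normalization is explicitly left as a hope), and the crux is in fact false, so the program cannot be completed. You hope to show that ``a single positive rescaling cannot simultaneously make one coordinate unanimously over-funded and another unanimously under-funded in the way an improving transfer requires.'' It can. Take $n=3$, $m=4$ with ballots $X_1 = (0.3, 0.7, 0, 0)$, $X_2 = (0.3, 0, 0.7, 0)$, $X_3 = (0.3, 0, 0, 0.7)$. The coordinatewise medians are $(0.3, 0, 0, 0)$, so $S = 0.3$ and the Normalized Median Rule outputs $\mathbf{a} = (1,0,0,0)$, with $a_1 = 1$ strictly above every report $x_{i,1} = 0.3$. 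Compare with $\mathbf{a}' = X_1 = (0.3, 0.7, 0, 0)$: voter $1$'s $\ell_1$ distance drops from $1.4$ to $0$, while voters $2$ and $3$ have distance $1.4$ under both $\mathbf{a}$ and $\mathbf{a}'$. So $\mathbf{a}'$ is a Pareto improvement, and the theorem is false under the paper's own $\ell_1$ utility model. In your transfer language: every voter is over-funded at project $1$, so shifting mass $1 \to 2$ hurts nobody, and it strictly helps voter $1$, who is under-funded at project $2$. This failure is generic, not an artifact of the example: whenever normalization inflates some $a_p$ above $\max_i x_{i,p}$, the budget identity $\sum_q (a_q - x_{i,q}) = 0$ forces each voter to be strictly under-funded at some other project, and transferring mass from $p$ to such a project is a Pareto improvement. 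No counting argument over median majorities can block this, precisely because (as you yourself observed) the majority property of $\tilde a_p$ is not inherited by $a_p = \tilde a_p / S$.

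For comparison, the paper's own proof is a two-sentence claim that the median ``minimizes absolute deviations'' and that ``increasing a project's allocation necessarily reduces another's, harming at least one voter.'' That is exactly the step your sketch declines to take on faith, and rightly so: in the example above, moving funds away from project $1$ harms no one, because every voter asked for only $0.3$ there. So your diagnosis of where the difficulty sits is correct, and sharper than the paper's argument, which silently assumes the output lies within each coordinate's voter range --- a property destroyed by normalization. Your only error is optimism that the difficulty is surmountable; the honest resolution of your sketch is the counterexample above, i.e., the table entry should record that the Normalized Median Rule fails Pareto efficiency (your unproved transfer-decomposition lemma then becomes moot). The part of your argument that does survive is the unconstrained claim: the unnormalized median is utilitarian-optimal, hence Pareto efficient, in $\mathbb{R}^m_{\geq 0}$; it is precisely the passage to the simplex that kills the result.
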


\begin{proof}
The median allocation minimizes absolute deviations from voter preferences. Increasing a project's allocation necessarily reduces another's, harming at least one voter. Thus, no Pareto improvement exists, proving Pareto efficiency.
\end{proof}

\begin{theorem}\label{theorem:midpoint_pareto}
The Midpoint Rule satisfies Pareto efficiency.
\end{theorem}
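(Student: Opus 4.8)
The plan is to exploit the structural feature that distinguishes the Midpoint Rule from the Mean and Normalized Median rules: its output is not a synthesized allocation but literally one of the submitted ballots, namely $\mathbf{x} = X_{i^*}$ with $i^* = \arg\min_{i \in N} \sum_{j=1}^{n} \|X_i - X_j\|_1$. Under the $\ell_1$ utility model fixed in Section~\ref{section:metrics}, each voter $i$ evaluates an allocation $\mathbf{y}$ by $U_i(\mathbf{y}) = -\|\mathbf{y} - X_i\|_1$, which is globally maximized, with value $0$, exactly when $\mathbf{y} = X_i$. The crucial consequence I would isolate first is that the chosen voter $i^*$ attains their saturated utility at the outcome, since $U_{i^*}(\mathbf{x}) = -\|X_{i^*} - X_{i^*}\|_1 = 0$ is the maximum of $U_{i^*}$ over all of $\mathbb{R}^m_{\geq 0}$.

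Next I would argue by contradiction. Suppose some $\mathbf{x}' \in \mathbb{R}^m_{\geq 0}$ were a Pareto improvement, so that $U_i(\mathbf{x}') > U_i(\mathbf{x})$ for at least one voter $i$ while $U_j(\mathbf{x}') \geq U_j(\mathbf{x})$ for every voter $j$. Because the improvement is strict for some voter, $\mathbf{x}'$ must differ from $\mathbf{x} = X_{i^*}$, whence $\|\mathbf{x}' - X_{i^*}\|_1 > 0$ and therefore $U_{i^*}(\mathbf{x}') < 0 = U_{i^*}(\mathbf{x})$. This contradicts the requirement that no voter be made worse off, since voter $i^*$ is strictly harmed. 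Concluding, every alternative allocation distinct from $\mathbf{x}$ strictly decreases voter $i^*$'s utility, so no Pareto improvement exists and the Midpoint Rule is Pareto efficient.

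I expect no substantive obstacle: unlike the Mean and Normalized Median cases, whose Pareto efficiency rests on a budget-exhaustion trade-off argument, the Midpoint Rule's guarantee follows immediately from the observation that the selected outcome coincides with an individual voter's ideal point, at which that voter's utility is globally saturated—effectively a data-dependent dictatorship, and such selections are always Pareto efficient. The only points requiring mild care are making the $\ell_1$ utility model explicit and noting that the argument is insensitive to tie-breaking in the $\arg\min$, since every minimizer $i^*$ yields an outcome equal to its own ideal and hence supplies the same strictly-harmed voter in the contradiction step.
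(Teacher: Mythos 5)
Your proof is correct and takes essentially the same approach as the paper's: the outcome coincides with the selected voter's own ballot $X_{i^*}$, so any alternative allocation strictly increases that voter's $\ell_1$ distance from their ideal point, making them worse off and blocking any Pareto improvement. Your write-up merely makes explicit (the utility model, the contradiction structure, and tie-breaking insensitivity) what the paper's terse two-sentence proof leaves implicit.
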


\begin{proof}
The selected allocation corresponds to a voter's input. Any alternative would increase some voters’ distance from their preferred allocation, making them worse off, ensuring Pareto efficiency.
\end{proof}

\begin{example}\label{example:quadratic_pareto}
The Quadratic Rule does not satisfy Pareto efficiency.

A voter allocating \([0.7, 0.3]\) receives an outcome based on square roots, deviating from their preference and allowing a Pareto improvement.
\end{example}

\begin{example}\label{example:quorum_pareto}
The R3 Quorum Median Rule does not satisfy Pareto efficiency.

A voter allocating \([0.7, 0.3]\) with a quorum of \( 0.4 \) results in \([1, 0]\), which is strictly worse for the voter.
\end{example}

\begin{corollary}\label{example:R4_pareto}
The R4 Capped Median Rule does not satisfy Pareto efficiency due to the quorum constraint \( K_2 \).
\end{corollary}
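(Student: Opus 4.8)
The plan is to establish the corollary exactly as its parent statements are proved, namely by exhibiting a single counterexample in which the elimination threshold $K_2$ forces the outcome to a budget corner that no voter prefers. Because the R4 Capped Median Rule is a variant of the R3 Quorum Median Rule that adds only the cap $K_1$ and the redistribution step on top of the same normalized-median-with-elimination skeleton, I would first neutralize the capping machinery by choosing $K_1$ large enough that $\min(c_p, K_1) = c_p$ for every $p$ and $\max(0, c_j - K_1) = 0$ for every $j$; under this choice $d_p = c_p$ and the rule collapses to the normalized median followed by the $K_2$-elimination, which is precisely the mechanism responsible for the R3 failure in Example~\ref{example:quorum_pareto}.

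With the cap inactive, I would reuse the R3 instance almost verbatim. Take a single voter ($n = 1$, $B = 1$) over two projects with ballot $[0.7, 0.3]$ and set $K_2 = 0.4$. The medians are $0.7$ and $0.3$, so $c = (0.7, 0.3)$ and hence $d = (0.7, 0.3)$. Since $d_{p_2} = 0.3 < K_2$, project $p_2$ is eliminated, its mass is redistributed to the surviving project, and after normalization the allocation becomes $a = (1, 0)$. The voter's $\ell_1$ cost at this outcome is $|0.7 - 1| + |0.3 - 0| = 0.6$, whereas the feasible allocation equal to the voter's own ballot $(0.7, 0.3)$ has cost $0$. This alternative strictly improves the unique voter's utility and harms no one, so it is a Pareto improvement and the rule is not Pareto efficient.

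The main thing to verify carefully---and the only real obstacle---is that the redistribution clause in the definition of $b_p$ does not accidentally resurrect the eliminated project or otherwise undo the damage. I would check that under the piecewise definition the eliminated coordinate is assigned $b_{p_2} = 0$ and stays $0$ through the final normalization $a_p = b_p / \sum_{p'} b_{p'}$, so that the surviving project absorbs the entire budget regardless of the exact redistribution weights. Since only one project survives, any sensible reading of the (somewhat informally written) redistribution fraction yields $a = (1, 0)$, and the conclusion is insensitive to this ambiguity. It then remains only to note that the construction isolates $K_2$ as the sole cause of the inefficiency---with $K_2 = 0$ the same instance would return the Pareto-efficient normalized-median outcome $(0.7, 0.3)$---which justifies phrasing the result as a corollary attributing the failure specifically to the quorum constraint $K_2$.
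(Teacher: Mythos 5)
Your proposal is correct and takes essentially the same route as the paper: the paper justifies this corollary implicitly by inheritance from the R3 quorum counterexample (a single voter at $[0.7,0.3]$ with quorum $0.4$ forced to $[1,0]$), which is exactly the instance you reuse. Your version is in fact more careful than the paper's---deactivating the cap $K_1$, checking that the redistribution and normalization steps keep the eliminated project at zero, and noting that $K_2=0$ restores efficiency---but the underlying argument is the same.
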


\subsection{Monotonicity}

\begin{theorem}\label{theorem:mean_monotonicity}
The Mean Rule satisfies monotonicity.
\end{theorem}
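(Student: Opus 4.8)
The plan is to work directly from the closed form of the Mean Rule and reduce the claim to one elementary inequality about fractions. First I would fix the project $p$ whose support is raised and adopt the standard reading of the monotonicity definition: each voter weakly increases its $p$-entry to $x'_{i,p} = x_{i,p} + \delta_i$ with $\delta_i \geq 0$, while leaving all other coordinates untouched. Writing $S_p = \sum_{i \in N} x_{i,p}$, $T = \sum_{i \in N}\sum_{p' \in P} x_{i,p'}$, and $\Delta = \sum_{i \in N} \delta_i \geq 0$, the two relevant allocations are $a_p = S_p / T$ and $a_p' = (S_p + \Delta)/(T + \Delta)$, since only the $p$-column changes, which raises the grand total by exactly $\Delta$.

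Next I would observe that $a_p' \geq a_p$ is equivalent, after cross-multiplying by the positive denominators $T$ and $T + \Delta$, to $(T - S_p)\,\Delta \geq 0$. This holds because $T = \sum_{p' \in P} S_{p'} \geq S_p$, so the grand total dominates any single column, and $\Delta \geq 0$. Conceptually this is just the mediant property: one always has $a_p \leq 1$, and adding the same nonnegative amount $\Delta$ to both numerator and denominator of a fraction bounded by $1$ can only push it toward $1$, hence weakly upward. This completes the argument once the bookkeeping is in place.

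The only point with any content — and thus the main obstacle to a naive proof — is that raising $x_{i,p}$ also raises the normalizer $T$, so monotonicity does \emph{not} follow from the numerator alone; the work is in verifying that the denominator grows slowly enough, which is precisely what $(T - S_p)\Delta \geq 0$ certifies. I would also flag the boundary cases: when $T = S_p$ (project $p$ is the only one receiving tokens) or $\Delta = 0$, the inequality becomes an equality, consistent with the weak form of monotonicity, and $T = nc > 0$ rules out a vanishing denominator. Finally, I would remark that if one instead reads the definition as requiring each ballot to keep summing to $c$ (so that the gain in $x_{i,p}$ is offset in other coordinates), the proof is even shorter, since then $T = nc$ is fixed and $S_p$ weakly increases, making $a_p = S_p / T$ weakly increase directly; monotonicity therefore holds under either reading.
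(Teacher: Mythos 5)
Your proof is correct and takes essentially the same approach as the paper: the paper's (much terser) proof also argues directly from the closed form that the numerator increases while the denominator ``grows less significantly.'' Your cross-multiplication step $(T - S_p)\,\Delta \geq 0$ and the mediant observation simply make rigorous what the paper leaves informal, and your treatment of the fixed-budget reading ($T = nc$) is a sensible extra that the paper does not address.
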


\begin{proof}
Increasing a voter's allocation to project \( p \) increases the numerator in:
\[
a_p = \frac{\sum_{i \in N} x_{i,p}}{\sum_{i \in N} \sum_{p' \in P} x_{i,p'}}
\]
while the denominator grows less significantly, ensuring \( a_p \) does not decrease.
\end{proof}

\begin{theorem}\label{theorem:median_monotonicity}
The Normalized Median Rule satisfies monotonicity.
\end{theorem}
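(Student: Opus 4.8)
The plan is to decompose the argument into two monotonicity facts: one about order statistics and one about the normalization ratio. Recall that the Normalized Median Rule (the capping- and quorum-free specialization of R3/R4) sets, for each project $p$,
\[
a_p = \frac{M_p}{\sum_{p' \in P} M_{p'}}, \qquad M_p := \operatorname{med}\{x_{1,p}, \dots, x_{n,p}\}.
\]
Fix a project $p$ and let $V'$ be obtained from $V$ by weakly increasing every voter's support for $p$, i.e. $x'_{i,p} \ge x_{i,p}$ for all $i$. First I would establish the elementary order-statistic fact that the median is coordinatewise monotone: sorting the multisets $\{x'_{i,p}\}_i$ and $\{x_{i,p}\}_i$ and comparing them entrywise shows that every order statistic---in particular the median $M_p$---can only weakly increase, so $M'_p \ge M_p$.

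Second, I would control the denominator. Writing $S := \sum_{p'} M_{p'}$ and $C := \sum_{p' \ne p} M_{p'} = S - M_p$ for the contribution of the other projects, the allocation to $p$ is $M_p / (M_p + C)$. The key observation is that the map $t \mapsto t / (t + C)$ is nondecreasing in $t \ge 0$ for any fixed $C \ge 0$, since its derivative is $C/(t+C)^2 \ge 0$; equivalently, cross-multiplying $(M_p + \delta)/(M_p + \delta + C) \ge M_p/(M_p + C)$ reduces to $\delta C \ge 0$. Thus it suffices to argue that the other projects' total median mass does not increase, i.e. $C' \le C$. Under the operative feasibility convention $\sum_{p'} x_{i,p'} = c$, raising each $x_{i,p}$ forces the remaining entries of each ballot weakly down, so by the same order-statistic fact every $M_{p'}$ with $p' \ne p$ weakly decreases and $C' \le C$; combining $M'_p \ge M_p$ with $C' \le C$ yields $a'_p \ge a_p$. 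If instead one reads the monotonicity condition without re-imposing the budget constraint (so the other ballot entries are unchanged and $C' = C$), the ratio-monotonicity observation alone already gives $a'_p = M'_p/(M'_p + C) \ge M_p/(M_p + C) = a_p$.

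The main obstacle is precisely this interaction through the shared denominator: monotonicity of the raw median $M_p$ is immediate, but the rule outputs a \emph{normalized} value, and a naive bound would leave open whether a simultaneous increase of numerator and denominator could lower the quotient. The two-line ratio lemma $t \mapsto t/(t+C)$ resolves this, so the only point that needs care is confirming that, in whichever feasibility convention is in force, the competing projects' medians cannot strictly increase when $p$'s support is raised. A secondary subtlety worth flagging is that the median here must be taken over all $n$ ballots, including zero entries: were it taken only over strictly positive entries, monotonicity could fail, since promoting a voter from $0$ to a small positive value can drag the median of $p$ \emph{down}. The proof therefore relies on the full-support reading of the median.
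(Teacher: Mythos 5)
Your core route is the same as the paper's---coordinatewise monotonicity of the median as an order statistic---but your write-up is substantially more complete. The paper's entire proof reads: ``Increasing $x_{i,p}$ either shifts the median upward or leaves it unchanged. Thus, $a_p$ does not decrease,'' which silently passes from monotonicity of the raw median $M_p$ to monotonicity of the normalized quotient $M_p/\sum_{p'} M_{p'}$. The ratio lemma you isolate ($t \mapsto t/(t+C)$ is nondecreasing for fixed $C \ge 0$) is exactly the step the paper omits, and your closing remark---that the median must be taken over all $n$ entries including zeros, since the positive-entry median used in R3 can drop when a voter moves from $0$ to a small positive value---is a genuine observation the paper never makes.

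However, one step in your budget-constrained branch is false as stated. Feasibility $\sum_{p'} x_{i,p'} = c$ does \emph{not} force the remaining entries of each ballot weakly down when $x_{i,p}$ rises; it only forces their \emph{sum} down, and for $m \ge 3$ the freed mass can be rearranged so that some competing entries strictly increase. Competing medians can then rise, and under this permissive reading monotonicity genuinely fails. Concretely, take $n = 3$, $m = 4$, $c = 1$: before, voter $i$ gives $0.2$ to $p$ and $0.8$ to the $i$-th of the three other projects, so every competing median is $0$ and $a_p = 0.2/0.2 = 1$; after, every voter gives $0.21$ to $p$ and $0.79/3$ to each other project, so the medians become $(0.21, 0.79/3, 0.79/3, 0.79/3)$ and $a_p = 0.21$---a strict decrease even though every voter raised their support for $p$. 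So the theorem is only safe under the ceteris-paribus reading (competing entries unchanged, or at least assumed not to increase), which is precisely your second branch, where your ratio-lemma argument is complete; the budget branch should either state that assumption explicitly or be dropped. To be fair, the paper's one-line proof ignores this interaction entirely, so the ambiguity originates in the paper's definition rather than in your argument.
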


\begin{proof}
Increasing \( x_{i,p} \) either shifts the median upward or leaves it unchanged. Thus, \( a_p \) does not decrease.
\end{proof}

\begin{theorem}\label{theorem:midpoint_monotonicity}
The Midpoint Rule satisfies monotonicity.
\end{theorem}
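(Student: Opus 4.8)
The plan is to use the fact that the Midpoint Rule returns an entire voter ballot, the medoid $X_{i^*}$ minimizing the aggregate cost $C(i) := \sum_{j \in N} \|X_i - X_j\|_1$, so the allocation to $p$ equals the single coordinate $x_{i^*,p}$. Since $\ell_1$ separates over coordinates, I would write $C(i) = \sum_{p' \in P} C_{p'}(i)$ with $C_{p'}(i) = \sum_{j} |x_{i,p'} - x_{j,p'}|$, and recall the guiding intuition from the unconstrained problem: the coordinatewise $\ell_1$-minimizer places coordinate $p'$ at the median of $\{x_{j,p'}\}_j$, and the median is monotone, so increasing every voter's $p$-support can only raise that median. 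The entire difficulty is that the Midpoint Rule is restricted to the input ballots, so its $p$-coordinate is coupled to all other coordinates and does not decompose; establishing monotonicity therefore amounts to showing this coupling never inverts the direction of the median's movement.

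Next I would reduce the simultaneous hypothesis $x'_{i,p} \geq x_{i,p}$ for all $i$ to a finite sequence of \emph{atomic moves}, each transferring an infinitesimal mass $\varepsilon$ from some project $q$ into $p$ within a single voter $i_0$'s ballot, and argue that monotonicity is closed under composition of such moves. For one atomic move only $X_{i_0}$ changes, so the cost update is explicit: for every candidate $k \neq i_0$,
\[
\Delta C(k) = \bigl(|x_{k,p} - x_{i_0,p} - \varepsilon| - |x_{k,p} - x_{i_0,p}|\bigr) + \bigl(|x_{k,q} - x_{i_0,q} + \varepsilon| - |x_{k,q} - x_{i_0,q}|\bigr).
\]
The first bracket equals $-\varepsilon$ for ballots lying above $x_{i_0,p}$ and $+\varepsilon$ for ballots below it, so as a function of $x_{k,p}$ it is nonincreasing; this is the precise sense in which moving mass toward $p$ rewards high-$p$ ballots. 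I would first dispatch the case where the medoid is unchanged: if $i^* = i_0$ its $p$-coordinate rises by $\varepsilon$, and if $i^* \neq i_0$ the returned ballot is untouched, so in both subcases $f(V',p) \geq f(V,p)$.

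The substantive case is a change of medoid from $i^*$ to some $i^{**}$ with, hypothetically, $x_{i^{**},p} < x_{i^*,p}$; the goal is to derive a contradiction. The $p$-bracket already favors $i^*$, giving a nonnegative advantage from that coordinate. The crux---and the main obstacle---is the second bracket: the vacated coordinate $q$ rewards ballots lying \emph{below} $x_{i_0,q}$, and since the simplex constraint leaves a low-$p$ ballot free to sit low in $q$ as well (once $m \geq 3$), a challenger $i^{**}$ can in principle recoup through $q$ the cost that the $p$-coordinate denied it. I would attack this by combining the two brackets into a single monotone comparison and bounding the achievable $q$-gain against the $p$-loss, using that the pre-move optimality $C(i^*) \leq C(i^{**})$ must be overcome by a swing of at most $2\varepsilon$ per move; the plan is to show such a swing is never large enough to promote a strictly-lower-$p$ ballot past $X_{i^*}$ without some equal-or-higher-$p$ ballot being promoted first.

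As a consistency check I would verify the clean special case of a uniform additive shift of coordinate $p$ with a common compensating coordinate: there all pairwise $\ell_1$ distances are invariant, the medoid is fixed, and $x_{i^*,p}$ rises by exactly the shift, so the inequality holds with the identity of $i^*$ preserved. I expect the decisive step to be controlling the vacated-coordinate contribution in the general, non-uniform case; if the direct sign argument proves too coarse at coordinate ties, the fallback is to perturb the ballots into general position, prove strict monotonicity of $x_{i^*,p}$ there, and recover the weak inequality by continuity of each $C(i)$ in the ballots.
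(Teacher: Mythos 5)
Your plan correctly isolates the hard case---a change of medoid after mass is transferred into $p$---but the decisive step you defer (``such a swing is never large enough to promote a strictly-lower-$p$ ballot'') cannot be carried out, because under the paper's own definitions the claim is false. The case your analysis actually misses is $i^{**}=i_0$: your increment formula $\Delta C(k)$ is stated only for $k\neq i_0$, and you treat $i_0$ only in the easy ``medoid unchanged'' branch. When $i_0$ shifts mass $\varepsilon$ from $q$ into $p$, its \emph{own} cost $C(i_0)$ can fall by as much as $2\varepsilon(n-1)$, since $i_0$ may move closer to every other ballot in both coordinates simultaneously, whereas each rival candidate's cost changes by at most $2\varepsilon$. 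So the mover itself can leapfrog into the medoid position while its $p$-coordinate is still far below the incumbent medoid's, and the allocation to $p$ collapses.

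Concretely, take three voters and three projects with coordinates $(p,q,r)$: $X_1=(0.5,0.48,0.02)$, $X_2=(0.4,0,0.6)$, $X_3=(0,0.6,0.4)$. The pairwise distances are $\|X_1-X_2\|_1=1.16$, $\|X_1-X_3\|_1=1.00$, $\|X_2-X_3\|_1=1.20$, so $C(1)=2.16$, $C(2)=2.36$, $C(3)=2.20$; the rule outputs $X_1$ and $p$ receives $0.5$. Now let only voter $3$ raise her support for $p$, giving $X_3'=(0.05,0.55,0.4)$, so $x'_{i,p}\geq x_{i,p}$ for all $i$ as the monotonicity hypothesis requires. Then $\|X_1-X_3'\|_1=0.90$ and $\|X_2-X_3'\|_1=1.10$, hence $C'(1)=2.06$, $C'(2)=2.26$, $C'(3)=2.00$: the medoid is now $X_3'$ and $p$'s allocation drops from $0.5$ to $0.05$. (If one reads the definition as requiring a strict increase for every voter, shift an additional $0.001$ into $p$ for voters $1$ and $2$; the slack of $0.06$ in the cost comparison absorbs this.) For what it is worth, the paper's own one-line argument---that increasing one's allocation to $p$ reduces one's distance to the Midpoint outcome---never confronts the possibility that the returned ballot changes identity, which is exactly where the rule fails; your instinct that this coupling was ``the entire difficulty'' was right, but its correct resolution is a counterexample, not a proof.
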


\begin{proof}
If a voter increases their allocation to \( p \), it reduces their \( \ell_1 \)-distance to the Midpoint outcome, preventing a worse allocation.
\end{proof}

\begin{corollary}\label{corollary:quadratic_monotonicity}
The Quadratic Rule satisfies monotonicity.
\end{corollary}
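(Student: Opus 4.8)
The plan is to reduce the claim to the monotonicity of the Mean Rule (Theorem~\ref{theorem:mean_monotonicity}) via the square-root transformation already exploited for reinforcement. The central observation is that the Quadratic Rule is precisely the Mean Rule applied to the coordinate-wise transformed ballots \( y_{i,p} = \sqrt{x_{i,p}} \): writing \( Y_i = (y_{i,1}, \dots, y_{i,m}) \), one has
\[
a_p = \frac{\sum_{i \in N} y_{i,p}}{\sum_{i \in N} \sum_{p' \in P} y_{i,p'}},
\]
which is exactly the Mean-Rule allocation evaluated on the profile \( (Y_1, \dots, Y_n) \).

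First I would make this reduction explicit and record that the map \( x \mapsto \sqrt{x} \) is weakly increasing on \( [0, \infty) \). Second, I would check that the monotonicity hypothesis transports through the transformation: if \( V' \) increases support for project \( p \), so that \( x'_{i,p} \geq x_{i,p} \) for every voter \( i \) while the remaining entries are unchanged, then \( y'_{i,p} = \sqrt{x'_{i,p}} \geq \sqrt{x_{i,p}} = y_{i,p} \) for every \( i \), and \( y'_{i,p'} = y_{i,p'} \) for all \( p' \neq p \). Thus the transformed profile \( (Y_1', \dots, Y_n') \) is itself an increase-support-for-\( p \) perturbation in the sense required by the Mean-Rule monotonicity statement. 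Third, applying Theorem~\ref{theorem:mean_monotonicity} to the transformed profiles yields \( f(V', p) \geq f(V, p) \), which is the desired conclusion.

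The only point requiring care—and the step I would treat as the main obstacle—is verifying that the transformation preserves the monotonicity condition rather than distorting it. Because the square root is monotone increasing, increasing a single coordinate \( x_{i,p} \) increases exactly the corresponding transformed coordinate \( y_{i,p} \) and leaves the others fixed, so the perturbation remains a clean increase-support-for-\( p \) move and the reduction goes through without modification. Were one instead to impose the budget-balanced reading \( \sum_{p'} x_{i,p'} = c \), increasing \( x_{i,p} \) would force compensating decreases elsewhere; in that case I would fall back on the direct numerator-over-denominator argument used for the Mean Rule, noting that the numerator \( \sum_i y_{i,p} \) is bounded above by the denominator \( \sum_i \sum_{p'} y_{i,p'} \), so the ratio cannot decrease when the shared increment is nonnegative. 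Since \( \sqrt{\cdot} \) neither reorders nor introduces new dependencies among coordinates, no work beyond invoking Theorem~\ref{theorem:mean_monotonicity} is needed.
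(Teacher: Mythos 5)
Your proposal is correct and follows exactly the route the paper intends: the paper states this as an unproved corollary of Theorem~\ref{theorem:mean_monotonicity}, relying (as its reinforcement corollary makes explicit) on the observation that the Quadratic Rule is the Mean Rule applied to the square-root-transformed ballots, and monotonicity of \( x \mapsto \sqrt{x} \) carries the hypothesis through the transformation. Your write-up simply makes explicit the details the paper leaves implicit.
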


\begin{theorem}\label{theorem:quorum_monotonicity}
The R3 Quorum Median Rule does not satisfy monotonicity.
\end{theorem}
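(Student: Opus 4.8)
Since the statement is negative, the plan is to refute monotonicity by exhibiting a single profile pair $(V, V')$ in which $V'$ weakly increases every voter's support for one target project $p$, yet the allocation $a_p$ strictly decreases. The structural feature to exploit is that the Quorum Median Rule takes, for each project, the median of \emph{only the strictly positive} ballot entries, $\mathrm{median}\{x_{i,p} \mid x_{i,p} > 0\}$. Hence moving a voter who previously abstained on $p$ (entry $0$) to a small positive weight inserts a new low-valued element into the multiset over which the median is computed, and this can drag the median \emph{downward}, even though the perturbation is, by the paper's definition, an increase in support.

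Concretely, I would take two projects $p_1, p_2$, unit budget $c = 1$, and three voters. In the base profile $V$ I place two active ballots straddling a common median on $p_1$, namely $[0.3, 0.7]$ and $[0.7, 0.3]$, together with one abstaining ballot $[0, 1]$; then the positive multiset for $p_1$ is $\{0.3, 0.7\}$ with median $0.5$. In the perturbed profile $V'$ the abstaining voter instead submits $[0.1, 0.9]$, leaving the other two ballots untouched. This respects $\sum_p x'_{i,p} = 1$ and satisfies the monotonicity hypothesis $x'_{i,p_1} \ge x_{i,p_1}$ for every $i$ (equality for the two active voters, a strict increase for the third). The positive multiset for $p_1$ now becomes $\{0.1, 0.3, 0.7\}$, whose median is $0.3 < 0.5$, so the median for $p_1$ strictly decreases.

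It then remains to propagate this drop through the internal normalization and the quorum filters. I would set $q_1$ below the smaller normalized score and $q_2 = 1$, so that no project is zeroed out; in that regime $\sum_{p'} b_{p'} = 1$ and $a_{p_1} = b_{p_1}$, reducing the final allocation to the normalized median, which falls (roughly $5/12$ to $0.3$). The main obstacle---and the only nonroutine point---is precisely this bookkeeping: one must verify that neither the normalization nor the thresholds $q_1, q_2$ accidentally restore the inequality, and that the conclusion is not an artifact of a tie at the median. Since the median drop is strict and bounded away from the original value, a small generic perturbation of the numerical entries removes any tie while preserving the strict decrease, so the counterexample is robust and establishes the failure of monotonicity.
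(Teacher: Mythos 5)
Your proposal is correct and uses exactly the same mechanism as the paper's own counterexample: a voter who previously abstained on the target project (and was therefore excluded from the positive-entry median) switches to a small positive vote, inserting a low element that drags that project's median—and hence its normalized allocation—strictly down, despite the change being a weak increase in support for every voter. The paper's example (voters $[0.5,0.5]$, $[0,1]$, $[1,0]$ with the abstainer moving to a small positive vote on the first project) differs only in the specific numbers, and it shares the same implicit reliance on the averaging convention for medians of even-sized multisets that your example does.
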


\begin{example}
Voters \( v_1, v_2, v_3 \) cast \([0.5,0.5]\), \([0,1]\), \([1,0]\), yielding \([0.5,0.5]\). If \( v_2 \) changes to \([0.1,0.4]\), the new allocation is \([0.4,0.6]\), violating monotonicity.
\end{example}

\begin{theorem}\label{theorem:R4_monotonicity}
The R4 Capped Median Rule satisfies monotonicity.
\end{theorem}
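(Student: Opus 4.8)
The plan is to decompose the R4 map into three stages---the base median allocation $c_p$, the $K_1$-capping with proportional excess redistribution giving $d_p$, and the $K_2$-elimination with proportional redistribution followed by normalization giving $a_p$---and to verify that each stage is weakly increasing in the support project $p$ receives. First I would establish that the base stage is monotone: increasing $x_{i,p}$ for every voter can only shift $\mathrm{median}\{x_{i,p}\}$ weakly upward, and the denominator $\sum_{p'}\mathrm{median}\{x_{i,p'}\}$ enters $c_p$ only through normalization, so the argument already used for the Normalized Median Rule (Theorem~\ref{theorem:median_monotonicity}) shows $c_p$ does not decrease. The key invariant to record here is that $\sum_{p'} c_{p'} = B$ identically, which I expect to be the main lever for the remaining stages.

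Next I would treat the capping stage $c_p \mapsto d_p = \min(c_p,K_1) + E\,c_p/B$, where $E = \sum_j \max(0,c_j-K_1)$ is the total excess and I have substituted $\sum_j c_j = B$. The clipped term $\min(c_p,K_1)$ is plainly nondecreasing, and I would show the redistribution term is nondecreasing in $c_p$ by checking the regimes $c_p \le K_1$ and $c_p > K_1$ separately and invoking continuity of $d_p$ at the breakpoint $c_p = K_1$. The constant-sum invariant is what makes this clean: because any increase in $c_p$ is matched by an equal aggregate decrease among the other coordinates, the pair $(E, c_p)$ moves in a way that can be bounded to force $d_p$ upward. I would also record that a short cancellation yields $\sum_{p'} d_{p'} = B$.

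Then I would handle the elimination-and-normalization stage, using the simplification that for a surviving project the final value reduces to $a_p = d_p / D_{\mathrm{surv}}$, where $D_{\mathrm{surv}} = \sum_{q:\, d_q \ge K_2} d_q$, and $a_p = 0$ otherwise. Holding competitors fixed, $a_p = d_p/D_{\mathrm{surv}}$ is increasing in $d_p$ precisely because $D_{\mathrm{surv}} \ge d_p$---the same ``denominator grows less than the numerator'' phenomenon invoked for the Mean Rule. Crossing the $K_2$ threshold from below only switches $a_p$ from $0$ to a positive value, and any elimination of a competing project triggered by the reshuffling only shrinks $D_{\mathrm{surv}}$; both effects help $p$, so I would argue these discontinuities are consistent with, rather than against, monotonicity.

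The step I expect to be the main obstacle is controlling the global coupling: increasing $x_{i,p}$ for all voters pins down only that $c_p$ rises and that $\sum_{p'} c_{p'} = B$, but leaves the individual competing coordinates free to move in either direction, so the excess pool $E$, the survivor set, and the normalizer $D_{\mathrm{surv}}$ can shift simultaneously. The crux is that the two invariants $\sum_{p'} c_{p'} = \sum_{p'} d_{p'} = B$ bound how much mass can migrate, preventing the redistribution pools or a newly surviving competitor from overwhelming the direct increase in $d_p$; making this bound precise across all configurations of threshold crossings is where the real work lies, and it is exactly the rigidity absent from R3's hard quorum, which explains why R3 fails monotonicity while R4's value-threshold-with-redistribution design preserves it.
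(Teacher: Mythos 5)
Your decomposition into the stages $c_p \mapsto d_p \mapsto a_p$, the invariants $\sum_{p'} c_{p'} = \sum_{p'} d_{p'} = B$, and the reduction $a_p = d_p / D_{\mathrm{surv}}$ for surviving projects are all correct, and they are in fact sharper than what the paper records: the paper's own proof is a terse four-case verbal argument (project below the cap, at or above the cap, below the quorum, a competitor falling below the quorum) that directly asserts extra support cannot hurt $p$. But unlike the paper, you never reach a conclusion. Your final paragraph concedes that controlling the simultaneous movement of the excess pool $E$, the survivor set, and the normalizer $D_{\mathrm{surv}}$ ``is where the real work lies,'' and the proposal never does that work; stage 3 is only established \emph{holding competitors fixed}, which is exactly the hypothesis that fails under the monotonicity perturbation. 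That is a genuine gap, not a presentational one.

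Two concrete points show the gap is substantive. First, a termwise claim in stage 2 is false: in the regime $c_p \le K_1$ the redistribution term $E c_p / B$ alone can strictly decrease (with $B=1$, $K_1 = 0.5$, $c = (0.2, 0.6, 0.2)$, so $E = 0.1$ and $E c_p = 0.02$; scaling competitors by $\theta = 0.9375$ gives $c' = (0.25, 0.5625, 0.1875)$, $E' = 0.0625$, $E' c_p' \approx 0.0156$); only the sum $\min(c_p, K_1) + E c_p / B$ is monotone, via a bound like $d_p' - d_p \ge \delta\,(B - K_1 + E)/B \ge 0$. Second, and more seriously, in stage 3 you list only the favorable discontinuities ($p$ crossing $K_2$, competitors being eliminated) and never rule out the unfavorable ones: a competitor entering the survivor set, or a surviving competitor's $d_q$ growing. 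The latter actually happens: with $B=1$, $K_1 = 0.3$, $c_p = c_q = 0.31$ and small projects totaling $0.38$, pushing $c_p$ to $0.3169$ (competitors scaled by $0.99$) raises $E$ from $0.02$ to $0.0238$ and raises the competitor's value $d_q$ from $0.3062$ to about $0.3073$. So ``all reshuffling helps $p$'' is false, and what must really be proved is the ratio inequality $d_p' \cdot \sum_{q \in S} d_q \;\ge\; d_p \cdot \sum_{q \in S'} d_q'$ over surviving competitors, together with a lemma pinning down the perturbation semantics (competitors' medians unchanged, hence their shares $c_q$ scale down uniformly) and showing that below-cap competitors' $d_q$ weakly fall, so no eliminated project re-enters the survivor set. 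Until those steps are supplied, what you have is a well-organized plan whose decisive step is missing.
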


\begin{proof}
If a project's allocation is below the cap \( K_1 \), additional votes increase \( a_p \). If at or above \( K_1 \), it remains unchanged. If below quorum \( K_2 \), additional votes push \( a_p \) above quorum. If a project falls below quorum, fewer competing projects benefit from redistribution. Thus, monotonicity holds.
\end{proof}

\subsection{Participation}

\begin{theorem}\label{theorem:mean_participation}
The Mean Rule satisfies participation.
\end{theorem}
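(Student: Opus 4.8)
The plan is to exploit the linearity of the Mean Rule by rewriting the outcome as an arithmetic mean of normalized ballots and showing that a voter's own ballot pulls this mean toward their ideal point. First I would normalize each ballot to the simplex by setting $\hat{X}_j = X_j / c$, so that $\sum_{p \in P} \hat{X}_{j,p} = 1$ and the Mean Rule outputs exactly the arithmetic mean $\frac{1}{n}\sum_{j} \hat{X}_j$ of these normalized ballots (the denominator $nc$ in the rule cancels against the per-ballot rescaling). Under the $\ell_1$ utility model of Section~\ref{section:metrics}, voter $i$'s utility for an allocation $\mathbf{z}$ is $u_i(\mathbf{z}) = -\|\mathbf{z} - \hat{X}_i\|_1$, so participation reduces to showing that the outcome with $i$ present is no farther in $\ell_1$ from $\hat{X}_i$ than the outcome with $i$ absent.

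Next I would write the two outcomes explicitly. Let $\mathbf{x} = \frac{1}{n-1}\sum_{j \neq i}\hat{X}_j$ be the allocation when $i$ abstains (taking $n \ge 2$ so this is well defined), and $\mathbf{x}' = \frac{1}{n}\sum_{j}\hat{X}_j$ the allocation when $i$ votes. A one-line manipulation yields the convex-combination identity
\[
\mathbf{x}' = \frac{n-1}{n}\,\mathbf{x} + \frac{1}{n}\,\hat{X}_i,
\]
so that $\mathbf{x}' - \hat{X}_i = \frac{n-1}{n}(\mathbf{x} - \hat{X}_i)$. Since the scalar $\frac{n-1}{n}$ is common to every coordinate, it factors out of the norm, giving
\[
\|\mathbf{x}' - \hat{X}_i\|_1 = \frac{n-1}{n}\,\|\mathbf{x} - \hat{X}_i\|_1 \le \|\mathbf{x} - \hat{X}_i\|_1.
\]
Hence $u_i(\mathbf{x}') \ge u_i(\mathbf{x})$ for every voter $i$, which is precisely the participation condition.

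The computation is short, so the real care lies in the setup. The step I expect to require the most attention is the normalization bookkeeping: ballots sum to $c$ while allocations sum to the budget $B = 1$, so I must state explicitly that the ideal point against which utility is measured is the normalized ballot $\hat{X}_i$, and that the Mean Rule's denominator is exactly what rescales the summed ballots back onto the simplex. The only genuine edge case is $n = 1$, where the abstention outcome is a $0/0$ form and participation is vacuous; I would either restrict the statement to $n \ge 2$ or fix a convention and note the inequality holds trivially. Finally, it is worth emphasizing that the factoring of $\frac{n-1}{n}$ is norm-agnostic, so the same argument establishes participation under any metric, including $\ell_2$, used in the utility model.
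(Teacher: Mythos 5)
Your proof is correct, and it is substantially stronger than the argument the paper itself gives. The paper's proof of this theorem is a one-sentence informal claim---that removing a voter's influence ``generally'' results in a worse outcome for them---which asserts the conclusion rather than deriving it. Your route supplies the missing mathematical content: after observing that the Mean Rule (with ballots normalized by $c$) is exactly the arithmetic mean of the normalized ballots, you write the with-participation outcome as the convex combination $\mathbf{x}' = \frac{n-1}{n}\,\mathbf{x} + \frac{1}{n}\,\hat{X}_i$, from which $\mathbf{x}' - \hat{X}_i = \frac{n-1}{n}(\mathbf{x} - \hat{X}_i)$ and the $\ell_1$ distance shrinks by the factor $\frac{n-1}{n}$. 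This identity-plus-homogeneity argument is airtight, handles the normalization bookkeeping the paper glosses over (ballots sum to $c$, allocations to $B=1$), flags the $n=1$ degenerate case the paper ignores, and---as you note---works verbatim for any norm, so it also yields participation under $\ell_2$ utilities. In short, where the paper gestures at the intuition, your proof pins down the precise mechanism (the outcome with voter $i$ present is a contraction of the outcome without them, centered at $\hat{X}_i$), and in fact establishes the quantitatively stronger statement that participating improves a voter's utility by the factor $\frac{n-1}{n}$ exactly, not merely weakly.
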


\begin{proof}
If voter \( i \) abstains, their influence on allocations is removed, generally resulting in a worse outcome for them, as their preferred projects may receive less funding. Thus, abstaining never increases utility.
\end{proof}

\begin{theorem}\label{theorem:median_participation}
The Normalized Median Rule satisfies participation.
\end{theorem}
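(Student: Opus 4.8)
The plan is to follow the pattern of the preceding participation proofs but to make the $\ell_1$ comparison fully explicit. I would write the voter's utility as $u_i(\mathbf{x}) = -\|\mathbf{x} - X_i\|_1 = -\sum_{p \in P} |a_p - x_{i,p}|$, let $\mathbf{a}$ denote the Normalized Median outcome on the profile with voter $i$ abstaining and $\mathbf{a}'$ the outcome once $X_i$ is included, and aim to show $u_i(\mathbf{a}') \ge u_i(\mathbf{a})$, i.e.\ that inserting $X_i$ weakly reduces the voter's aggregate deviation. Writing the unnormalized per-project medians as $m_p = \mathrm{med}\{x_{j,p} : j \neq i\}$ and $m'_p = \mathrm{med}(\{x_{j,p} : j \neq i\} \cup \{x_{i,p}\})$, the two outcomes are the renormalizations $a_p = m_p / \sum_{p'} m_{p'}$ and $a'_p = m'_p / \sum_{p'} m'_{p'}$.

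First I would establish a single-coordinate lemma: inserting one value into a list shifts its median weakly toward that value, so $m'_p$ lies weakly between $m_p$ and $x_{i,p}$, and hence $|m'_p - x_{i,p}| \le |m_p - x_{i,p}|$ for every project $p$. This holds under any fixed convention for the median of an even-sized list (lower, upper, or midpoint), which matters here because exactly one of the two profiles has an even number of ballots. At the unnormalized level this already yields the desired coordinatewise improvement, and it is essentially this observation on which I expect an informal argument — in the style of the Mean-Rule proof above — to rest.

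The step I expect to be the main obstacle is the normalization, which couples the coordinates: dividing by $S' = \sum_{p'} m'_{p'}$ can move a normalized entry $a'_p$ \emph{away} from $x_{i,p}$ even when the raw median $m'_p$ has moved toward it. In fact the abstract statement ``move each coordinate toward a target that lies on the simplex, then renormalize'' is \emph{not} an $\ell_1$-contraction in general, so the coordinatewise bound alone does not close the argument. What rescues the theorem is that the profile is not arbitrary: every ballot, $X_i$ included, satisfies $\sum_p x_{j,p} = 1$, and this simplex constraint sharply limits which joint configurations of raw medians can co-occur, ruling out exactly the pathological moves that would break the contraction. I would therefore prove the claim by working with this structure directly: track how the normalizer $S'$ responds to the insertion, partition $P$ into the projects on which $x_{i,p}$ exceeds the current normalized allocation and those on which it falls below, and show that renormalization inflates precisely the projects the voter favors while deflating those it disfavors. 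The crux will be verifying that this induced redistribution never overshoots $X_i$ in aggregate; a clean finish would isolate this as a lemma asserting that, for profiles confined to the simplex, a single coordinatewise-median insertion followed by renormalization is a weak $\ell_1$-contraction toward the inserted ballot.
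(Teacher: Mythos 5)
Your write-up is a plan rather than a proof: the decisive step --- that for profiles confined to the simplex, inserting one ballot into the coordinatewise medians and then renormalizing is a weak $\ell_1$-contraction toward that ballot --- is exactly the statement you defer to ``a lemma'' in your last sentence, and nothing in the proposal establishes it. Your single-coordinate lemma (each raw median $m'_p$ lies weakly between $m_p$ and $x_{i,p}$) is correct, and so is your warning that this alone does not survive renormalization: for instance $m=(0.5,0.1)$, target $x=(0.7,0.3)$, $m'=(0.7,0.1)$ satisfies betweenness and has the target on the simplex, yet the normalized first coordinate moves from $5/6$ to $7/8$, away from $0.7$, and the $\ell_1$ distance grows from about $0.267$ to $0.35$. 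But precisely because such configurations exist, the entire burden of the theorem is to show they cannot arise as medians of profiles in which \emph{every} ballot sums to one, and that argument is absent. Worse, the intermediate claim you propose to prove --- that ``renormalization inflates precisely the projects the voter favors while deflating those it disfavors'' --- is false as stated: whether a coordinate is inflated or deflated depends only on the sign of $S'-S$ and on whether that coordinate's own median moved, not on the voter's preference. When $S'<S$, a project whose raw median is untouched and whose allocation already exceeds $x_{i,p}$ is inflated further away from the voter; any correct argument must therefore couple the per-coordinate movements $m'_p-m_p$ to the change in the normalizer via the simplex constraint, which your sketch gestures at but never carries out.

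For comparison, the paper's own proof consists solely of your first step: it argues coordinatewise on the raw medians (``if $x_{i,p}$ is below the median, abstaining does not improve it; if above, abstaining reduces the median allocation'') and never mentions normalization at all. So you have correctly diagnosed the hole in the published argument --- the cross-coordinate coupling introduced by dividing by $\sum_{p'} m'_{p'}$, which is the very mechanism behind the paper's own counterexample to strategyproofness of this rule in Theorem~\ref{theorem:median_strategyproofness}. But diagnosis is not repair: as it stands, your proposal, like the paper's proof, leaves the participation claim without a complete argument.
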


\begin{proof}
Abstaining shifts the median in a way that can only worsen or maintain the voter’s preferred outcome. If \( x_{i,p} \) is below the median, abstaining does not improve it; if above, abstaining reduces the median allocation.
\end{proof}

\begin{theorem}\label{theorem:midpoint_participation}
The Midpoint Rule satisfies participation.
\end{theorem}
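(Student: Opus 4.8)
The plan is to work directly with the optimization that defines the Midpoint Rule and to exploit the additive way in which voter $i$'s ballot enters the objective. I would write $C(j) = \sum_{k \in N} \|X_j - X_k\|_1$ for the total $\ell_1$ cost of candidate ballot $X_j$ when $i$ participates, and $C'(j) = \sum_{k \in N \setminus \{i\}} \|X_j - X_k\|_1$ for the cost when $i$ abstains. The single structural observation driving the argument is the decomposition $C(j) = C'(j) + \|X_j - X_i\|_1$, valid for every candidate $j$, since adding voter $i$ merely appends one distance term to the sum.

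First I would set $i^* = \arg\min_{j \in N} C(j)$ as the winner with participation (so $\mathbf{x}' = X_{i^*}$) and $j^* = \arg\min_{j \in N \setminus \{i\}} C'(j)$ as the winner without (so $\mathbf{x} = X_{j^*}$). Because utility is a decreasing function of the $\ell_1$ distance to the voter's ideal point, participation is equivalent to the inequality $\|X_{i^*} - X_i\|_1 \leq \|X_{j^*} - X_i\|_1$, which is what I would establish.

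I would then split on whether the added voter wins. If $i^* = i$, then $\mathbf{x}' = X_i$ and the distance is $0$, so participation holds trivially. Otherwise $i^* \in N \setminus \{i\}$, so $i^*$ is also an admissible candidate in the abstaining profile, giving $C'(j^*) \leq C'(i^*)$ by optimality of $j^*$. Combining this with $C(i^*) \leq C(j^*)$ (optimality of $i^*$ in the full profile) and the decomposition yields $C'(i^*) + \|X_{i^*} - X_i\|_1 \leq C'(j^*) + \|X_{j^*} - X_i\|_1 \leq C'(i^*) + \|X_{j^*} - X_i\|_1$, and cancelling $C'(i^*)$ gives exactly the desired bound.

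The main obstacle, and really the only point requiring care, is the case $i^* \neq i$, where one must confirm that $i^*$ remains an admissible candidate after removing voter $i$ so that the optimality of $j^*$ can be applied to it; this is precisely what licenses the chained inequality. I would also remark that the argument invokes only the minimal cost values $C(i^*), C(j^*), C'(i^*), C'(j^*)$, so it is insensitive to how ties in the $\arg\min$ are resolved, making the result robust to any fixed tie-breaking convention.
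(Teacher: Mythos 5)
Your proof is correct, and it is genuinely more substantive than the paper's own argument. The paper makes the same initial case split (if your ballot is selected as the midpoint, participation trivially benefits you), but for the remaining case it simply defers to its Mean Rule participation proof, claiming that ``participation reduces total $\ell_1$-distance''---an appeal that does not really transfer, since the Midpoint Rule is a discrete selection among submitted ballots rather than a linear aggregation, and the Mean Rule argument (about preferred projects receiving less funding) says nothing about which \emph{ballot} becomes the medoid. Your decomposition $C(j) = C'(j) + \|X_j - X_i\|_1$, together with the two optimality inequalities $C(i^*) \leq C(j^*)$ and $C'(j^*) \leq C'(i^*)$ and the cancellation of $C'(i^*)$, supplies exactly the missing step: it shows directly that the winner after voter $i$ joins can only be weakly closer to $X_i$ than the winner before. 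You are also right to flag admissibility of $i^*$ in the reduced profile as the one point needing care, and your observation about tie-breaking robustness is a genuine added value, since the paper's definition of the rule leaves ties unresolved. In short: the paper offers a two-line sketch whose second half is dubious; your argument is the self-contained, rigorous version of what that sketch presumably intends.
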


\begin{proof}
If the voter’s ballot is selected as the midpoint, participation directly benefits them. Otherwise, it follows from the Mean Rule proof, as their participation reduces total \( \ell_1 \)-distance.
\end{proof}

\begin{example}\label{example:quadratic_participation}
The Quadratic Rule does not satisfy participation.

A voter with \( [0.8, 0.2] \) in a setting where the initial allocation is also \( [0.8, 0.2] \) may, by participating, shift the outcome away due to the square root transformation.
\end{example}

\begin{example}\label{example:quorum_participation}
The R3 Quorum Median Rule does not satisfy participation.

A project \( p_2 \) is close to funding but does not meet quorum. If voter \( v \) adds a small vote \( [1-\epsilon, \epsilon] \), \( p_2 \) suddenly receives significant funding, potentially harming \( v \).
\end{example}

\begin{corollary}\label{example:R4_participation}
The R4 Capped Median Rule does not satisfy participation as it includes a Quorum \( K_2 \).
\end{corollary}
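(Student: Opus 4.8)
The plan is to reduce this to the quorum-induced participation failure already exhibited for the R3 Quorum Median Rule in Example~\ref{example:quorum_participation}. The core observation is that R4 retains a discontinuous elimination threshold: in the definition of \( b_p \), every project with \( d_p < K_2 \) is zeroed out and its mass redistributed among the surviving projects. This threshold behaves exactly like a quorum, and it is precisely such a quorum that breaks participation for R3. So I would argue that R4 inherits the same pathology rather than proving it from scratch.

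Concretely, I would first recall that participation requires \( u_i(\mathbf{x}') \geq u_i(\mathbf{x}) \), where utility is decreasing in the \( \ell_1 \) distance to the voter's ideal ballot, \( \mathbf{x} \) is the outcome when \( i \) abstains, and \( \mathbf{x}' \) is the outcome when \( i \) submits a ballot. The strategy is then to construct a profile in which some project \( p \) sits just below \( K_2 \) when \( i \) abstains, so that \( p \) is eliminated and receives allocation \( 0 \), while \( i \)'s ideal places no weight on \( p \). When \( i \) participates by adding even a small amount of support to \( p \), its normalized median crosses \( K_2 \); thus \( p \) survives the elimination step, becomes funded, and the subsequent redistribution and final normalization push the allocation away from \( i \)'s ideal, strictly increasing \( i \)'s \( \ell_1 \) distance and hence lowering utility. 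To keep the construction clean I would place the entire example in a regime where the cap \( K_1 \) does not bind, i.e.\ all capped-median values \( c_p \) lie below \( K_1 \), so that the \( \min(c_p, K_1) \) term and the excess-redistribution term are inactive and the only operative discontinuity is the \( K_2 \) elimination; in this regime R4 collapses to normalized-median-plus-quorum and the R3 counterexample transfers essentially verbatim.

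The main obstacle I anticipate is disentangling the two redistribution stages that distinguish R4 from R3: R4 both caps at \( K_1 \) and redistributes excess \emph{before} applying the \( K_2 \) cutoff, so I must verify that crossing \( K_2 \) genuinely harms the voter rather than being masked or reversed by the capping machinery. Restricting to the no-cap regime described above neutralizes the \( K_1 \) stage, after which the \( K_2 \) step is the sole source of the discontinuous jump and the harm to the voter is transparent. Once that regime is fixed, what remains is a routine verification paralleling Example~\ref{example:quorum_participation}, and the corollary follows.
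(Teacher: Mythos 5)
Your reduction to the R3 quorum failure is exactly the paper's reasoning: the corollary is stated there without any further proof, justified solely by the presence of the quorum \(K_2\), and your restriction to the no-cap regime (all \(c_p < K_1\), so the \(\min(c_p,K_1)\) and excess-redistribution terms are inert) makes that inheritance rigorous. One small repair: since participation compares abstention against submitting the \emph{truthful} ballot, you should give the voter a small \(\epsilon\) of support on \(p\) (as in the paper's R3 example with \([1-\epsilon,\epsilon]\)) rather than stipulating that their ideal places no weight on \(p\) while simultaneously having them ``add support'' to it when they participate.
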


\subsection{Proportionality}

\begin{theorem}\label{theorem:mean_proportionality}
The Mean Rule satisfies proportionality.
\end{theorem}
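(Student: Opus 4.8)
The plan is to directly compute the Mean Rule allocation $a_p$ under the hypothesis of the proportionality condition and verify the lower bound $a_p \geq B/k$. First I would simplify the denominator of the Mean Rule. Since every voter's ballot satisfies the full-budget assumption $\sum_{p' \in P} x_{i,p'} = c$, the double sum collapses: $\sum_{i \in N} \sum_{p' \in P} x_{i,p'} = \sum_{i \in N} c = nc$. This is the crucial normalizing observation, as it makes the denominator independent of how tokens are distributed across projects and equal to $nc$ regardless of the profile.

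Next I would lower-bound the numerator $\sum_{i \in N} x_{i,p}$. By hypothesis there is a subset $S \subseteq N$ with $|S| \geq n/k$ such that every $i \in S$ places their full budget on $p$, i.e. $x_{i,p} = c$ and $x_{i,p'} = 0$ for $p' \neq p$. Because all ballot entries are nonnegative, the contributions of the voters outside $S$ can only increase the numerator, so $\sum_{i \in N} x_{i,p} \geq \sum_{i \in S} x_{i,p} = |S| \cdot c \geq (n/k)\, c$.

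Combining the two steps yields $a_p = \frac{\sum_{i \in N} x_{i,p}}{nc} \geq \frac{(n/k)\, c}{nc} = \frac{1}{k} = \frac{B}{k}$, using the normalization $B = 1$. Since $p$ and $S$ were arbitrary subject to the stated conditions, this establishes proportionality for the Mean Rule.

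The argument is essentially a one-line computation, so I do not expect a substantive obstacle; the only point requiring care is the denominator simplification, which relies on the \emph{equality} form of the budget constraint $\sum_{p'} x_{i,p'} = c$ rather than the inequality $\leq c$. If that assumption were relaxed to an inequality, the clean cancellation to $nc$ would fail and the bound would instead have to be stated in terms of the actual total number of tokens cast, weakening the guarantee. Verifying that the full-utilization assumption from the formal model genuinely applies here is therefore the one step I would flag as load-bearing.
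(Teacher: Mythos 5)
Your proof is correct and takes essentially the same approach as the paper's: both rest on the observation that the Mean Rule's output for $p$ is exactly the share of total tokens placed on $p$, so a subset of at least $n/k$ voters each devoting their full $c$ tokens to $p$ forces $a_p \geq B/k$. The paper states this in two informal sentences (``the final allocation mirrors the proportion of tokens assigned''), whereas you carry out the computation explicitly --- including the load-bearing full-utilization assumption $\sum_{p'} x_{i,p'} = c$ that collapses the denominator to $nc$ --- so yours is, if anything, the more complete rendering of the same argument.
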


\begin{proof}
The Mean Rule aggregates votes by summing across all voters, ensuring each project receives funding proportional to its total allocation. Since the final allocation mirrors the proportion of tokens assigned, the rule satisfies proportionality.
\end{proof}

\begin{example}\label{example:median_proportionality}
The Normalized Median Rule does not satisfy proportionality.

Consider three voters with allocations \([1, 0]\), \([1, 0]\), and \([0, 1]\). The median allocation is \([1, 0]\), while proportionality requires \([2/3, 1/3]\), meaning the rule fails proportionality.
\end{example}

\begin{example}\label{example:midpoint_proportionality}
The Midpoint Rule does not satisfy proportionality.

With the same three voters as above, the rule selects \([1, 0]\), minimizing the \(\ell_1\) distance rather than providing \([2/3, 1/3]\), violating proportionality.
\end{example}

\begin{example}\label{example:quadratic_proportionality}
The Quadratic Rule does not satisfy proportionality.

For three voters voting \([9, 1]\), \([5, 5]\), and \([4, 6]\), the rule outputs \([5.6, 4.4]\) instead of the proportional outcome \([6, 4]\), failing proportionality.
\end{example}

\begin{example}\label{example:quorum_proportionality}
The R3 Quorum Median Rule does not satisfy proportionality.

With voters \([1, 0]\), \([1, 0]\), and \([0, 1]\), and a quorum of 2 voters with 1.1 tokens, the rule yields \([1, 0]\) instead of \([2/3, 1/3]\), violating proportionality.
\end{example}

\begin{corollary}\label{example:R4_proportionality}
The R4 Capped Median Rule does not satisfy proportionality as it includes a Quorum \( K3 \) and is median-based.
\end{corollary}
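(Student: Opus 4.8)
The plan is to prove the corollary by exhibiting a single counterexample, reusing the minority-suppression phenomenon already established for the R3 Quorum Median Rule in Example~\ref{example:quorum_proportionality}. The key observation is that the R4 Capped Median Rule inherits its core aggregation step from the median, and proportionality is violated \emph{before} the capping, redistribution, and quorum stages ever take effect. Concretely, I would take three voters with ballots $X_1 = (1,0)$, $X_2 = (1,0)$, and $X_3 = (0,1)$, and set $k = 3$ so that the singleton $S = \{3\}$ satisfies $|S| = 1 \geq n/k = 1$ while supporting $p_2$ exclusively. Proportionality then demands $a_{p_2} \geq B/k = B/3$, which I will contradict.

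First I would compute the unnormalized median scores. Since $p_2$ receives a positive vote from only one of three voters, its median across the profile is $\mathrm{median}\{0,0,1\} = 0$, whereas $p_1$ has median $\mathrm{median}\{1,1,0\} = 1$. Hence the initial capped-median score is $c_{p_2} = 0$. Next I would verify that the two downstream mechanisms cannot revive this project. In the redistribution step $d_p = \min(c_p, K_1) + \frac{\sum_j \max(0, c_j - K_1)\, c_p}{\sum_j c_j}$, the factor $c_{p_2} = 0$ forces both summands to vanish, so $d_{p_2} = 0$; and in the quorum-elimination step, $d_{p_2} = 0 < K_2$ triggers $b_{p_2} = 0$. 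After final normalization we still obtain $a_{p_2} = 0 < B/3$, contradicting proportionality.

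The main obstacle to guard against is the possibility that the capping or redistribution steps might inflate the minority project's share enough to satisfy the bound --- the corollary would be false if these corrective steps could restore proportionality. I would dispel this by noting that the redistribution of capped excess is allocated in proportion to each project's own score $c_p$, so a project with $c_{p_2} = 0$ is structurally excluded from any rebate and stays at $d_{p_2} = 0$; the quorum-elimination step is a pure thresholding that can only send additional projects to zero, never resurrect one. Thus the failure is intrinsic to the median aggregation, and the extra R4 machinery (the quorum in particular) only compounds it rather than repairing it, completing the argument.
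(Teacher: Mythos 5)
Your proposal is correct, and it is in fact more rigorous than what the paper provides: the paper states this result as a bare corollary, justified only by the remark that R4 ``includes a Quorum and is median-based,'' implicitly leaning on the preceding counterexamples for the Normalized Median Rule (Example~\ref{example:median_proportionality}) and the R3 Quorum Median Rule (Example~\ref{example:quorum_proportionality}). You take the same profile $(1,0), (1,0), (0,1)$ but actually push it through the full R4 pipeline, which is a genuinely worthwhile addition because R4 differs from R3 in a detail that matters here: R3 takes the median over \emph{positive} votes only (so $p_2$'s raw median would be $1$, and the failure comes from the voter quorum), whereas R4 takes the median over \emph{all} votes, so $\mathrm{median}\{0,0,1\}=0$ and $p_2$ is killed already at the aggregation stage --- your computation correctly uses the R4 convention. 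Your verification that neither the cap-and-redistribute step (rebates are proportional to $c_p$, so a zero score gets nothing) nor the quorum-elimination step (pure thresholding, never revives a project) can restore $a_{p_2}$ above $B/3$ closes exactly the gap the paper leaves open, namely that R4's extra machinery might conceivably repair the median's proportionality failure. The one cosmetic caveat is that your ballots sum to $1$ rather than to the per-voter token budget $c$ of the formal model, but the paper's own examples take the same normalization liberty, so this is consistent with the surrounding text.
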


\subsection{Strategyproofness}

\begin{theorem}\label{theorem:mean_strategyproofness}
The Mean Rule is not strategyproof.
\end{theorem}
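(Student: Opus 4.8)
The plan is to prove the statement by exhibiting a single explicit counterexample, since strategyproofness is a universal claim and one profitable misreport suffices to refute it. First I would fix the utility model from Section~\ref{section:metrics}: each voter $i$ evaluates an allocation $\mathbf{x}$ by a strictly decreasing function of $\|\mathbf{x} - X_i\|_1$, so that any reduction of the $\ell_1$ distance between the outcome and the truthful ballot $X_i$ strictly raises utility. Recalling that every ballot sums to $c$ (normalized to $c = 1$), the Mean Rule outputs the coordinatewise average $a_p = \tfrac{1}{n}\sum_{i} x_{i,p}$. The structural fact I would exploit is that averaging rewards \emph{exaggeration}: whenever the truthful mean lies strictly between a voter's ideal point and a vertex of the simplex along some coordinate, that voter can pull the mean toward their ideal by reporting a more extreme ballot.

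To make this concrete I would take $n = 2$ voters and $m = 2$ projects. Let voter $1$ have truthful ballot $X_1 = (0.7, 0.3)$ and let voter $2$ submit the fixed ballot $X_2 = (0.1, 0.9)$. Under truthful reporting the Mean Rule yields $\mathbf{a} = (0.4, 0.6)$, at $\ell_1$ distance $|0.4 - 0.7| + |0.6 - 0.3| = 0.6$ from $X_1$. Now let voter $1$ deviate to the extreme feasible ballot $X_1' = (1, 0)$, which still sums to $1$ and is therefore admissible. The new outcome is $\mathbf{a}' = (0.55, 0.45)$, at distance $|0.55 - 0.7| + |0.45 - 0.3| = 0.3$ from the \emph{true} ideal $X_1$. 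Since $0.3 < 0.6$, voter $1$ strictly prefers $\mathbf{a}'$, so $u_1(X_1', X_{-1}) > u_1(X_1, X_{-1})$, directly contradicting the defining inequality $u_1(X_1, X_{-1}) \geq u_1(X_1', X_{-1})$. Hence the Mean Rule is not strategyproof.

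I expect no serious obstacle, as this is a routine refutation by example; the only points requiring care are (i) verifying that the deviating ballot respects the budget constraint $\sum_p x'_{1,p} = c$, which the construction satisfies, and (ii) ensuring voter $1$'s ideal is genuinely interior, so that a manipulation direction toward a vertex is actually available. The one conceptual subtlety is matching the utility model: the argument relies on utility being monotone in $\ell_1$ distance, so I would state this assumption explicitly before computing. As a closing remark I might note that the deviation is not an isolated artifact — any voter whose ideal is not already at the relevant vertex, and whose co-voters pull the mean to the opposite side along that coordinate, admits an analogous profitable boundary-pushing deviation, which is precisely the failure that the moving-phantoms constructions are designed to eliminate while retaining $\ell_1$ social welfare guarantees.
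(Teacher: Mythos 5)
Your proof is correct and takes essentially the same approach as the paper: both exhibit a two-voter, two-project counterexample in which voter $1$ profitably exaggerates to the vertex ballot $(1,0)$, pulling the coordinatewise mean strictly closer (in $\ell_1$) to their true ideal. The paper uses the profile $[0.75,0.25],[0,1]$ while you use $[0.7,0.3],[0.1,0.9]$, but the structure and the manipulation are identical, and your arithmetic checks out.
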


\begin{example}
Consider two voters with votes \([0.75, 0.25]\) and \([0, 1]\). The Mean Rule outputs \([0.375, 0.625]\), giving voter 1 an $\ell_1$ distance of $0.75$. If voter 1 misreports \([1, 0]\), the new allocation \([0.5, 0.5]\) reduces their $\ell_1$ distance to $0.5$, demonstrating manipulability.
\end{example}

\begin{theorem}\label{theorem:median_strategyproofness}
The Normalized Median Rule is not strategyproof.
\end{theorem}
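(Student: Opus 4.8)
The plan is to refute strategyproofness by exhibiting a single instance with a profitable deviation, since a counterexample suffices to negate a universally quantified property. I would first observe that the construction must use at least three projects: when $m=2$, normalization is inert, because two coordinatewise medians of ballots summing to $1$ automatically sum to $1$ (the median commutes with $x \mapsto 1-x$), so the Normalized Median Rule coincides with the one-dimensional median, which is strategyproof for the induced single-peaked $\ell_1$ utilities. Hence I would fix $m=3$ and a small odd electorate, say $n=3$, so that each coordinate's median is simply the middle of three values.

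The structural fact I would exploit is that, for three voters, the coordinatewise median in project $p$ equals voter $1$'s value clipped to the interval $[u_p,w_p]$ spanned by the other two ballots, and the final allocation divides this clipped vector by its sum $S$. When a voter reports honestly and all coordinates land strictly inside these intervals, the clips reproduce the ballot and $S=1$, so the outcome is exactly the ideal and no deviation helps. Distortion—and thus an opening for manipulation—arises precisely when at least one honest coordinate is pinned to a boundary, forcing $S\neq 1$ and pushing the normalized allocation away from the ideal. The key steps are therefore: (i) choose two adversary ballots whose ranges pin one coordinate's median at a low value while leaving the other two coordinates with wide ranges; (ii) give the manipulator a balanced ideal so that honest reporting yields a strictly distorted normalized outcome with positive $\ell_1$ distance; and (iii) exhibit one misreport that concentrates the manipulator's mass so that all three clipped medians become equal, collapsing the normalization back onto the ideal and strictly reducing the distance.

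Concretely, I would take $V_2=(0.1,0.1,0.8)$ and $V_3=(0.8,0.1,0.1)$, which pin the median of project $2$ at $0.1$, and set the manipulator's ideal to $X_1^\ast=(1/3,1/3,1/3)$. Honest reporting then produces the clipped vector $(1/3,0.1,1/3)$, whose normalization lies at strictly positive distance from the ideal, whereas the misreport $X_1'=(0.1,0.8,0.1)$ drives every coordinate median to $0.1$ and hence yields exactly $(1/3,1/3,1/3)$, at distance $0$. Verifying that each of the three coordinate medians equals $0.1$ under $X_1'$, and that the honest outcome is distorted, is a routine check I would include.

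The main obstacle, and the part demanding care in designing the instance, is the robustness of the median: a single voter cannot move a coordinate's median unless pivotal, and any mass withdrawn from one coordinate must be redeposited elsewhere without inadvertently raising another coordinate's median, which would increase $S$ and cancel the gain. The construction above sidesteps this by routing the freed mass into project $2$, where the manipulator already sits at the other voters' pinned value, so that coordinate's median cannot rise. This failure is exactly the phenomenon that the moving-phantoms mechanism of Freeman et al.~\cite{freeman2019truthful} is designed to repair, explaining why the normalized median is listed as non-strategyproof in Table~\ref{table:voting_rules_combined} while the phantom variants are not.
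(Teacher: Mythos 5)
Your proposal is correct and takes essentially the same approach as the paper: an explicit three-voter, three-project counterexample in which a misreport strictly reduces the manipulator's $\ell_1$ distance to the normalized-median outcome (the paper uses ballots $[0.57,0.24,0.19]$, $[0.39,0.48,0.13]$, $[0.44,0.09,0.48]$ with misreport $[0.6,0.2,0.2]$, reducing the distance from $0.1285$ to $0.0962$). Your instance is, if anything, cleaner—the misreport $(0.1,0.8,0.1)$ collapses all coordinate medians to $0.1$ so the normalized outcome hits the manipulator's ideal exactly—and your observations about why $m=2$ cannot work and how normalization creates the manipulation opportunity are sound additions, though not needed for the counterexample itself.
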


\begin{example}
Three voters vote \([0.57,\\ 0.24, 0.19]\), \([0.39, 0.48, 0.13]\), and \([0.44, 0.09, 0.48]\). The Normalized Median Rule outputs \([0.506, 0.276, 0.218]\), giving voter 1 an $\ell_1$ distance of $0.1285$. If they misreport \([0.6, 0.2, 0.2]\), the new outcome \([0.524, 0.238, 0.238]\) lowers their $\ell_1$ distance to $0.0962$.
\end{example}

\begin{theorem}\label{theorem:midpoint_strategyproofness}
The Midpoint Rule is not strategyproof.
\end{theorem}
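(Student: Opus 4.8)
The plan is to follow the pattern of the two preceding results and refute strategyproofness with a single explicit counterexample, since one profile on which a voter profitably deviates suffices. The key structural observation is that the Midpoint Rule returns the ballot $X_{i^*}$ of the voter minimizing $\sum_j \|X_i - X_j\|_1$, i.e. the $\ell_1$-medoid of the profile, so the outcome is always one of the submitted ballots. A voter $i$ can therefore gain in one of two ways: by reporting a ballot that makes a \emph{different} present voter $j$ the medoid when $X_j$ is closer to $X_i$ than the honest winner, or by reporting a strategically central ballot $X_i'$ that makes voter $i$ \emph{herself} the medoid with $\|X_i' - X_i\|_1$ smaller than the distance to the honest winner. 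I would build the instance around the second avenue, which is cleaner with only three voters.

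First I would record the deviation conditions by writing the three medoid scores $s_k = \sum_j \|X_k - X_j\|_1$ and tracking how voter $1$'s report enters each. Cancelling the common terms shows that, for three voters, the honest winner is voter $2$ exactly when the edge $X_1X_3$ is the strict longest of the three pairwise $\ell_1$-distances (the medoid is the vertex opposite the longest edge), and that a deviating ballot $X_1'$ makes voter $1$ the new medoid exactly when $X_1'$ lies strictly within distance $\|X_2 - X_3\|_1$ of both $X_2$ and $X_3$. The deviation is profitable precisely when $\|X_1' - X_1\|_1 < \|X_2 - X_1\|_1$. So the instance must admit a ballot $X_1'$ that is simultaneously close to all three honest ballots.

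The main obstacle, which I would flag explicitly, is a real tension: centrality pushes $X_1'$ toward the $X_2X_3$ region while profitability pulls it toward $X_1$. If the three honest ballots are $\ell_1$-collinear — as they always are with only two projects, where ballots live on a segment, or whenever $X_2$ sits on the geodesic between $X_1$ and $X_3$ — then $\|X_1'-X_1\|_1 + \|X_1'-X_3\|_1 = \|X_1-X_3\|_1$ saturates the triangle inequality and the two requirements collide; this is why the rule is in fact manipulation-resistant in the two-project case. The fix is to use at least three projects and a \emph{non-collinear} (``fat'') $\ell_1$ triangle, so that $\|X_1-X_3\|_1 < \|X_1-X_2\|_1 + \|X_2-X_3\|_1$ strictly and a margin opens up. The symmetric near-vertex profile $X_1 = (0.9,0.1,0)$, $X_2 = (0.2,0.6,0.2)$, $X_3 = (0,0.1,0.9)$ has pairwise distances $1.4$, $1.8$, $1.4$, so voter $2$ is the unique honest medoid and its ballot sits at $\ell_1$-distance $1.4$ from voter $1$.

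Finally I would verify the deviation. Reporting $X_1' = (0.4,0.3,0.3)$ keeps $X_1'$ within distance $1.4$ of each of $X_1$, $X_2$, $X_3$; recomputing the three scores under the new profile shows voter $1$ becomes the unique medoid, so the outcome switches to $(0.4,0.3,0.3)$, at $\ell_1$-distance $1.0 < 1.4$ from voter $1$'s true ballot. Since smaller $\ell_1$-distance means higher utility, truthful reporting is not weakly dominant, and the Midpoint Rule is not strategyproof. The only remaining work is arithmetic — the six distance evaluations and two medoid comparisons — which I would relegate to the displayed inequalities rather than belabor in prose.
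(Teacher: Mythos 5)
Your proof is correct, and it takes a genuinely different --- and, as it turns out, necessary --- route from the paper's. The paper attempts the same medoid-manipulation idea but with only two projects: voters report $[0.9,0.1]$, $[0.4,0.6]$, $[0.2,0.8]$, and voter 1 allegedly gains by misreporting $[0.5,0.5]$. That example does not actually work: under the misreport the manipulator's score is $\|X_1'-X_2\|_1+\|X_1'-X_3\|_1=0.2+0.6=0.8$, while voter 2's score is $0.2+0.4=0.6$, so $X_2=[0.4,0.6]$ remains the medoid, the outcome is unchanged, and voter 1's true distance stays at $1$ rather than dropping to $0.8$ as claimed. Your collinearity observation explains why no such two-project counterexample can exist with an odd number of voters: two-project ballots lie on a line segment, the medoid of an odd number of points on a line is their median, and the median mechanism is strategyproof for $\ell_1$ (single-peaked) preferences. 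Your three-project instance escapes exactly this obstruction, and the arithmetic checks out: the honest pairwise distances $1.4$, $1.8$, $1.4$ make $X_2$ the unique medoid, at distance $1.4$ from $X_1$; after the misreport $X_1'=(0.4,0.3,0.3)$ the scores become $1.8$ for $X_1'$, $2.0$ for $X_2$, and $2.6$ for $X_3$, so the outcome switches to $X_1'$, at true distance $1.0<1.4$ from voter 1's sincere ballot. In short, your argument establishes the theorem where the paper's published example, as stated, does not; the paper's example should be replaced by one like yours (any repair must use at least three projects and a non-collinear profile, exactly as your analysis shows).
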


\begin{example}
Three voters vote \([0.9, 0.1]\), \([0.4, 0.6]\), and \([0.2, 0.8]\). The Midpoint Rule outputs \([0.4, 0.6]\), with voter 1's $\ell_1$ distance at $1$. By misreporting \([0.5, 0.5]\), they reduce their distance to $0.8$.
\end{example}

\begin{theorem}\label{theorem:independent_markets_strategyproofness}
The Independent Markets Rule is strategyproof.
\end{theorem}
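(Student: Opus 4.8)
The plan is to recognize the Independent Markets rule as a concrete instance of the moving-phantoms framework of Freeman et al.~\cite{freeman2019truthful} and to reduce the claim to the generic strategyproofness property enjoyed by every such mechanism. Recall that under the $\ell_1$ utility model adopted in Section~\ref{section:metrics}, voter $i$ prefers outcomes minimizing $\|\mathbf{x} - X_i\|_1$, so strategyproofness amounts to showing that reporting $X_i$ truthfully weakly minimizes this distance over all misreports $X_i'$, for every fixed profile $X_{-i}$. The whole argument hinges on the coordinatewise median structure $a_p = \mathrm{med}(f_0(t^*), \dots, f_n(t^*), x_{1,p}, \dots, x_{n,p})$ together with the single global coupling provided by the choice of the balancing parameter $t^*$.

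First I would verify that the phantom family $f_k(t) = \min\{t(n-k), 1\}$ constitutes a valid moving-phantom system: each $f_k$ is continuous and non-decreasing in $t$, the family is nested for every fixed $t$, and all phantoms start at $0$. Combined with this, for each project $p$ the tentative allocation $a_p(t)$ is a median of continuous non-decreasing functions of $t$ and hence is itself continuous and non-decreasing, equal to $0$ at $t=0$ and, at $t=1$, at least as large as the largest report in coordinate $p$; summing over $p$ and using that each voter's reports sum to the budget shows the coordinate sum exceeds $1$ at $t=1$, so by the intermediate value theorem a parameter $t^*$ with $\sum_p a_p(t^*) = 1$ exists and can be located by binary search as claimed. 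This step is essentially routine but is needed to guarantee the mechanism is well defined before strategyproofness can even be discussed.

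Next comes the structural heart of the argument. For a fixed $t$, the map from the profile to coordinate $p$ is a generalized median voter scheme (a min-of-maxes over the reports and the phantom constants), which is strategyproof in the classical one-dimensional sense: voter $i$'s report enters coordinate $p$ only through $x_{i,p}$, and $a_p$ moves monotonically with $x_{i,p}$, so with $t$ held fixed the voter can never pull $a_p$ strictly closer to $x_{i,p}$ than truthfulness already does. The plan is to then isolate the effect of a misreport on the balancing parameter: a deviation by $i$ changes $t^*$ to some $t'$, and I would split into the cases $t' \ge t^*$ and $t' < t^*$, showing in each that the projects whose allocations move toward $x_{i,p}$ are exactly offset, under the budget constraint, by projects that move away, so that the total $\ell_1$ distance cannot strictly decrease.

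The main obstacle is precisely this coupling through $t^*$: the $m$ coordinate medians are \emph{not} independent, because raising the report in one project forces, via the budget-balancing shift of all phantoms, a compensating movement in the remaining projects. Establishing that this simultaneous movement never yields a net reduction in $\|\mathbf{x} - X_i\|_1$ is the delicate part, and it is exactly where the \emph{moving} design of the phantoms does its work; this is the content of the Freeman et al.\ meta-theorem, which I would either invoke directly after the verification step or reproduce through the case analysis above. Finally, I would remark that the one-dimensional reduction is cleanest in the regime $m > 2$ recorded in Table~\ref{table:voting_rules_combined}, and conclude strategyproofness there.
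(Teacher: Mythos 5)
Your proposal is correct and takes essentially the same approach as the paper: the paper's entire proof is a one-line reduction to the general strategyproofness theorem for moving-phantom mechanisms in Freeman et al.~\cite{freeman2019truthful} (their Theorem 4.8, for $m > 2$), which is exactly the meta-theorem you invoke after checking that the Independent Markets phantoms form a valid moving-phantom system. Your verification of well-definedness of $t^*$ and your sketch of the coupling argument merely expand details that the paper leaves implicit in the citation.
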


\begin{proof}
Follows from \cite{freeman2019truthful}[Theorem 4.8] for \( m > 2 \).
\end{proof}

\begin{theorem}\label{theorem:majoritarian_phantom_strategyproofness}
The Majoritarian Phantom Rule is strategyproof.
\end{theorem}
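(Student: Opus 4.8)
The plan is to prove strategyproofness by exhibiting the Majoritarian Phantom Rule as a concrete instance of the moving phantoms framework of Freeman et al.~\cite{freeman2019truthful} and then invoking their general strategyproofness theorem, exactly as was done for the Independent Markets Rule in Theorem~\ref{theorem:independent_markets_strategyproofness}. First I would recall the abstract definition of a moving phantoms system: a family of functions $f_0, \dots, f_n : [0,1] \to [0,1]$ that are continuous and non-decreasing in the parameter $t$, satisfy $f_k(0) = 0$ and $f_k(1) = 1$ for every $k$, and are nested in the sense that $f_0(t) \geq f_1(t) \geq \cdots \geq f_n(t)$ for all $t$. The outcome on each project $p$ is the coordinatewise median $\mathrm{med}(f_0(t^*), \dots, f_n(t^*), x_{1,p}, \dots, x_{n,p})$, where $t^*$ is the budget-balancing parameter for which the median allocations sum to the normalized budget $1$.

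Second, I would verify that the specific functions of the Majoritarian Phantom Rule satisfy these axioms. By inspection of the piecewise-linear definition, each $f_k$ is continuous (the three pieces agree at the breakpoints $k/(n+1)$ and $(k+1)/(n+1)$), non-decreasing (constant $0$, then linear with slope $n+1$, then constant $1$), and satisfies the boundary conditions. The nesting property holds because phantom $f_k$ completes its entire ascent on the interval $[k/(n+1),(k+1)/(n+1)]$, so for any fixed $t$ the lower-indexed phantoms have already risen at least as far; this staggered, one-at-a-time ascent is precisely what makes the median track the majority. I would then establish well-definedness of $t^*$: the map $t \mapsto \sum_{p} \mathrm{med}(f_0(t), \dots, f_n(t), x_{1,p}, \dots, x_{n,p})$ is continuous and non-decreasing, equalling $0$ at $t=0$ (all phantoms sit at $0$, so every coordinate median is $0$) and $m$ at $t=1$ (all phantoms sit at $1$); since $m > 2 > 1$, the intermediate value theorem supplies a balancing $t^*$, and monotonicity makes the induced allocation well-defined.

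Third, with the axioms verified, I would invoke the general theorem of Freeman et al. that every moving phantoms mechanism is strategyproof under $\ell_1$ utilities, yielding the claim for $m > 2$ as recorded in Table~\ref{table:voting_rules_combined}. For completeness I would recall the mechanism behind their proof: fixing the other ballots $X_{-i}$, each coordinate's outcome is a median of the agent's report, the other reports, and the phantom values, and the one-dimensional median is strategyproof for single-peaked (here $\ell_1$) preferences; the only coupling across coordinates is through the single scalar $t^*$, and the heart of the argument is that any deviation by agent $i$ shifts $t^*$ in a direction that never simultaneously improves the agent across the coordinates where the report is binding.

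I expect the main obstacle to be precisely this last coupling point. Verifying the phantom axioms is routine, but the substantive content of strategyproofness lies in controlling the response of $t^*$ to a misreport: a deviation that pulls one coordinate toward the agent's peak shifts $t^*$ and can push other coordinates away, and one must show the net change in $\ell_1$ distance is never favorable. I would handle this by leaning directly on the invariant established in~\cite{freeman2019truthful}---that the balancing parameter depends monotonically on each reported coordinate in the required sense---rather than re-deriving it, so that the remaining work is confined to checking that the Majoritarian phantom functions meet the theorem's hypotheses.
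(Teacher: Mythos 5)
Your proposal takes exactly the same route as the paper: the paper's entire proof is a one-line citation of Theorem~4.8 of~\cite{freeman2019truthful} for $m > 2$, i.e., the Majoritarian Phantom Rule inherits strategyproofness as an instance of the moving-phantoms framework. Your version is correct and simply makes explicit what the paper leaves implicit---verifying that the piecewise-linear phantom functions satisfy the framework's axioms and that the budget-balancing $t^*$ exists---before invoking the same general theorem.
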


\begin{proof}
Follows from \cite{freeman2019truthful}[Theorem 4.8] for \( m > 2 \).
\end{proof}

\begin{theorem}\label{theorem:quadratic_strategyproofness}
The Quadratic Rule is not SP.
\end{theorem}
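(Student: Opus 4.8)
The plan is to establish the claim exactly as the companion results in this subsection are established: by exhibiting an explicit counterexample in an \texttt{example} environment, following the same style as Theorems~\ref{theorem:mean_strategyproofness}--\ref{theorem:midpoint_strategyproofness}. The guiding intuition is the concavity of the square root. Under the Quadratic Rule the ballots are averaged on the $\sqrt{\cdot}$ scale before normalization, and because $\sqrt{\cdot}$ compresses differences among large entries while stretching apart small ones, the aggregate is pulled toward a flatter (more equal) distribution than the voter's concentrated ideal. A voter with a concentrated true preference can therefore \emph{overstate} that concentration, counteracting the flattening and dragging the outcome back toward their ideal allocation, which strictly lowers their $\ell_1$ distance.

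First I would fix the smallest nontrivial instance: two projects and two voters. Take voter~$1$ to report truthfully $X_1 = [0.75, 0.25]$ and voter~$2$ to report $X_2 = [0, 1]$. Plugging into the rule $a_p = \frac{\sum_i \sqrt{x_{i,p}}}{\sum_i \sum_{p'} \sqrt{x_{i,p'}}}$ gives the truthful outcome $\mathbf{x} \approx [0.366,\, 0.634]$, so that voter~$1$'s $\ell_1$ distance is $\|X_1 - \mathbf{x}\|_1 \approx 0.768$. Next I would let voter~$1$ deviate to the maximally concentrated misreport $X_1' = [1, 0]$; since $\sqrt{1}=1$ and $\sqrt{0}=0$ the square roots now read $[1,0]$ and $[0,1]$, the normalization yields the new outcome $\mathbf{x}' = [0.5,\, 0.5]$, and voter~$1$'s distance drops to $\|X_1 - \mathbf{x}'\|_1 = 0.5 < 0.768$. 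This strict decrease witnesses a profitable manipulation, contradicting strategyproofness.

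The main point to be careful about---rather than a genuine obstacle---is that this does \emph{not} follow automatically from the non-strategyproofness of the Mean Rule (Theorem~\ref{theorem:mean_strategyproofness}), even though the Quadratic Rule is the Mean Rule composed with the bijection $\phi(x)=\sqrt{x}$. The reason is that the voter's utility is measured by $\ell_1$ distance in the \emph{original} allocation space, not in the transformed $\sqrt{\cdot}$ space, so manipulability in one coordinate system need not transfer to the other. Consequently the deviation must be verified directly through the normalization, which is what the computation above does. The only modeling choice requiring attention is selecting a profile where the concavity actually bites: the manipulating voter's true preference must be sufficiently concentrated relative to the opponents' ballots so that overstating concentration helps rather than overshoots, and the chosen profile $[0.75,0.25]$ against $[0,1]$ satisfies this. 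The remaining arithmetic is routine.
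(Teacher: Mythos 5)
Your proposal is correct and takes essentially the same approach as the paper: the paper also proves non-strategyproofness via an explicit two-project counterexample in which a voter profitably overstates concentration (the paper uses three voters $[0.7,0.3]$, $[0.4,0.6]$, $[0.3,0.7]$ with misreport $[0.8,0.2]$, while you reuse the paper's Mean Rule profile $[0.75,0.25]$, $[0,1]$ with misreport $[1,0]$). Your arithmetic checks out ($\ell_1$ distance drops from $\approx 0.768$ to $0.5$), so the example is a valid witness.
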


\begin{example}
Three voters vote \([0.7, 0.3]\), \([0.4, 0.6]\), and \([0.3, 0.7]\). The Quadratic Rule outputs \([0.483, 0.517]\), with voter 1's $\ell_1$ distance at $0.434$. Misreporting \([0.8, 0.2]\) changes the outcome to \([0.502, 0.498]\), lowering their distance to $0.396$.
\end{example}

\begin{theorem}\label{theorem:quorum_median_strategyproofness}
The R3 Quorum Median Rule is not strategyproof.
\end{theorem}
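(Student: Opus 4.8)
The plan is to establish this negative result by exhibiting an explicit counterexample, following the same template used for the other non-strategyproofness results (Theorems~\ref{theorem:mean_strategyproofness}, \ref{theorem:median_strategyproofness}, and~\ref{theorem:midpoint_strategyproofness}). Recall that a voter's utility is decreasing in the $\ell_1$ distance between the final allocation and their truthful ballot, so it suffices to display a profile, a quorum setting $(q_1, q_2)$, and a single voter $i$ together with a misreport $X_i'$ such that the resulting allocation is strictly closer in $\ell_1$ to $X_i$ than the truthful outcome is.

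The first and cleanest route is to reduce to the already-established failure of the Normalized Median Rule. Since the Normalized Median Rule is exactly the R3 rule with the quorum filter disabled, I would take the three-voter, three-project profile from Theorem~\ref{theorem:median_strategyproofness} and choose non-binding quorum parameters, e.g.\ $q_1 = 0$ and $q_2 = 1$. On that profile every project receives a positive allocation from every voter, so the quorum test $b_p \geq q_1$ together with ``at least $q_2$ voters contribute'' passes for all projects, leaving $d_p = b_p$ for every $p$; because the $b_p$ already sum to $1$, the final normalization gives $a_p = b_p$. The R3 rule therefore coincides with the Normalized Median Rule on this instance, and the same misreport that manipulates the latter---moving voter $1$ from $[0.57, 0.24, 0.19]$ to $[0.6, 0.2, 0.2]$ and lowering their $\ell_1$ distance from $0.1285$ to $0.0962$---manipulates the former. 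This immediately yields the claim.

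A second, more quorum-specific counterexample would instead activate the threshold: one arranges a profile in which some project sits just below quorum under truthful reporting, so it is zeroed out and its mass redistributed toward the reporting voter's preferred projects, while a small misreport by that voter flips the quorum decision in a direction that moves the renormalized allocation toward their ideal. This variant highlights that manipulability here is not merely inherited from the median operation but is compounded by the discontinuity the quorum introduces.

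The main obstacle is the verification step rather than the construction. Because the quorum filter makes the rule discontinuous, a single coordinate change in a misreport can simultaneously shift a median and flip a quorum decision, after which the final renormalization rescales every coordinate; one must then recompute the entire pipeline---per-project medians over positive entries, normalization to $b_p$, the quorum cutoff, and the final normalization to $a_p$---for both the truthful and strategic profiles and confirm that the net effect strictly decreases the voter's $\ell_1$ distance. The reduction route above sidesteps this difficulty entirely by choosing parameters under which the quorum never fires, so I would present that as the primary proof and offer the threshold-activating instance only as an illustrative remark.
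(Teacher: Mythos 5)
Your primary route is correct, but it is genuinely different from the paper's own proof. The paper's counterexample makes the quorum \emph{binding}: three voters report $[0.2,0.8]$, $[0.1,0.9]$, $[0.5,0.5]$ with quorum $0.3$ tokens and $2$ voters; truthfully, voter~1's report of $0.2$ drags project~1's median (over positive votes) to $0.2$, below quorum, so the outcome is $[0,1]$ at $\ell_1$ distance $0.4$ from voter~1; by misreporting $[0,1]$, voter~1 removes themselves from project~1's positive-vote median, which rises to $0.3$, passes quorum, and yields $[0.25,0.75]$ at distance $0.1$. So the paper's manipulation is exactly the quorum-exploiting kind you relegate to a remark: a voter profits by zeroing out a project they actually like, flipping the quorum decision. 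Your reduction instead neutralizes the quorum ($q_1=0$, $q_2=1$) so that R3 collapses to the Normalized Median Rule on the chosen profile, and inherits the manipulation of Theorem~\ref{theorem:median_strategyproofness}; the arithmetic checks out, and since the theorem only requires exhibiting some parameter setting and profile where misreporting helps, this is a valid proof. What each buys: your reduction is computation-free and modular, but it is open to the objection that it only re-proves the median rule's failure under parameters that make the quorum vacuous, saying nothing about the rule as actually deployed in Round~3; the paper's example is stronger in substance, showing the quorum mechanism itself creates a new and arguably more severe manipulation channel (flipping a project from zero to a quarter of the budget). You could largely defuse the objection by picking non-degenerate parameters that simply do not bind on your profile (e.g., $q_1=0.1$, $q_2=3$, noting all $b_p\geq 0.218$ and all projects have three contributors before and after the misreport), but presenting a quorum-activating example as the main proof, as the paper does, better supports the paper's critique of the Round~3 design.
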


\begin{example}
Three voters vote \([0.2, 0.8]\), \([0.1, 0.9]\), and \([0.5, 0.5]\) with a quorum of 0.3 tokens and 2 voters per project. The R3 Quorum Median Rule outputs \([0, 1]\), giving voter 1 an $\ell_1$ distance of $0.4$. If they misreport \([0, 1]\), the new allocation \([0.25, 0.75]\) reduces their distance to $0.1$.
\end{example}

\begin{corollary}\label{example:R4_SP}
The R4 Capped Median Rule is not strategyproof since it includes a Quorum \( K3 \) and is median-based.
\end{corollary}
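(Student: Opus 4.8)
The plan is to prove this corollary by \emph{reduction}, exhibiting a parameter regime and a vote profile on which the R4 Capped Median Rule coincides with a rule already shown to be manipulable, so that an existing $\ell_1$-improving deviation transfers verbatim. Two such reductions are available, matching the two clauses in the corollary's hypothesis (``median-based'' and ``includes a Quorum''), and I would lead with the median-based one because it is the cleaner of the two.

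For the median-based route I would first choose the cap \(K_1 \geq B\), which guarantees \(\min(c_p, K_1) = c_p\) for every project since \(c_p \leq B\) by construction; then the excess \(\max(0, c_j - K_1)\) vanishes and the redistribution term is identically zero, so the capping step collapses to \(d_p = c_p\). Setting the elimination threshold \(K_2 = 0\) makes the condition \(d_p < K_2\) impossible (as \(d_p \geq 0\)), so the elimination step collapses to \(b_p = d_p\), and because \(\sum_p c_p = B\) the final normalization is the identity. Hence under these settings the R4 allocation equals the Normalized Median allocation on \emph{every} profile. I would then invoke Theorem~\ref{theorem:median_strategyproofness}, reusing its explicit three-voter, three-project profile and misreport: because R4 returns identical vectors on both the truthful and the deviating inputs, the same deviation proves R4 is not strategyproof. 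Complementarily, for the quorum route I would keep \(K_1\) large but set \(K_2\) equal to the quorum used in Theorem~\ref{theorem:quorum_median_strategyproofness}, so that R4 reproduces the R3 outcomes \([0,1]\) and \([0.25,0.75]\) on that profile, yielding a second, quorum-driven witness.

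The main obstacle I anticipate lies in the quorum route, specifically in reconciling the two median conventions: the R3 example relies on the median taken only over positive entries, \(\text{med}\{x_{i,p} \mid x_{i,p} > 0\}\), whereas the R4 definition uses the plain median over all entries, so a naive transplant of the R3 counterexample can fail once a manipulator zeroes out a coordinate and thereby \emph{lowers} rather than raises that coordinate's median. This is precisely why I would lead with the Normalized Median reduction, which shares R4's all-entries median and therefore needs no reinterpretation; there the only points requiring care are verifying that \(K_1 \geq B\) genuinely leaves every coordinate uncapped on both the truthful and the deviating profiles, and that \(K_2 = 0\) leaves the elimination step inert throughout—both of which hold automatically for the chosen parameters. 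The quorum route, if one wishes to present it as well, either requires adopting the positive-entries reading of R4's median (justifiable since R4 is introduced as a variant of the Quorum Median Rule) or constructing a fresh example exploiting the discontinuity of the all-entries quorum directly.
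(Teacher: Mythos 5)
Your proposal is correct and takes essentially the same route as the paper, which states this result as an unproved corollary inheriting manipulability from the median-based rules: your parameter specialization (\(K_1 \geq B\), \(K_2 = 0\), collapsing R4 to the Normalized Median Rule so that the counterexample of Theorem~\ref{theorem:median_strategyproofness} transfers verbatim) simply makes that inheritance rigorous, and your warning about the positive-entries median of R3 versus the all-entries median of R4 correctly identifies why the quorum-based transplant is the fragile one. One small refinement worth recording: since \(K_2 = 0\) and \(K_1 \geq B\) make the quorum and cap vacuous, it is cleaner to observe that the same counterexample works for any non-binding parameters (e.g., \(K_2 \leq 0.218\) and \(K_1 \geq 0.524\) with \(B = 1\), since every coordinate of both the truthful and manipulated outcomes lies strictly between them), so the conclusion holds for genuinely non-trivial instances of R4 as well.
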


\subsection{Maximal Social Welfare (Minimal L1)}
\label{subsection:maximum_SWF}

An allocation satisfies maximal social welfare (SWF) if no alternative allocation results in a smaller total $\ell_1$ distance between the final allocation and voter ballots.

\begin{theorem}
\label{theorem:mean_SWF}
The Mean Rule does not satisfy maximal SWF.
\end{theorem}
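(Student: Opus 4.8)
The plan is to exhibit an explicit profile on which the Mean Rule's allocation is strictly dominated in total $\ell_1$ distance by a feasible alternative, exploiting the standard fact that the coordinate-wise median---not the mean---minimizes the sum of $\ell_1$ deviations. A single counterexample suffices to refute the universal claim that the Mean Rule attains maximal SWF.

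First I would rewrite the objective. Minimizing $W = \frac{1}{n}\sum_{i=1}^n \|\mathbf{x} - X_i\|_1$ is equivalent to minimizing $\sum_{i=1}^n \sum_{p \in P} |a_p - x_{i,p}|$, which separates across coordinates as $\sum_{p\in P}\big(\sum_{i=1}^n |a_p - x_{i,p}|\big)$. For each fixed project $p$, the inner sum is minimized by taking $a_p$ equal to the median of $\{x_{i,p}\}_{i\in N}$, whereas the Mean Rule sets $a_p$ proportional to the average $\frac{1}{n}\sum_i x_{i,p}$. Since the mean and the median of a list of numbers differ in general, this already signals that the Mean Rule is not an $\ell_1$ minimizer.

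Then I would make this concrete while respecting the budget constraint. I would take $m=2$ projects, where each ballot has the form $[x, 1-x]$, so the total distance collapses to $2\sum_i |a - x_{i,1}|$ and the simplex constraint reduces to choosing a single scalar $a \in [0,1]$; here the coordinate-wise median is automatically feasible. Concretely, with three voters casting $[0,1]$, $[0,1]$, and $[1,0]$, the Mean Rule outputs $[1/3, 2/3]$ with total $\ell_1$ distance $8/3$, while the alternative allocation $[0,1]$ (the coordinate-wise median) achieves total distance $2 < 8/3$. This strictly smaller value witnesses a feasible allocation with higher social welfare, so the Mean Rule fails maximal SWF.

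The main obstacle is that with more than two projects the budget (simplex) constraint couples the coordinates, so the unconstrained coordinate-wise median need not be feasible and cannot be invoked as a competitor directly. I sidestep this entirely by restricting the counterexample to the two-project case, where the feasible set is one-dimensional and the median allocation is guaranteed to lie in the simplex; this keeps the argument self-contained and avoids any projection argument onto the constrained $\ell_1$ minimizer.
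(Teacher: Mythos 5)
Your proposal is correct and follows essentially the same route as the paper: the paper's own proof is also a three-voter, two-project counterexample (ballots $(1,0)$, $(1,0)$, $(0.2,0.8)$, where the Mean Rule's output $(0.733,0.267)$ with total $\ell_1$ distance $2.133$ is beaten by the majority ballot $(1,0)$ with distance $1.6$), exactly mirroring your profile $[0,1]$, $[0,1]$, $[1,0]$ and competitor $[0,1]$. Your added framing via the mean-versus-median $\ell_1$ fact and the feasibility remark for $m=2$ is sound but not needed beyond the single counterexample, which is all the paper uses.
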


\begin{example}
Three voters vote \((1, 0)\), \((1, 0)\), and \((0.2, 0.8)\). The Mean Rule outputs \((0.733, 0.267)\) with a total $\ell_1$ distance of \(2.133\). An alternative allocation \((1,0)\) reduces the total $\ell_1$ distance to \(1.6\), proving that the Mean Rule is suboptimal.
\end{example}

\begin{remark}
The Mean Rule satisfies maximal SWF if measured by minimizing $\ell_2$ distance \cite{bulteau2021aggregation}.
\end{remark}

\begin{theorem}
\label{theorem:median_SWF}
The Normalized Median Rule does not satisfy maximal SWF.
\end{theorem}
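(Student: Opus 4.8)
The plan is to disprove maximal SWF by exhibiting a single profile on which the Normalized Median Rule returns an allocation whose total $\ell_1$ distance to the ballots strictly exceeds that of another feasible allocation. Conceptually, the coordinate-wise median is the \emph{unconstrained} minimizer of $\sum_p \sum_i |a_p - x_{i,p}|$, but once we rescale it to meet the budget $\sum_p a_p = 1$ it is generally no longer the minimizer \emph{over the simplex}. The constrained minimizer is obtained by a water-filling (moving-phantoms-type) adjustment that moves mass between coordinates in order of marginal cost, whereas normalization moves mass \emph{proportionally}; the two agree only when every coordinate stays inside its ``rate-$1$'' regime during the adjustment. Hence I first observe that for $m=2$ the rule is in fact optimal (there the constraint collapses the objective to a single one-dimensional median, and normalization is trivial since the medians already sum to $1$), so any counterexample must use $m \geq 3$ and must force the proportional rescaling to push some coordinate past a ballot value into a steeper-slope regime.

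First I would fix the profile with $n=3$ voters and $m=4$ projects, $X_1=(0.4,0.4,0.1,0.1)$, $X_2=(0.4,0.1,0.4,0.1)$, $X_3=(0.1,0.4,0.4,0.1)$. Then I would compute the coordinate-wise medians, which are $(0.4,0.4,0.4,0.1)$ and sum to $1.3$, so the rule outputs $\tfrac{1}{1.3}(0.4,0.4,0.4,0.1)=(\tfrac{4}{13},\tfrac{4}{13},\tfrac{4}{13},\tfrac{1}{13})$. A direct evaluation gives per-voter distance $\tfrac{27}{65}$ and hence total $\ell_1$ distance $3\cdot\tfrac{27}{65}=\tfrac{81}{65}\approx 1.246$. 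Next I would exhibit the competitor $\mathbf{a}'=(0.1,0.4,0.4,0.1)$, which is feasible (it sums to $1$) and has total $\ell_1$ distance exactly $\tfrac{78}{65}=1.2$, strictly smaller. This suffices to show the rule is not SWF-optimal, since I need only one allocation that beats the rule's output.

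The substantive point, and the main obstacle, is choosing the profile so that the counterexample is genuine rather than a tie: because every ballot lies on the simplex, in the most natural small and symmetric instances the coordinate medians already rescale within the ``cheap'' regime, and there the proportional allocation coincides with the $\ell_1$-optimum. The construction above breaks this by making the low-support project (the last coordinate) sit at a median with no ballot mass strictly below it, so that reducing it to meet the budget is expensive (marginal rate $3$), while each of the three symmetric projects can be reduced cheaply (marginal rate $1$) down to $0.1$. Proportional normalization wastefully shrinks the expensive coordinate, whereas the optimum leaves it fixed and removes all the excess from a single cheap coordinate. I would close by remarking that this is exactly the gap closed by the moving-phantoms construction, consistent with Table~\ref{table:voting_rules_combined}, where the Majoritarian Phantom rule attains maximal SWF while the Normalized Median Rule does not.
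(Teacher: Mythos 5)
Your proof is correct, and at the top level it takes the same (and only sensible) route as the paper: exhibit one profile on which the rule's output is beaten in total $\ell_1$ distance by another feasible allocation. But the counterexample itself is genuinely different, and in one respect stronger. I verified your numbers: the coordinate-wise medians of your profile are $(0.4,0.4,0.4,0.1)$, summing to $1.3$; normalization gives $(4/13,4/13,4/13,1/13)$; each voter is then at distance $27/65$, for a total of $81/65\approx 1.246$; and the feasible competitor $(0.1,0.4,0.4,0.1)$ achieves total distance $1.2 < 81/65$. The paper instead uses a two-project profile $(0.1,0.9)$, $(0.4,0.2)$, $(0.6,0.1)$, whose medians $(0.4,0.2)$ sum to $0.6$, so normalization yields $(2/3,1/3)$ with total distance $\approx 1.834$, beaten by $(0.5,0.5)$ at $1.7$. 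Note that the paper's ballots do not sum to a common constant, which conflicts with the model's own assumption $\sum_{p} x_{i,p}=c$; under that assumption your side remark is right that $m=2$ admits no counterexample (the two medians then automatically sum to the budget, so the unconstrained coordinate-wise median is feasible and hence optimal), meaning the paper's example only works by stepping outside the stated model. Your construction stays on the simplex, correctly anticipates that naive symmetric instances (e.g., the analogous three-project profile) only produce ties, and uses the fourth ``anchored'' coordinate, where all ballots agree at $0.1$, to make proportional rescaling strictly costlier (marginal rate $3$) than removing all the excess from a single cheap coordinate (marginal rate $1$). What the paper's version buys is brevity; what yours buys is consistency with the formal model and a structural explanation --- proportional normalization versus marginal-cost-ordered adjustment --- of exactly why the Normalized Median Rule falls short of the moving-phantoms optimum.
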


\begin{example}
Three voters vote \((0.1, 0.9)\), \((0.4, 0.2)\), and \((0.6, 0.1)\). The Normalized Median Rule outputs \((0.667, 0.333)\) with a total $\ell_1$ distance of \(1.834\). The alternative allocation \((0.5, 0.5)\) reduces the distance to \(1.7\), showing suboptimality.
\end{example}

\begin{remark}
The Median Rule satisfies maximal SWF if it directly minimizes $\ell_1$ distance without normalization \cite{bulteau2021aggregation}.
\end{remark}

\begin{theorem}
\label{theorem:midpoint_SWF}
The Midpoint Rule does not satisfy maximal SWF.
\end{theorem}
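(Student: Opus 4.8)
The plan is to follow the pattern of the preceding negative results and exhibit an explicit counterexample, exploiting the structural fact that the Midpoint Rule is confined to output one of the submitted ballots $X_{i^*}$ (it is effectively a medoid), whereas maximal SWF demands minimizing $\sum_{i} \|\mathbf{a} - X_i\|_1$ over the \emph{entire} feasible simplex $\{\mathbf{a} : \sum_{p} a_p = 1,\ a_p \ge 0\}$. Whenever the unconstrained $\ell_1$-minimizer (the coordinate-wise median) is not budget-feasible and no ballot happens to coincide with the constrained optimum, there is room for a strictly better interior allocation, and making this gap concrete is exactly what the proof needs.

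First I would pick a profile whose symmetry forces every ballot to receive the same Midpoint score, so that the rule's output is pinned down up to a tie and cannot ``luckily'' land on the true optimum. A convenient choice is the cyclic three-voter, three-project profile $X_1 = (0.5, 0.4, 0.1)$, $X_2 = (0.4, 0.1, 0.5)$, $X_3 = (0.1, 0.5, 0.4)$, in which each ballot is a cyclic shift of the value multiset $\{0.5, 0.4, 0.1\}$ and every ballot sums to $1$.

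Second, I would compute the Midpoint score: all three pairwise $\ell_1$ distances equal $0.8$, so each voter's total $\ell_1$ distance to the profile is $1.6$, and the rule outputs some $X_{i^*}$ with total $\ell_1$ distance $1.6$. Third, I would test the centroid $\mathbf{a} = (\tfrac{1}{3}, \tfrac{1}{3}, \tfrac{1}{3})$, which is feasible; by cyclic symmetry it is equidistant from all three ballots, with per-ballot distance $\tfrac{1}{6} + \tfrac{1}{15} + \tfrac{7}{30} = \tfrac{7}{15}$, giving a total $\ell_1$ distance of $\tfrac{21}{15} = 1.4 < 1.6$. Since this feasible allocation strictly lowers the total $\ell_1$ distance, the Midpoint output is not an $\ell_1$-minimizer, establishing the claim.

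The main obstacle—and the reason the cyclic symmetry is doing the real work—is ruling out degenerate profiles in which some submitted ballot already happens to be a coordinate-wise median, so that the medoid coincidentally attains the simplex optimum and the would-be counterexample collapses to an equality (as occurs, for instance, in symmetric even-sized or reducible two-project instances). The cyclic construction avoids this: the coordinate-wise median point is $(0.4, 0.4, 0.4)$, which is infeasible since it sums to $1.2$, and the symmetry guarantees that all ballots are equidistant from the center while none is central, forcing the strict separation between the restricted medoid and the true $\ell_1$-optimal allocation.
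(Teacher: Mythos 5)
Your proof is correct and takes essentially the same approach as the paper's: both construct a symmetric profile in which every ballot is equidistant from the others (the paper uses four voters over four projects, you use a cyclic three-voter profile) and then show that the uniform allocation strictly beats any submitted ballot in total $\ell_1$ distance ($1.4 < 1.6$ in your instance, $4.4 < 4.6$ in the paper's). Your computations check out, and the added discussion of the infeasible coordinate-wise median $(0.4,0.4,0.4)$ is a nice, though not strictly necessary, explanation of why the gap arises.
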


\begin{example}
Four voters vote for four projects: \((0.8,0.1,0.05,0.05)\), \((0.1,0.8,0.05,0.05)\), \((0.05,0.05,0.8,0.1)\), \((0.05,0.05,0.1,0.8)\). The Midpoint Rule selects any voter’s ballot, leading to a total $\ell_1$ distance of \(4.6\). An alternative allocation \((0.25,0.25,0.25,0.25)\) reduces it to \(4.4\), proving suboptimality.
\end{example}

\begin{theorem}
The Independent Markets algorithm does not satisfy maximal SWF, following \cite{freeman2019truthful}[Theorem 6.1].
\end{theorem}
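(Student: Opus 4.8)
The metric \emph{maximal SWF} asks for an allocation minimizing the total $\ell_1$ objective $\sum_{i=1}^n\|\mathbf{x}-X_i\|_1=\sum_{p\in P}\sum_{i=1}^n |a_p-x_{i,p}|$. Since this objective separates across coordinates, its unconstrained minimizer places $a_p$ at the median of $\{x_{i,p}\}_i$, and the budget constraint $\sum_p a_p=1$ then selects the constrained $\ell_1$-minimizer, which is precisely the target of a welfare-optimal rule. The plan is therefore to exhibit a single instance on which the Independent Markets mechanism, whose phantoms slide linearly via $f_k(t)=\min\{t(n-k),1\}$, lands strictly away from this constrained optimum---mirroring the counterexample style used for the Mean, Normalized Median, and Midpoint rules above---and, at the level of worst-case guarantees, to appeal to \cite{freeman2019truthful}[Theorem 6.1], which already certifies an $\ell_1$ performance ratio bounded away from $1$ for this mechanism.

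For the explicit counterexample I would fix a small profile, say $n=3$ voters over $m=3$ projects, and evaluate the mechanism directly. The first step is to write each coordinate allocation as the median $\mathcal{A}^F(a_p)=\mathrm{med}\!\left(f_0(t),\dots,f_n(t),x_{1,p},\dots,x_{n,p}\right)$ of the $2n+1$ values, noting that as $t$ increases from $0$ to $1$ every phantom is nondecreasing, so each $a_p(t)$ and hence $\sum_p a_p(t)$ is nondecreasing and piecewise linear; the budget-balancing value $t^*$ with $\sum_{p} a_p(t^*)=1$ is then uniquely pinned down and can be located between the relevant breakpoints. The second step is to compute the resulting allocation and its total $\ell_1$ cost, and to compare it against an explicit feasible competitor---the coordinate-wise medians renormalized onto the simplex, or equivalently the allocation returned by the Majoritarian Phantom rule, which the table records as welfare-optimal. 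Showing that this competitor attains strictly smaller total $\ell_1$ distance establishes that Independent Markets is suboptimal.

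The main obstacle is the bookkeeping of the median branches: because the phantoms $f_0(t)\ge f_1(t)\ge\cdots\ge f_n(t)$ cross the data values $x_{i,p}$ at different thresholds of $t$, the function $a_p(t)$ is piecewise linear with project-dependent breakpoints, and the identity of the active branch (a phantom value, a clamped $1$, or a data coordinate) can switch before $t^*$ is reached. Choosing the instance so that these switches stay transparent while the gap to the constrained optimum remains strictly positive is the delicate part; a clean way to guarantee a strict gap is to select a profile whose coordinate-wise medians already sum to $1$, so the constrained optimum coincides with those medians, and then verify that linear phantom sliding forces $\mathbf{x}\neq\mathrm{med}$. If a fully hand-checked instance proves awkward, the statement still follows immediately from \cite{freeman2019truthful}[Theorem 6.1], whose worst-case ratio strictly exceeds $1$ and thus directly contradicts maximal SWF.
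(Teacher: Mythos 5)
Your proposal is, in its operative content, the same as the paper's: the paper offers no proof beyond the citation already embedded in the theorem statement, invoking \cite{freeman2019truthful}[Theorem 6.1] to justify both this negative claim for Independent Markets and the companion positive claim for Majoritarian Phantoms, and your final fallback (``the statement still follows immediately from Theorem 6.1'') is exactly that argument.

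The self-contained counterexample you outline would be genuinely stronger than what the paper provides, but as written it is never executed---no instance is computed---and it contains one incorrect step. You assert that the feasible competitor can be taken to be ``the coordinate-wise medians renormalized onto the simplex, or equivalently the allocation returned by the Majoritarian Phantom rule.'' These are not equivalent, and the renormalized medians are not the constrained optimum: renormalizing the coordinate-wise medians is precisely the paper's Normalized Median Rule, which the paper itself proves fails maximal SWF (Theorem~\ref{theorem:median_SWF}), whereas Majoritarian Phantoms satisfies it; the Majoritarian Phantom allocation is a phantom-augmented coordinate-wise median at a budget-balancing parameter, not a rescaling of the raw medians. Using the renormalized medians as a benchmark is still legitimate for proving suboptimality \emph{if} they happen to beat Independent Markets on your instance, but the test can be inconclusive (a suboptimal competitor may fail to beat a suboptimal mechanism), and the claimed equivalence would have led you to misreport the gap to the true optimum. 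The sound device is the one you mention last: take an odd number of voters whose coordinate-wise medians sum exactly to the budget, so that those medians are the \emph{unique} constrained $\ell_1$ minimizer, and then exhibit concretely that Independent Markets' linear phantoms at the balancing $t^*$ return something different. Note also that your gloss of Theorem 6.1 as certifying a worst-case $\ell_1$ ``performance ratio bounded away from 1'' for Independent Markets is a reinterpretation; the paper uses it only to identify which moving-phantom mechanism is (and which is not) welfare-optimal. None of this invalidates your proof, since the citation alone carries it---just as it carries the paper's.
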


\begin{theorem}
The Majoritarian Phantoms algorithm satisfies maximal SWF, following \cite{freeman2019truthful}[Theorem 6.1].
\end{theorem}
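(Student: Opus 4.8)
The plan is to prove this by showing that the allocation produced by the Majoritarian Phantoms mechanism coincides with the exact optimizer of the utilitarian $\ell_1$ program
\[
\min_{a \in \mathbb{R}^m_{\geq 0}} \sum_{i=1}^n \sum_{p=1}^m |a_p - x_{i,p}| \quad \text{s.t.} \quad \sum_{p=1}^m a_p = 1,
\]
which by definition is the allocation of maximal social welfare. First I would observe that the objective separates across coordinates, so the budget constraint is the only coupling. Introducing a single Lagrange multiplier $\lambda$ for $\sum_p a_p = 1$, the per-coordinate subproblem becomes the minimization of $\sum_i |a_p - x_{i,p}| + \lambda a_p$ over $a_p \geq 0$. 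Since the subdifferential of $\sum_i |a_p - x_{i,p}|$ at $a_p$ equals the signed count $\#\{i : x_{i,p} < a_p\} - \#\{i : x_{i,p} > a_p\}$, an integer in $\{-n,\dots,n\}$, the optimizer $a_p^{\lambda}$ is a \emph{tilted median} of the votes $\{x_{i,p}\}$: the order statistic at which this signed count balances $-\lambda$. As $\lambda$ decreases, every $a_p^{\lambda}$ weakly increases, so $\sum_p a_p^{\lambda}$ is monotone in $\lambda$, and there is a unique $\lambda^*$ with $\sum_p a_p^{\lambda^*} = 1$, where complementary slackness handles the coordinates clamped at $a_p = 0$.

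Next I would compute the phantom configuration of the majoritarian schedule at a time $t \in (\tfrac{k}{n+1}, \tfrac{k+1}{n+1})$: exactly $k$ phantoms sit at value $1$, exactly $n-k$ sit at value $0$, and the single phantom $f_k$ moves inside $(0,1)$. Taking the coordinate-wise median of the $2n+1$ values formed by the $n$ real votes and these $n+1$ phantoms then discards the $n-k$ lowest (the zeros) and the $k$ highest (the ones), so that $\mathcal{A}^F(a_p)$ equals the $(k+1)$-st order statistic of the $n+1$ values consisting of the real votes and the moving phantom. As $f_k$ slides from $0$ to $1$, this value interpolates continuously between the $k$-th and $(k+1)$-st order statistics of the real votes. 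The crux of the argument is to verify that this is exactly the tilted median $a_p^{\lambda}$ of the first step, under the correspondence in which increasing $t$ (moving phantoms from $0$ toward $1$) plays the role of decreasing $\lambda$; indeed, selecting the $(k+1)$-st order statistic forces the signed balance $2k+1-n$, matching $-\lambda$ with $\lambda = n - 2k - 1$.

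Finally, the mechanism chooses $t^*$ precisely so that the resulting coordinate-wise medians sum to the budget of $1$, and this budget-matching condition is identical to the constraint $\sum_p a_p^{\lambda} = 1$ that pins down $\lambda^*$. Hence the $t^*$ selected by the mechanism corresponds to $\lambda^*$, the mechanism outputs $a^{\lambda^*}$, and this is the $\ell_1$-optimal allocation, establishing maximal social welfare; this recovers the content of \cite{freeman2019truthful}[Theorem 6.1].

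I expect the main obstacle to be the equivalence asserted in the second paragraph: rigorously matching the discrete phantom counts and the single continuous moving phantom to the Lagrangian tilted median across \emph{all} regimes, including the tie-breaking that arises when $a_p^{\lambda^*}$ lands exactly on a vote value, and the corner case where the nonnegativity constraint binds, so that a coordinate's median is clamped at $0$ by the $n-k$ zero-phantoms. Pinning down monotonicity of $\sum_p a_p^{\lambda}$ in $\lambda$ to guarantee a unique crossing, together with these boundary and degeneracy cases, is where the care is required; the separable structure and the explicit majoritarian schedule make the remaining bookkeeping routine.
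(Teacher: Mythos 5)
The paper itself contains no argument for this theorem: the statement is justified entirely by the pointer to Freeman et al.~\cite{freeman2019truthful}, whose Theorem 6.1 establishes exactly this claim for the mechanism they call the utilitarian moving-phantom mechanism (the ``Majoritarian Phantoms'' schedule here). So you are supplying a proof where the paper supplies only a citation, and your skeleton is the right one, essentially matching the structure of the cited proof: coordinate separability of the $\ell_1$ objective, a Lagrangian/tilted-median characterization of optima, and the phase analysis showing that for $t \in (\tfrac{k}{n+1}, \tfrac{k+1}{n+1})$ exactly $k$ phantoms sit at $1$, $n-k$ sit at $0$, one moves, and hence the output in coordinate $p$ is the clamp of the moving phantom to $[x_{(k),p}, x_{(k+1),p}]$, i.e.\ it interpolates between the $k$-th and $(k+1)$-st order statistics of the real votes. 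All of that is correct.

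The gap is the crux you yourself flag, and it is structural rather than bookkeeping: the plan to exhibit the output as \emph{the} unique tilted median $a_p^{\lambda^*}$, with a bijective correspondence $t \leftrightarrow \lambda$ and a unique crossing $\lambda^*$, cannot work as stated, because the $\ell_1$ optimum is generically a polytope, not a point. Concretely, for most $t$ in phase $k$ the output in many coordinates equals the phantom value itself, strictly between $x_{(k),p}$ and $x_{(k+1),p}$; there the subdifferential of $g_p(a) = \sum_i |a - x_{i,p}|$ is the singleton $\{2k-n\}$, so such a point solves the tilted problem only for $\lambda = n-2k$, and for that $\lambda$ the per-coordinate minimizer is the entire interval $[x_{(k),p}, x_{(k+1),p}]$. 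Meanwhile the coordinates clamped at an order statistic would, on your accounting, demand $\lambda = n-2k-1$ (your signed balance $2k+1-n$). There is no single $\lambda$ making every coordinate of the mechanism's output a \emph{unique} tilted median, so the strict monotone crossing you rely on does not exist. The repair is standard and uses only the pieces you already have: in phase $k$, the value $2k-n$ is a subgradient of every $g_p$ at the output $o_p$ (it is the right endpoint of the subdifferential at $x_{(k),p}$, the left endpoint at $x_{(k+1),p}$, and the unique slope in between; ties only widen these intervals, which also dissolves your zero-clamping corner case). Hence for any feasible $a$ with $\sum_p a_p = 1$, convexity gives $\sum_p g_p(a_p) \geq \sum_p g_p(o_p) + (2k-n)\sum_p (a_p - o_p) = \sum_p g_p(o_p)$. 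This shows the output lies \emph{in} the argmin, which is all the theorem asserts, and it sidesteps uniqueness entirely.
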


\begin{theorem}\label{theorem:quadratic_SWF}
The Quadratic Voting Rule does not satisfy maximal SWF.
\end{theorem}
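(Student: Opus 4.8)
The plan is to prove this negative result by exhibiting a single vote profile on which the Quadratic Voting rule produces an allocation whose total $\ell_1$ distance to the ballots strictly exceeds that of an explicit alternative allocation. This mirrors the counterexample strategy used for the Mean, Normalized Median, and Midpoint rules above (Theorems~\ref{theorem:mean_SWF}, \ref{theorem:median_SWF}, and \ref{theorem:midpoint_SWF}).

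First I would restrict attention to the simplest nontrivial setting: two projects and a handful of voters. The reason is structural: with two projects every feasible allocation has the form $(a, 1-a)$, so the $\ell_1$ distance from a ballot $X_i = (x_{i,1}, x_{i,2})$ collapses to $2\,|a - x_{i,1}|$, and the total $\ell_1$ distance is therefore minimized by taking $a$ to be the median of the first coordinates. This gives a clean benchmark against which to measure the Quadratic Voting output, whose defining square-root normalization $a_p = \sum_i \sqrt{x_{i,p}} \big/ \sum_i \sum_{p'} \sqrt{x_{i,p'}}$ pulls the allocation away from that median.

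Concretely, I would take the three ballots $(0.7, 0.3)$, $(0.4, 0.6)$, $(0.3, 0.7)$ and compute the Quadratic Voting allocation, obtaining approximately $(0.483, 0.517)$ with total $\ell_1$ distance $\approx 0.966$. The coordinate-wise median allocation $(0.4, 0.6)$ achieves total $\ell_1$ distance $0.8$. Since $0.8 < 0.966$, the Quadratic Voting output fails to minimize total $\ell_1$ distance, establishing the claim.

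I expect the only delicate point to be numerical rather than structural: one must track the square-root normalization carefully, since the denominator $\sum_i \sum_{p'} \sqrt{x_{i,p'}}$ couples both coordinates, and then confirm that the resulting gap survives rounding. No conceptual obstacle arises, because the strict concavity of the square-root map generically shifts the aggregate away from the $\ell_1$-optimal median; the chosen profile merely makes this distortion explicit and easy to verify by hand.
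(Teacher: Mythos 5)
Your proposal is correct and follows essentially the same strategy as the paper: both exhibit an explicit three-voter, two-project profile on which the Quadratic Voting output (yours $(0.483,0.517)$, the paper's $(0.364,0.636)$ from the profile $(0.01,0.99),(0.01,0.99),(0.99,0.01)$) has strictly larger total $\ell_1$ distance than an alternative feasible allocation. Your numbers check out ($0.8 < 0.966$), and your added observation that the two-project optimum is the coordinate-wise median is a nice refinement but not a different method.
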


\begin{example}
Three voters vote \((0.01, 0.99)\), \((0.01, 0.99)\), and \((0.99, 0.01)\). Quadratic Voting outputs \((0.364,0.636)\) with a total $\ell_1$ distance of \(2.668\). The alternative allocation \((0.2,0.8)\) reduces the distance to \(2.34\), proving suboptimality.
\end{example}

\begin{corollary}
\label{corollary:quorum_median_SWF}
The R3 Quorum Median Rule does not satisfy maximal SWF, as Median-based rules require normalization.
\end{corollary}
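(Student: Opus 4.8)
The plan is to reduce the R3 Quorum Median Rule to the Normalized Median Rule on an instance where the quorum thresholds are non-binding, and then invoke Theorem~\ref{theorem:median_SWF}, which already establishes that the Normalized Median Rule fails maximal SWF. The key observation is that the quorum step in R3 only acts to zero out projects whose normalized median $b_p$ falls below $q_1$ or whose support falls below $q_2$ contributing voters; if every project clears both thresholds, then $d_p = b_p$ for all $p$, and since the $b_p$ already sum to one, the final normalization $a_p = d_p / \sum_{p'} d_{p'}$ is the identity, giving $a_p = b_p$. In this regime the R3 output is literally the Normalized Median output.

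Concretely, I would reuse the three-voter, two-project profile from the proof of Theorem~\ref{theorem:median_SWF}, namely $(0.1, 0.9)$, $(0.4, 0.2)$, $(0.6, 0.1)$, for which the normalized medians are $(0.667, 0.333)$. I would then fix quorum parameters $q_1 = 0.3$ and $q_2 = 2$, so that both projects receive positive support from all three voters (hence clear $q_2$) and both normalized medians exceed $q_1$. Under these parameters neither quorum constraint binds, R3 returns $(0.667, 0.333)$, and the associated total $\ell_1$ distance of $1.834$ is strictly larger than the $1.7$ attained by the alternative allocation $(0.5, 0.5)$. This exhibits a feasible allocation with strictly smaller total $\ell_1$ distance, so R3 does not satisfy maximal SWF.

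The conceptual reason, as asserted in the corollary, is that R3 is median-based yet requires normalization: by the remark following Theorem~\ref{theorem:median_SWF}, the unnormalized coordinatewise median is the genuine $\ell_1$-minimizer, and it is precisely the rescaling to meet the budget constraint that pushes the allocation away from this optimum. Since the additional quorum machinery in R3 can only displace the outcome further from---never closer to---the $\ell_1$-minimizer, the failure is inherited.

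The main obstacle is bookkeeping rather than genuine difficulty: I must confirm that the chosen parameters $q_1 = 0.3$ and $q_2 = 2$ genuinely leave both quorum conditions slack on the selected instance, so that the reduction to the Normalized Median Rule is exact and the counterexample transfers verbatim. Once this verification is in place, the corollary follows immediately from Theorem~\ref{theorem:median_SWF}.
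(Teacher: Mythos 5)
Your proposal is correct and takes essentially the same route as the paper: the paper states this corollary without a separate proof, intending it to inherit the failure from Theorem~\ref{theorem:median_SWF}, and your argument makes that inheritance rigorous by choosing slack quorum parameters ($q_1 = 0.3$, $q_2 = 2$) under which R3 coincides exactly with the Normalized Median Rule on the very counterexample instance $(0.1,0.9), (0.4,0.2), (0.6,0.1)$, where the output $(0.667, 0.333)$ has total $\ell_1$ cost $1.834$ versus $1.7$ for $(0.5,0.5)$. The only caveat is that your side remark that the quorum machinery can ``only displace the outcome further from'' the $\ell_1$-minimizer is asserted without proof, but it is not needed, since the concrete instance with non-binding quorums already completes the argument.
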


\begin{corollary}
\label{example:R4_SWF}
The R4 Capped Median Rule does not satisfy maximal SWF, as it relies on Median-based normalization.
\end{corollary}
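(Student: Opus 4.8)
The plan is to derive the claim from the already-established failure of the Normalized Median Rule (Theorem~\ref{theorem:median_SWF}) by exhibiting a parameter regime in which the R4 Capped Median Rule collapses onto the Normalized Median Rule, so that any $\ell_1$-suboptimal profile for the latter is inherited verbatim by the former. Recall that a rule satisfies maximal SWF only if, on \emph{every} profile, its output minimizes $\sum_{i} \|\mathbf{x} - X_i\|_1$ over all budget-feasible allocations; hence producing a single profile on which R4 is strictly dominated suffices to refute the property. The constrained $\ell_1$-minimizer over the simplex is precisely what the Majoritarian Phantoms mechanism computes, and the defect of every median-based rule here is that coordinate-wise medians need not respect the budget, so the closing normalization drags the outcome off that optimum.

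First I would specialize the cap and quorum thresholds to their inactive values, taking $K_1 \ge B$ and $K_2 = 0$. Since $\sum_{p'} c_{p'} = B$ and all $c_{p'} \ge 0$, every $c_j \le B \le K_1$, so $\min(c_p, K_1) = c_p$ and $\max(0, c_j - K_1) = 0$; the excess-redistribution term vanishes and $d_p = c_p$. With $K_2 = 0$ no project is eliminated, giving $b_p = d_p = c_p$, and because $\sum_{p'} c_{p'} = B$ the final normalization returns
\[
a_p = \frac{c_p}{\sum_{p'} c_{p'}} = \frac{\operatorname{median}\{x_{i,p}\}}{\sum_{p'} \operatorname{median}\{x_{i,p'}\}},
\]
which is exactly the Normalized Median allocation (with $B$ normalized to $1$).

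Next I would invoke Theorem~\ref{theorem:median_SWF}: on the three-voter instance $(0.1,0.9),(0.4,0.2),(0.6,0.1)$ the Normalized Median output $(0.667,0.333)$ has total $\ell_1$ distance $1.834$, whereas the alternative $(0.5,0.5)$ attains $1.7$. Under the inactive thresholds above, R4 returns the identical allocation on this profile, so the same alternative strictly reduces the total $\ell_1$ distance and R4 fails maximal SWF. To reinforce the conclusion I would observe that switching the quorum and cap back on can only zero out further coordinates and shift mass away from the constrained optimum, so activating $K_1$ or $K_2$ never repairs the deficiency; this mirrors the argument behind Corollary~\ref{corollary:quorum_median_SWF} for the R3 rule.

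The main obstacle is bookkeeping rather than conceptual: one must verify that each stage of the multi-step R4 pipeline genuinely degenerates under the chosen thresholds—that capping leaves $c_p$ untouched, that the excess-redistribution numerator is identically zero, that the quorum-elimination branch is never taken, and that the closing normalization reproduces the normalized median exactly. Confirming these simplifications line by line is the only delicate part; once they are in place, the reduction to Theorem~\ref{theorem:median_SWF} is immediate.
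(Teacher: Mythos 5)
Your proposal is correct and follows essentially the same route as the paper, which justifies the corollary by inheritance from the Normalized Median Rule's failure (Theorem~\ref{theorem:median_SWF}) on the grounds that R4 is a median-based rule with normalization. You simply make explicit what the paper leaves implicit: the degenerate-threshold reduction ($K_1 \geq B$, $K_2 = 0$) under which the R4 pipeline collapses to the Normalized Median Rule, so the same three-voter counterexample applies verbatim.
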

\clearpage
\section{Missing Figures}
\begin{figure}[h]
    \centering
    \includegraphics[width=1.1\textwidth]{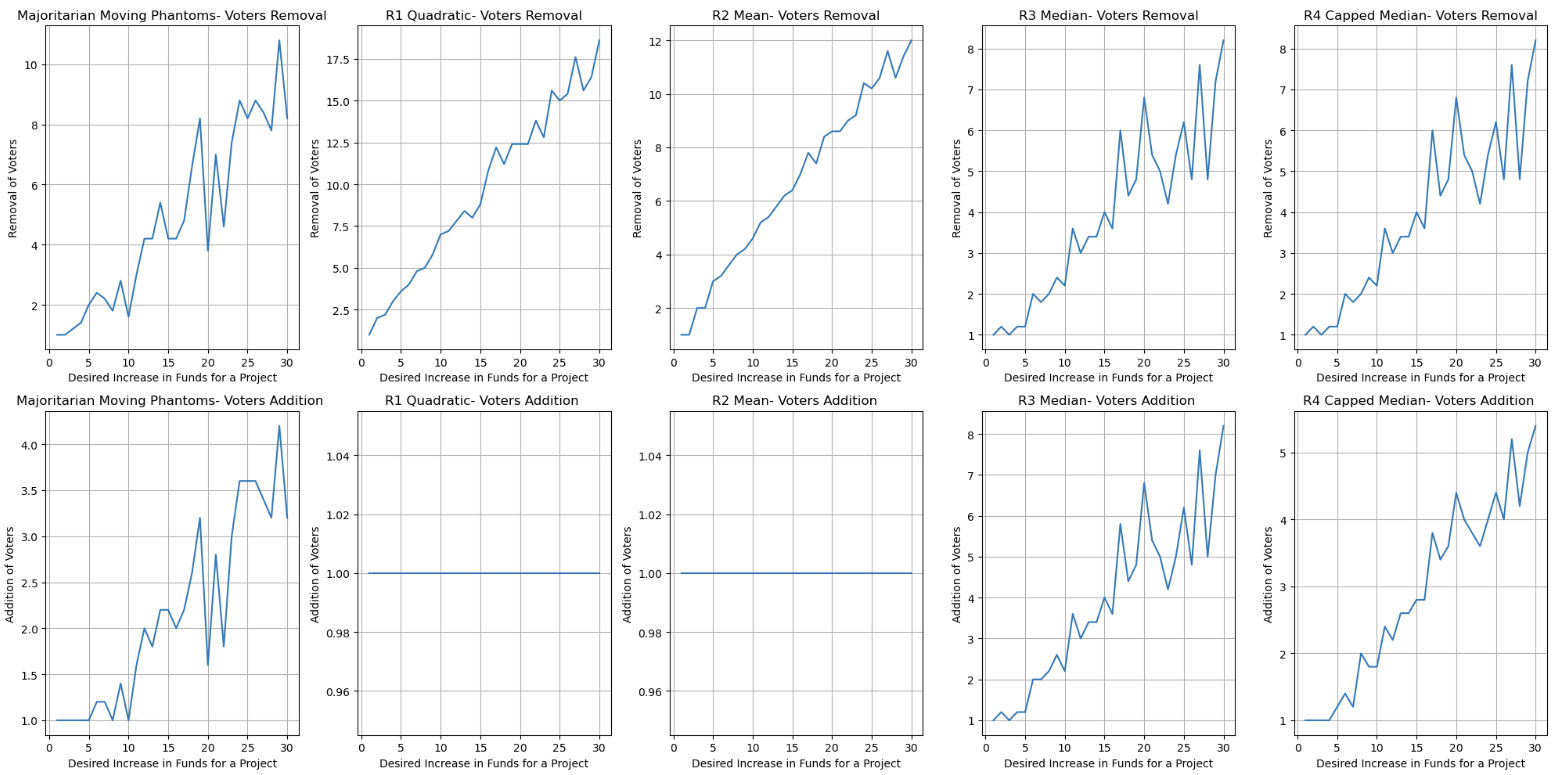}
    \caption{Cost of Adding and deleting voters as a function of the desired funding increase.}
    \label{fig:control}
\end{figure}
\begin{figure}[t]
    \centering
    \includegraphics[width=0.5\textwidth]{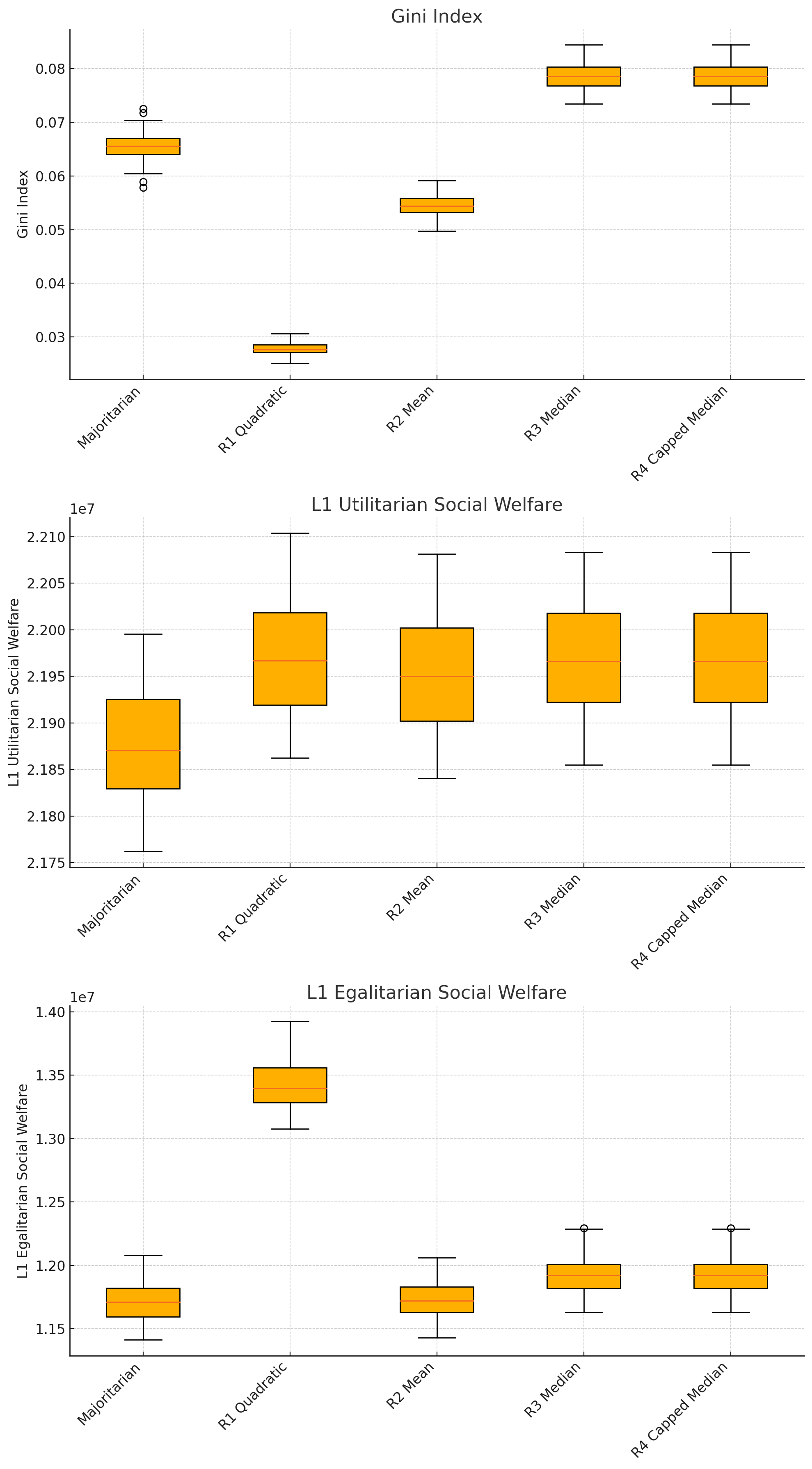}
    \caption{Comparison of Gini Index, L1 Utilitarian Social Welfare, and L1 Egalitarian Social Welfare across voting rules.}
    \label{fig:gini_util_egal}
\end{figure}

\begin{figure}[t]
    \centering
    \includegraphics[width=0.5\textwidth]{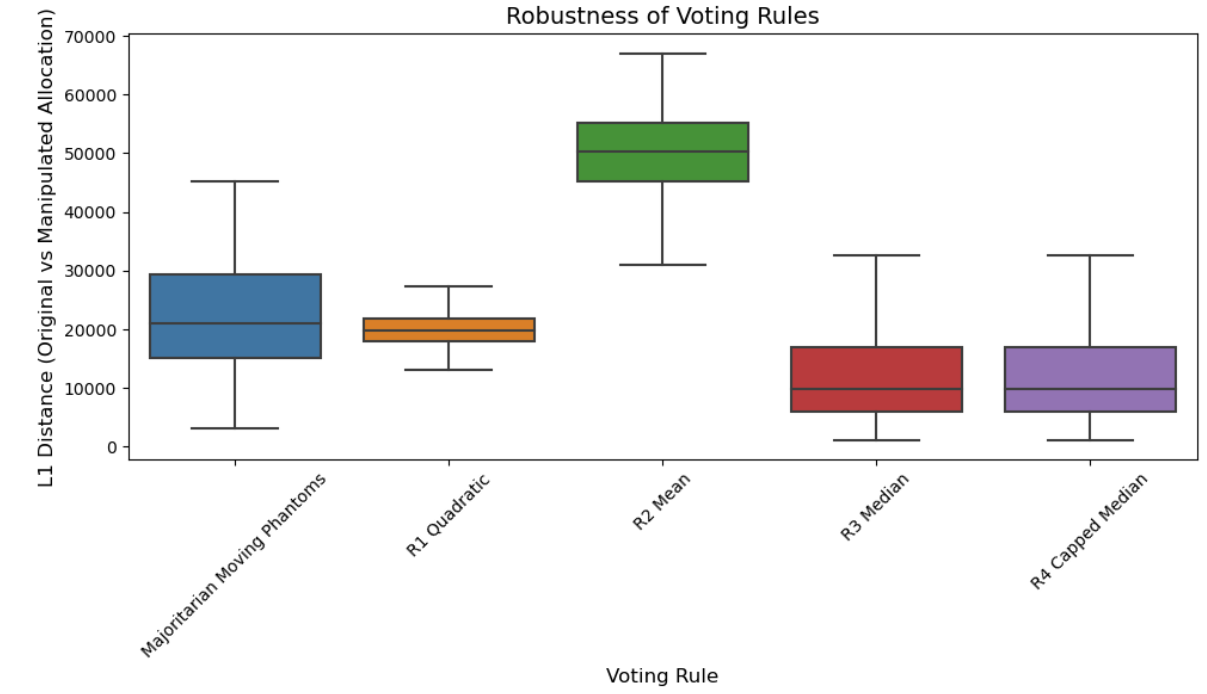}
    \caption{Robustness of voting rules under manipulation, represented by L1 distance between original and manipulated allocations.}
    \label{fig:robustness}
\end{figure}

\begin{figure}[t]
    \centering
    \includegraphics[width=0.5\textwidth]{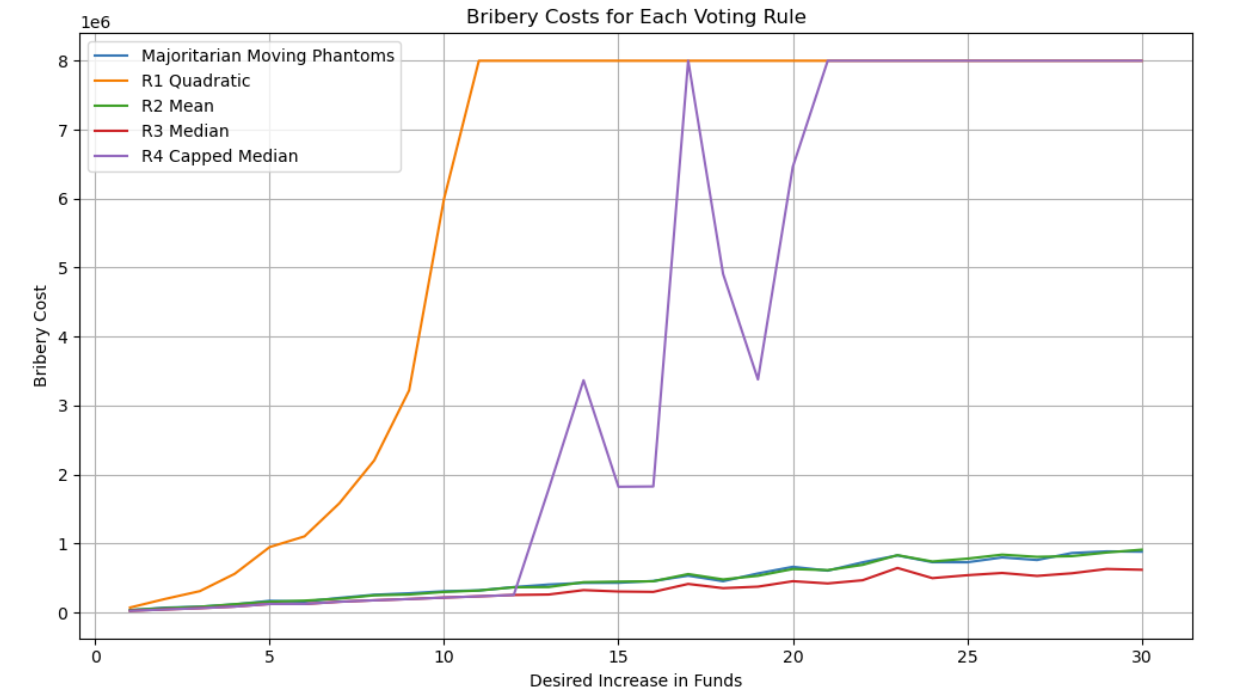}
    \caption{Bribery costs for different voting rules as a function of desired increase in funds for a project.}
    \label{fig:bribery_cost}
\end{figure}

\begin{figure}[t]
    \centering
    \includegraphics[width=0.5\textwidth]{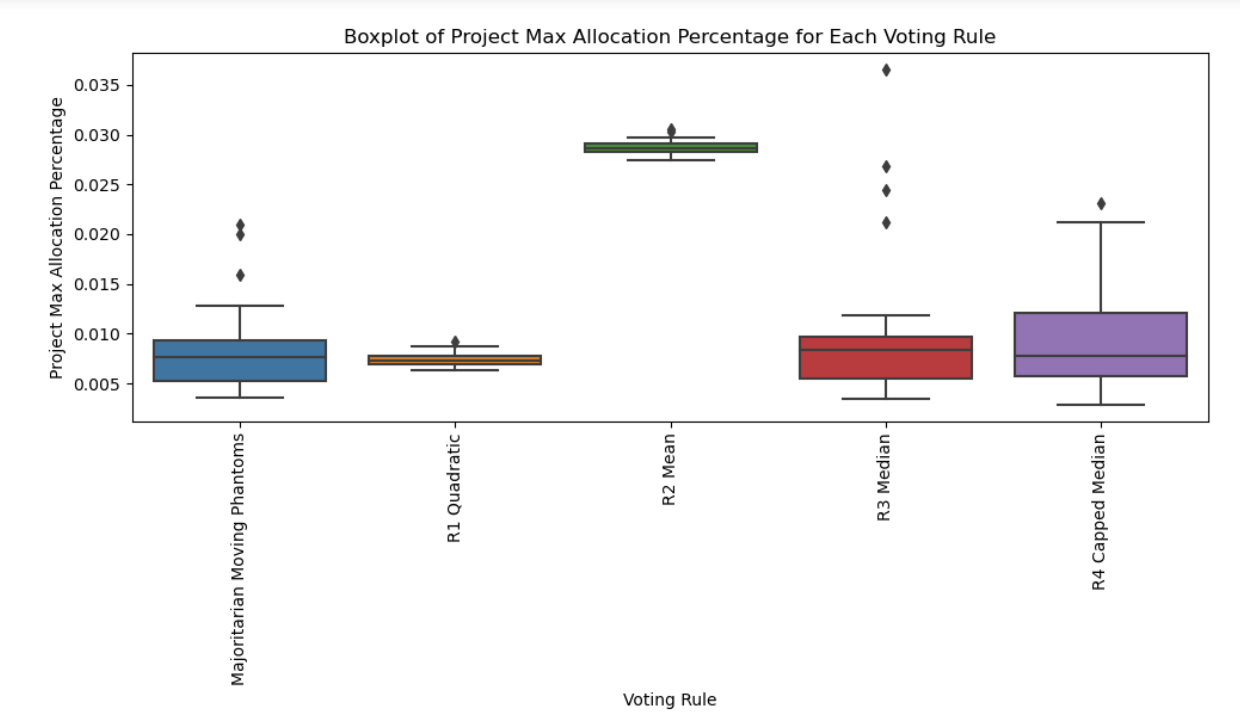}
    \caption{The maximal new project allocation caused by one voter skewing the outcome divided by the total number of tokens to be funded for each voting rule.}
    \label{fig:vev_voters}
\end{figure}

\begin{figure}[t]
    \centering
    \includegraphics[width=0.5\textwidth]{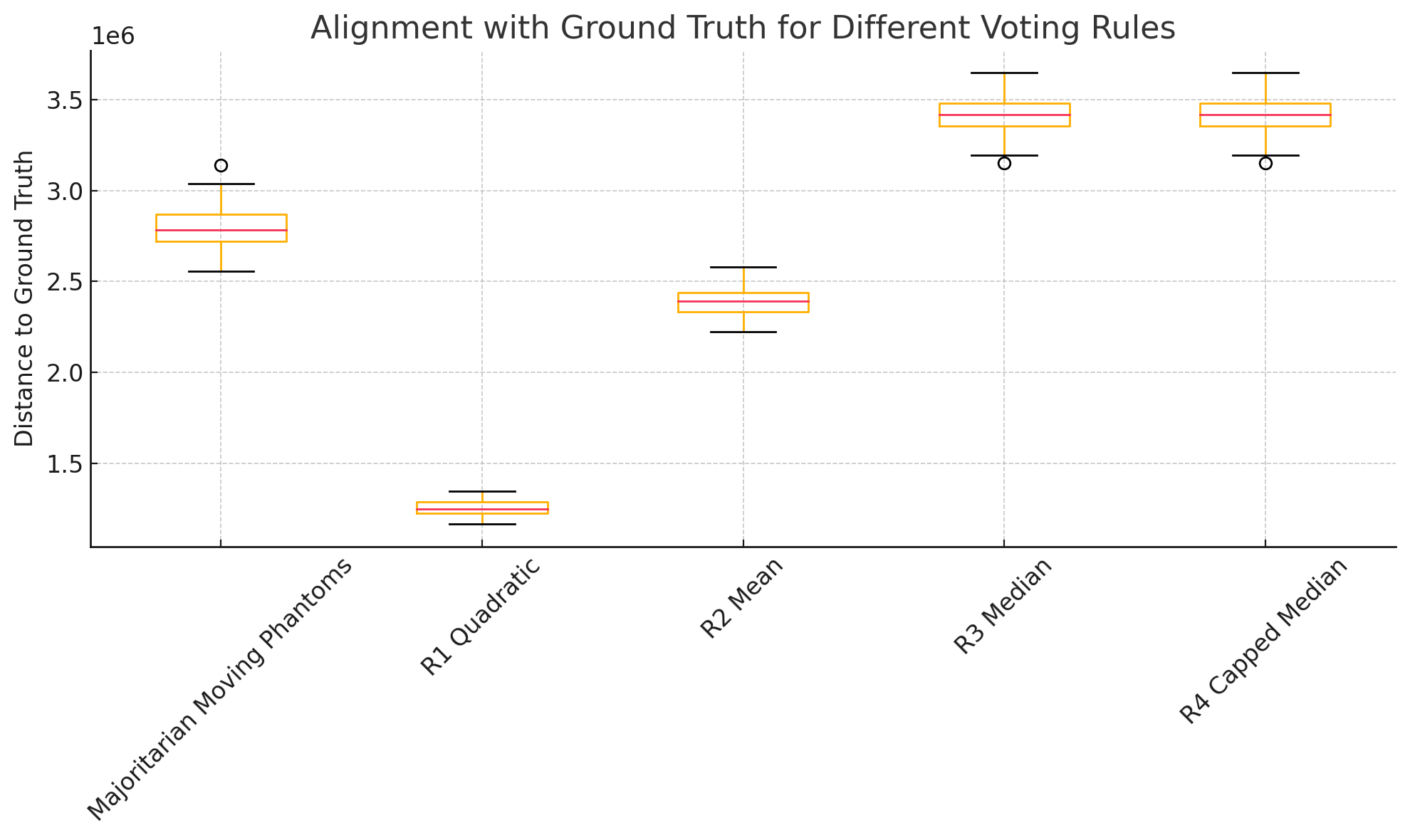}
    \caption{Alignment with ground truth for different voting rules, measured by distance to ground truth.}
    \label{fig:max_vev}
\end{figure}
\begin{figure}[t]
    \centering
    \includegraphics[width=0.5\textwidth]{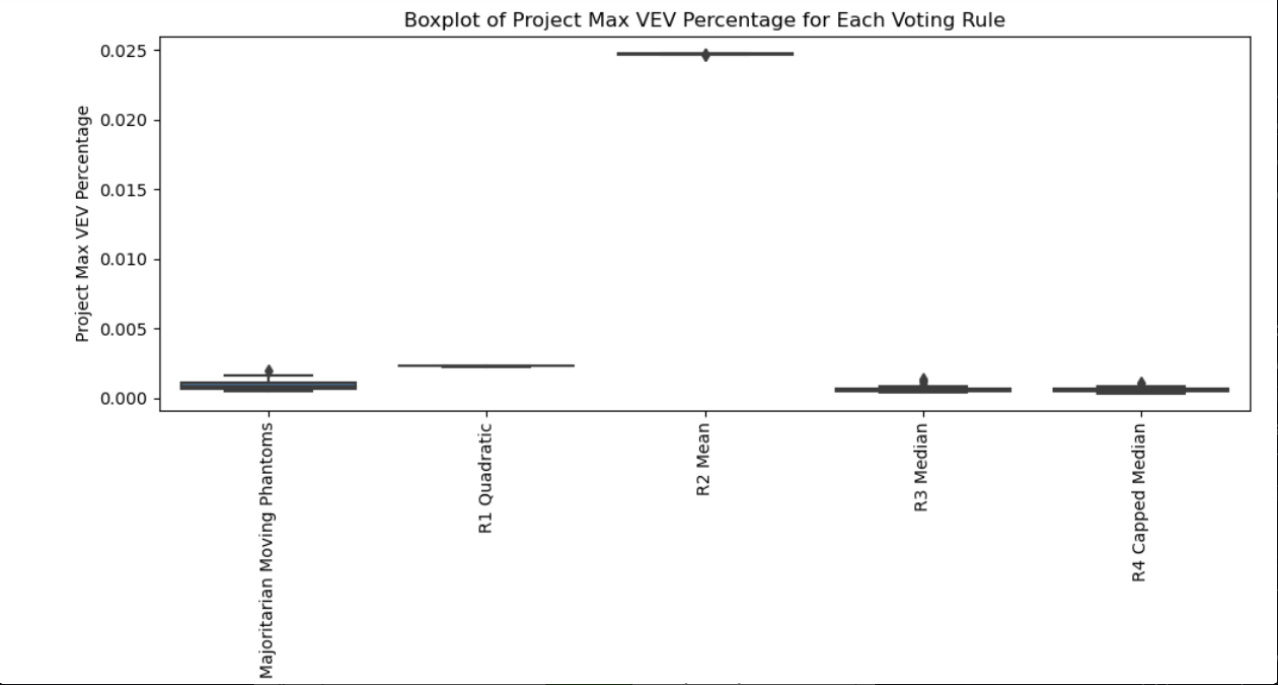}
    \caption{The maximal difference in token allocation caused by one voter skewing the outcome, divided by the total number of tokens to be funded for each voting rule.}
    \label{fig:vev2}
\end{figure}
\begin{figure}[t]
    \centering
    \includegraphics[width=0.4\textwidth]{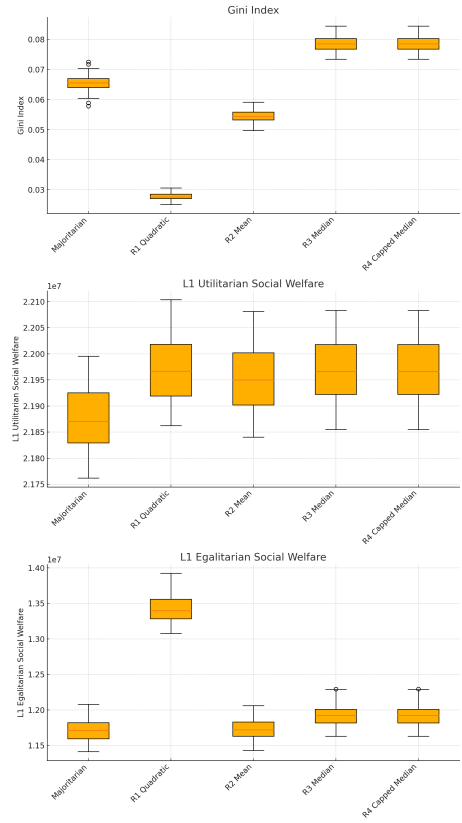}
    \caption{Gini Index, Utilitarian Social Welfare ($\ell_1$ distance), and Egalitarian Social Welfare (maximum $\ell_1$ distance) for each voting rule on round 4 data.}
    \label{fig:Gini_Util_Egal}
\end{figure}
\end{document}